\def\NAT@def@citea{\def\@citea{\NAT@separator}}% Suppress spaces between citations using natbib.sty
\theoremstyle{plain}% Theorem-like structures provided by amsthm.sty
\newtheorem{theo}{Theorem}[section]
\newtheorem{lemma}[theo]{Lemma}
\newtheorem{prop}[theo]{Proposition}
\theoremstyle{definition}
\theoremstyle{remark}
\begin{document}

\title{Variable selection in sparse GLARMA models}

\author{M. Gomtsyan, C. L\'evy-Leduc, S. Ouadah, L. Sansonnet} 
\address{Université Paris-Saclay, AgroParisTech, INRAE, UMR MIA Paris-Saclay, 91120 Palaiseau, France}
\author{T. Blein}
\address{Institute of Plant Sciences Paris-Saclay, CNRS, INRAE, Universit\'e Evry, Universit\'e Paris-Saclay, Universit\'e de Paris, 91405 Orsay, France.}

\begin{abstract}
 In this paper, we propose a novel and efficient two-stage variable selection approach for sparse GLARMA models, which are pervasive for
  modeling discrete-valued time series. Our approach consists in iteratively combining the estimation of the autoregressive moving average (ARMA) coefficients of
  GLARMA models with regularized methods designed for performing variable selection
  in regression coefficients of Generalized Linear Models (GLM). We first establish the consistency of the ARMA part coefficient estimators in a specific case.
Then, we explain how to efficiently implement our approach.  Finally, we assess the performance of
our methodology using synthetic data, compare it with alternative methods  \textcolor{black}{and illustrate it on an example of real-world application.}
Our approach, \textcolor{black}{which is implemented in the \texttt{GlarmaVarSel} R package and available on the CRAN}, is very attractive since it benefits from a low computational load and is able to outperform the other methods in terms of coefficient estimation,
particularly in recovering the non null regression coefficients.  

\end{abstract}

\keywords{GLARMA models; sparse; discrete-valued time series}

\maketitle

\section{Introduction}

\textcolor{black}{Discrete-valued time series arise in a wide variety of fields ranging from finance to molecular biology and public health.
For instance, we can mention the number of transactions in stocks in the finance field, see \cite{brannas:quoreshi:2010}.
In the field of molecular biology, modeling RNA-Seq kinetics data is a challenging issue, see \cite{Thorne:2018} and in the public health
context, there is an interest in the modeling of daily asthma presentations in a given hospital, see \cite{SOUZA:2014}.}

The literature on modeling discrete-valued time series is becoming increasingly abundant, see \cite{handbook:2016} for a review. 
Different classes of models have been proposed such as the Integer Autoregressive Moving Average 
(INARMA) models and the generalized state space models. 

The Integer Autoregressive process of order 1 (INAR(1)) was first introduced by \cite{McKenzie:1985} and the Integer-valued Moving Average 
(INMA) process is described in \cite{Al-Osh:1988}. One of the attractive features of INARMA processes is that their autocorrelation structure is similar 
to the one of autoregressive moving average (ARMA) models. However, it has to be noticed that statistical inference in these models is generally complicated and requires to develop 
intensive computational approaches such as the 
efficient MCMC algorithm devised by \cite{Neal:rao:2007} for INARMA processes of known AR and MA orders. This strategy was extended 
to unknown AR and MA orders by \cite{enciso:nea:rao:2009}.
 For further references on INARMA models, we refer the reader to \cite{weiss:dts}.

The other important class of models for discrete-valued time series is the one of generalized state space models which can have a parameter-driven and an observation-driven version, 
see \cite{davis:1999} for a review. 
The main difference between \textcolor{black}{these two versions} is that in parameter-driven models, the state vector evolves independently of the past history
of the observations whereas the state vector depends on the past observations in observation-driven models. More precisely, in parameter-driven models, let $(\nu_t)$ be a stationary process,
the observations $Y_t$ are thus modeled as follows: conditionally on $(\nu_t)$, $Y_t$ has a Poisson distribution of parameter $\exp(\beta_0^\star+\sum_{i=1}^p\beta_i^\star x_{t,i}+\nu_t)$,
where the $x_{t,i}$'s are the $p$ regressor variables (or covariates). Estimating the parameters in such models has a very high computational load, see \cite{jung:2001}.

Observation-driven models initially proposed by \cite{cox:1981} and further studied in \cite{zeger:qaqish:1988} do not have this computational drawback and are thus considered as a promising alternative to parameter-driven models.  Different kinds of observation-driven models can be found in the literature: 
the Generalized Linear Autoregressive Moving Average (GLARMA) models introduced by \cite{davis:1999}
 and further studied in \cite{davis:dunsmuir:streett:2003}, \cite{davis:dunsmuir:street:2005}, \cite{dunsmuir:2015} and the (log-)linear Poisson autoregressive models studied in 
\cite{fokianos:2009}, \cite{fokianos:2011} and \cite{fokianos:2012}. Note that GLARMA models cannot be seen as a particular case of the log-linear Poisson autoregressive models.

\textcolor{black}{In the following, we shall consider the GLARMA model introduced in \cite{davis:dunsmuir:street:2005} with additional covariates. More precisely,} 
given the past history $\mathcal{F}_{t-1}=\sigma(Y_s,s\leq t-1)$, \textcolor{black}{we assume that}
\begin{equation}\label{eq:Yt}
Y_t|\mathcal{F}_{t-1}\sim\mathcal{P}\left(\mu_t^\star\right),
\end{equation}
where $\mathcal{P}(\mu)$ denotes the Poisson distribution with mean $\mu$. In (\ref{eq:Yt}),
\begin{equation}\label{eq:mut_Wt}
\mu_t^\star=\exp(W_t^\star) \textrm{ with } W_t^\star=\beta_0^\star+\sum_{i=1}^p\beta_i^\star x_{t,i}+Z_t^\star,
\end{equation}
where the $x_{t,i}$'s are the $p$ regressor variables ($p\geq 1$),
\begin{equation}\label{eq:Zt}
Z_t^\star=\sum_{j=1}^q \gamma_j^\star E_{t-j}^\star \textrm{ with } E_t^\star=\frac{Y_t-\mu_t^\star}{\mu_t^\star}=Y_t\exp(-W_t^\star)-1,
\end{equation}
with $1\leq q\leq\infty$
and $E_t^\star=0$ for all $t\leq 0$. \textcolor{black}{Here, the $E_t^\star$'s correspond to the working residuals in classical Generalized Linear Models (GLM), which means that we limit ourselves to the case 
$\lambda=1$ in the more general definition:
$
 E_t^\star=(Y_t-\mu_t^\star){\mu_t^{\star}}^{-\lambda}
$. Note that in the case where $q=\infty$, $(Z_t^\star)$ satisfies the ARMA-like recursions given in Equation (4) of \cite{davis:dunsmuir:street:2005}.
The model defined by (\ref{eq:Yt}), (\ref{eq:mut_Wt}) and (\ref{eq:Zt}) is thus referred as a GLARMA model.}

The main goal of this paper is to introduce a novel variable selection approach in the deterministic part \textcolor{black}{(covariates)} of
sparse GLARMA models that is in (\ref{eq:Yt}), (\ref{eq:mut_Wt}) and (\ref{eq:Zt})
where the vector of the $\beta_i^\star$'s is sparse. Sparsity means that many $\beta_i^\star$’s are null
  and thus just a few of regressor variables are explanatory.
\textcolor{black}{The novel approach that we propose consists in} combining a procedure 
for estimating the ARMA part coefficients to take into account the temporal dependence that may exist in the data with regularized methods designed for GLM as those proposed by \cite{friedman:hastie:tibshirani:2010}
  and \cite{hastie2019statistical}.
\textcolor{black}{Our procedure can be useful for modeling RNA-Seq time series data, sometimes referred to as RNA-Seq kinetics data in molecular biology. It allows monitoring the entire gene expression inside a biological sample along a time course.}
  % Our procedure can be very useful for modeling RNA-Seq kinetics data appearing in molecular biology
  % \textcolor{red}{meaning that we study the temporal evolution of RNA-Seq data}\textcolor{blue}{Thomas, peux-tu voir s'il y a une meilleure facon de dire cela ?}.
  In this application, as explained by   
 \cite{wu2017diversity}, non-coding genes are emerging as potential key regulators of the expression of coding genes,
  namely the part of the genes coding for proteins. In this framework, only a few among a lot of non-coding genes are likely to be involved for explaining
  the expression of the coding genes. Hence,
  designing a variable selection approach for sparse GLARMA models will allow us to identify the relevant non-coding genes.
  Note that existing variable selection approaches for discrete observations such as \cite{friedman:hastie:tibshirani:2010}
  indeed do not take into account the temporal dependence that may exist in this kind of data.

The paper is organized as follows. Firstly, in Section \ref{sec:estim}, we describe the classical estimation procedure in GLARMA models
and in Section \ref{sec:consistency}, establish a consistency result in a specific case.
Secondly, we propose a novel two-stage estimation procedure which is described in Section \ref{sec:our_estim}. It consists
in first estimating the ARMA coefficients and then in estimating the regression coefficients 
by using a regularized approach.
\textcolor{black}{The practical implementation of our approach is given in Section \ref{sec:practical}.}  The R language implementation of the method is provided in the \texttt{GlarmaVarSel} package which is available on the CRAN.
Thirdly, in Section \ref{sec:num}, we provide some numerical experiments to illustrate our method and to compare its performance to alternative approaches
on finite sample size data. More precisely, we compared our approach to two different methods: the regularized methods designed for GLM of
  \cite{friedman:hastie:tibshirani:2010} and the standard estimation procedure in non necessarily sparse GLARMA models implemented
  in the R \texttt{glarma} package.
  % \textcolor{red}{Additionally, in Section \ref{sec:appli} we illustrate our method on RNA-Seq kinetics data
  %   meaning that we study the temporal evolution of RNA-Seq data.}
\textcolor{black}{Additionally, in Section 4, we illustrate our method on RNA-Seq time series that follows the temporal evolution of gene expression.}
Finally, we give the proofs of the theoretical results in Section \ref{sec:proofs}.

\section{Statistical inference}\label{sec:stat_inf}

\subsection{Classical estimation procedure in GLARMA models}\label{sec:estim}

% In order to estimate the parameter $\boldsymbol{\delta}^\star=(\boldsymbol{\beta}^\star,\boldsymbol{\gamma}^\star)$
% where $\boldsymbol{\beta}^\star=(\beta_0^\star,\beta_1^\star,\dots,\beta_p^\star)$ is the vector of regressor coefficients defined in (\ref{eq:mut_Wt})
% and $\boldsymbol{\gamma}^\star=(\gamma_1^\star,\dots,\gamma_q^\star)$ is the vector of the MA part coefficients defined in (\ref{eq:Zt}), we maximize the following criterion, 
% based on the conditional log-likelihood, with respect to $\boldsymbol{\delta}=(\boldsymbol{\beta},\boldsymbol{\gamma})$:

%\textcolor{blue}{Je pense qu'on doit passer par les transposées (voir section 2.2) ... Et est-ce que l'on note la transposée $'$ ? \\

As explained by \cite{davis:dunsmuir:street:2005}, the parameter
  $\boldsymbol{\delta}^\star=(\boldsymbol{\beta}^{\star\prime},\boldsymbol{\gamma}^{\star\prime})$, ${u}^\prime$ denoting the transpose of $u$,
  can be estimated
by using the following criterion
based on the conditional log-likelihood,
where $\boldsymbol{\beta}^\star=(\beta_0^\star,\beta_1^\star,\dots,\beta_p^\star)'$ is the vector of regressor coefficients defined in (\ref{eq:mut_Wt})
and $\boldsymbol{\gamma}^\star=(\gamma_1^\star,\dots,\gamma_q^\star)'$ is the vector of the ARMA part coefficients defined in (\ref{eq:Zt}).
This criterion consists in maximizing with respect to $\boldsymbol{\delta}=(\boldsymbol{\beta}',\boldsymbol{\gamma}')$, with
$\boldsymbol{\beta}=(\beta_0,\beta_1,\dots,\beta_p)'$ and $\boldsymbol{\gamma}=(\gamma_1,\dots,\gamma_q)'$:
\begin{equation}\label{eq:likelihood}
L(\boldsymbol{\delta})=\sum_{t=1}^n\left(Y_t W_t(\boldsymbol{\delta})-\exp(W_t(\boldsymbol{\delta}))\right).
\end{equation}
In (\ref{eq:likelihood}),
% \begin{equation}\label{eq:Wt}
% W_t(\boldsymbol{\delta})=\boldsymbol{\beta}\; x_t'+Z_t(\boldsymbol{\delta})=\beta_0+\sum_{i=1}^p\beta_i x_{t,i}+\sum_{j=1}^q \gamma_j E_{t-j}(\boldsymbol{\delta}),\textrm{ with } 
% E_t(\boldsymbol{\delta})=Y_t\exp(-W_t(\boldsymbol{\delta}))-1,
% \end{equation}
% $x_t'$ denoting the transposition of $x_t=(x_{t,0},x_{t,1},\dots,x_{t,p})$ with $x_{t,0}=1$ for all $t$.
 %where 
\begin{equation}\label{eq:Wt}
W_t(\boldsymbol{\delta})=\boldsymbol{\beta}'x_t+Z_t(\boldsymbol{\delta})=\beta_0+\sum_{i=1}^p\beta_i x_{t,i}+\sum_{j=1}^q \gamma_j E_{t-j}(\boldsymbol{\delta}),
\end{equation}
with $x_t=(x_{t,0},x_{t,1},\dots,x_{t,p})'$, $x_{t,0}=1$ for all $t$ and
\begin{eqnarray}
E_t(\boldsymbol{\delta})=Y_t\exp(-W_t(\boldsymbol{\delta}))-1,\mbox{ if }t>0\mbox{ and }E_t(\boldsymbol{\delta})=0\mbox{, if }t\leq 0.
\label{eq:Et}
\end{eqnarray}
%
%with $x_t=(x_{t,0},x_{t,1},\dots,x_{t,p})'$ with $x_{t,0}=1$ for all $t$, $E_t(\boldsymbol{\delta})=Y_t\exp(-W_t(\boldsymbol{\delta}))-1$, if $t>0$ and $E_t(\boldsymbol{\delta})=0$, if $t\leq 0$.
%For further details on the choice of this criterion, we refer the reader to \cite{davis:dunsmuir:street:2005}.
%
To obtain $\widehat{\boldsymbol{\delta}}$ defined by
\begin{equation*}%\label{eq:delta_hat}
\widehat{\boldsymbol{\delta}}=\textrm{Argmax}_{\boldsymbol{\delta}} \; L(\boldsymbol{\delta}),
\end{equation*}
the first derivatives of $L$ are considered:
\begin{equation}\label{eq:def:grad}
\frac{\partial L}{\partial \boldsymbol{\delta}}(\boldsymbol{\delta})=\sum_{t=1}^n(Y_t-\exp(W_t(\boldsymbol{\delta}))\frac{\partial W_t}{\partial \boldsymbol{\delta}}(\boldsymbol{\delta}),
\end{equation}
where 
% \begin{equation*}
% \frac{\partial W_t}{\partial \boldsymbol{\delta}}(\boldsymbol{\delta})=\frac{\partial\boldsymbol{\beta} x_t'}{\partial \boldsymbol{\delta}}+\frac{\partial Z_t}{\partial \boldsymbol{\delta}}
% (\boldsymbol{\delta}),
% \end{equation*}
%\textcolor{blue}{
\begin{equation*}
\frac{\partial W_t}{\partial \boldsymbol{\delta}}(\boldsymbol{\delta})=\frac{\partial\boldsymbol{\beta}' x_t}{\partial \boldsymbol{\delta}}+\frac{\partial Z_t}{\partial \boldsymbol{\delta}}
(\boldsymbol{\delta}),
\end{equation*}
%}
$\boldsymbol{\beta}$, $x_t$ and $Z_t$ being given in (\ref{eq:Wt}). 
The computations of the first derivatives of $W_t$ are detailed in Section \ref{subsub:first_derive}. 

Based on Equation (\ref{eq:def:grad}) which is non linear in $\boldsymbol{\delta}$ and which has to be recursively computed, it is not possible to
obtain a closed-form formula for $\widehat{\boldsymbol{\delta}}$. 
Thus $\widehat{\boldsymbol{\delta}}$ is computed by using the Newton-Raphson algorithm. 
% In order to compute $\widehat{\boldsymbol{\delta}}$ defined in (\ref{eq:delta_hat}) using the Newton-Raphson algorithm,
More precisely, starting from an initial value for $\boldsymbol{\delta}$ denoted by
$\boldsymbol{\delta}^{(0)}$, the following recursion for $r\geq 1$ is used: 
\begin{equation}\label{eq:newton_raphson}
\boldsymbol{\delta}^{(r)}=\boldsymbol{\delta}^{(r-1)}-\frac{\partial^2 L}{\partial \boldsymbol{\delta}'\partial \boldsymbol{\delta}}(\boldsymbol{\delta}^{(r-1)})^{-1}\frac{\partial L}{\partial \boldsymbol{\delta}}(\boldsymbol{\delta}^{(r-1)}),
\end{equation}
where $\frac{\partial^2 L}{\partial \boldsymbol{\delta}'\partial \boldsymbol{\delta}}$ corresponds to the Hessian matrix of $L$
and is defined in (\ref{eq:def:hess}) given below.
Hence, it requires the computation of the first and second derivatives of $L$. 
We already explained how to compute the first derivatives of $L$. As for the second derivatives of $L$, it can be obtained as follows:

\begin{equation}\label{eq:def:hess}
\frac{\partial^2 L}{\partial \boldsymbol{\delta}'\partial \boldsymbol{\delta}}(\boldsymbol{\delta})
=\sum_{t=1}^n(Y_t-\exp(W_t(\boldsymbol{\delta}))\frac{\partial^2 W_t}{\partial \boldsymbol{\delta}'\partial\boldsymbol{\delta}}(\boldsymbol{\delta})
-\sum_{t=1}^n\exp(W_t(\boldsymbol{\delta}))\frac{\partial W_t}{\partial \boldsymbol{\delta}'}(\boldsymbol{\delta})\frac{\partial W_t}{\partial \boldsymbol{\delta}}(\boldsymbol{\delta}).
\end{equation}
The computations of the second derivatives of $W_t$ are detailed in Section \ref{subsub:second_derive}. 

However, in our sparse framework where many components of $\boldsymbol{\beta}^\star$ are null,
this procedure provides poor estimation results, see Section \ref{sec:sparse_estim} for numerical illustration.
This is the reason why we devised a novel estimation procedure described in the next section.

\subsection{Our estimation procedure}\label{sec:our_estim}

%Further details on the choice of $\boldsymbol{\delta}^{(0)}$ and the number of iterations to use will be given in Section \ref{sec:num}.

For selecting the most relevant components of $\boldsymbol{\beta}^\star$, we propose the following two-stage procedure:  Firstly, we
estimate $\boldsymbol{\gamma}^\star$ by using the Newton-Raphson algorithm described in Section \ref{sec:estim_gamma}
and secondly, we estimate  $\boldsymbol{\beta}^\star$
by using the regularized approach detailed in Section \ref{sec:variable}.

\subsubsection{Estimation of $\boldsymbol{\gamma}^\star$}\label{sec:estim_gamma}

To estimate $\boldsymbol{\gamma}^\star$, we propose using
\begin{equation*}%\label{eq:delta_hat}
\widehat{\boldsymbol{\gamma}}=\textrm{Argmax}_{\boldsymbol{\gamma}} \; L({\boldsymbol{\beta}^{(0)}}',\boldsymbol{\gamma}'),
\end{equation*}
where $L$ is defined in (\ref{eq:likelihood}), $\boldsymbol{\beta}^{(0)}=(\beta_{0}^{(0)},\dots,\beta_{p}^{(0)})'$ is a given initial value for
$\boldsymbol{\beta}^\star$ and $\boldsymbol{\gamma}=(\gamma_1,\dots,\gamma_q)'$.
Similar to the approach proposed in Section \ref{sec:estim}, we use the Newton-Raphson algorithm
to obtain $\widehat{\boldsymbol{\gamma}}$ based on the following recursion for $r\geq 1$ starting from the initial value
$\boldsymbol{\gamma}^{(0)}=(\gamma_1^{(0)},\dots,\gamma_q^{(0)})'$:
\begin{equation}\label{eq:newton_raphson:gamma}
  \boldsymbol{\gamma}^{(r)}=\boldsymbol{\gamma}^{(r-1)}-\frac{\partial^2 L}{\partial \boldsymbol{\gamma}'\partial
    \boldsymbol{\gamma}}({\boldsymbol{\beta}^{(0)}}',{\boldsymbol{\gamma}^{(r-1)}}')^{-1}
  \frac{\partial L}{\partial \boldsymbol{\gamma}}({\boldsymbol{\beta}^{(0)}}',{\boldsymbol{\gamma}^{(r-1)}}'),
\end{equation}
where the first and second derivatives of $L$ are obtained using the same strategy as the one used
for deriving Equations (\ref{eq:def:grad}) and (\ref{eq:def:hess}) in Section \ref{sec:estim}.

\subsubsection{Variable selection: Estimation of $\boldsymbol{\beta}^\star$}\label{sec:variable}

To perform variable selection in the $\beta_i^\star$ of Model (\ref{eq:mut_Wt}) aimed to obtain a sparse estimator of $\beta_i^\star$, 
we shall use a methodology inspired by \cite{friedman:hastie:tibshirani:2010} 
for fitting generalized linear models with $\ell_1$ penalties. It consists in penalizing a quadratic approximation to the log-likelihood obtained by a Taylor expansion. Using $\boldsymbol{\beta}^{(0)}$ and $\widehat{\boldsymbol{\gamma}}$ defined in Section \ref{sec:estim_gamma}, 
the quadratic approximation is obtained as follows:
\begin{align*}
  \widetilde{L}(\boldsymbol{\beta})&:=L(\beta_0,\dots,\beta_p,\widehat{\gamma})\\
  &=\widetilde{L}(\boldsymbol{\beta}^{(0)})
+\frac{\partial L}{\partial \boldsymbol{\beta}}(\boldsymbol{\beta}^{(0)},\widehat{\boldsymbol{\gamma}})(\boldsymbol{\beta}-\boldsymbol{\beta}^{(0)})
+\frac12 (\boldsymbol{\beta}-\boldsymbol{\beta}^{(0)})'
\frac{\partial^2 L}{\partial \boldsymbol{\beta}\partial \boldsymbol{\beta}'}(\boldsymbol{\beta}^{(0)},\widehat{\boldsymbol{\gamma}})
(\boldsymbol{\beta}-\boldsymbol{\beta}^{(0)}),
\end{align*}
where
$$\frac{\partial L}{\partial \boldsymbol{\beta}}=\left(\frac{\partial L}{\partial \beta_0},\dots,\frac{\partial L}{\partial \beta_p}\right)
\textrm{ and }
\frac{\partial^2 L}{\partial \boldsymbol{\beta}\partial \boldsymbol{\beta}'}=\left(\frac{\partial^2 L}{\partial \beta_j \partial \beta_k}\right)_{0\leq j,k\leq p}.$$
Thus,
\begin{align}\label{eq:Ltilde}
\widetilde{L}(\boldsymbol{\beta})=\widetilde{L}(\boldsymbol{\beta}^{(0)})+\frac{\partial L}{\partial \boldsymbol{\beta}}(\boldsymbol{\beta}^{(0)},\widehat{\boldsymbol{\gamma}})
U(\boldsymbol{\nu}-\boldsymbol{\nu}^{(0)})-\frac12 (\boldsymbol{\nu}-\boldsymbol{\nu}^{(0)})' \Lambda (\boldsymbol{\nu}-\boldsymbol{\nu}^{(0)}),
\end{align}
%\textcolor{blue}{OK pour le "-" dans l'approximation de Taylor (à rediscuter) ; il faut aussi le mettre dans la définition de $\tilde{L}(\boldsymbol{\beta})$ ci-dessus}
where $U\Lambda U'$ is the singular value decomposition of the positive semidefinite symmetric matrix 
$-\frac{\partial^2 L}{\partial \boldsymbol{\beta}\partial \boldsymbol{\beta}'}(\boldsymbol{\beta}^{(0)},\widehat{\boldsymbol{\gamma}})$
and $\boldsymbol{\nu}-\boldsymbol{\nu}^{(0)}=U'(\boldsymbol{\beta}-\boldsymbol{\beta}^{(0)})$.

In order to obtain a sparse estimator of $\boldsymbol{\beta}^\star$, we propose using $\widehat{\boldsymbol{\beta}}(\lambda)$ defined by
\begin{equation}\label{eq:beta_hat}
\widehat{\boldsymbol{\beta}}(\lambda)=\textrm{Argmin}_{\boldsymbol{\beta}}\left\{-\widetilde{L}_Q(\boldsymbol{\beta})+\lambda \|\boldsymbol{\beta}\|_1\right\},
\end{equation}
for a positive $\lambda$, where $\|\boldsymbol{\beta}\|_1=\sum_{k=0}^p |\beta_k|$ and $\widetilde{L}_Q(\boldsymbol{\beta})$ denotes the quadratic approximation of the log-likelihood. 
This quadratic approximation is defined by
\begin{equation}\label{eq:LQtilde}
-\widetilde{L}_Q(\boldsymbol{\beta})=\frac12\|\mathcal{Y}-\mathcal{X}\boldsymbol{\beta}\|_2^2,
\end{equation}
with
\begin{equation}\label{eq:def_Y_X}
\mathcal{Y}=\Lambda^{1/2}U'\boldsymbol{\beta}^{(0)}
+\Lambda^{-1/2}U'\left(\frac{\partial L}{\partial \boldsymbol{\beta}}(\boldsymbol{\beta}^{(0)},\widehat{\boldsymbol{\gamma}})\right)' ,\;  \mathcal{X}=\Lambda^{1/2}U'
\end{equation}
and $\|\cdot\|_2$ denoting the $\ell_2$ norm in $\mathbb{R}^{p+1}$.
Computational details for obtaining the expression \eqref{eq:LQtilde} of $\widetilde{L}_Q(\boldsymbol{\beta})$ appearing in
Criterion (\ref{eq:beta_hat}) are provided in Section \ref{sub:var_sec}.

To obtain the final estimator $\widehat{\boldsymbol{\beta}}$ of $\boldsymbol{\beta}^\star$, we shall consider two different approaches:

\begin{itemize}
%\item \textsf{Standard cross-validation.} We refer the reader to Chapter 7 of \cite{hastie2009elements} for further details.
\item \textsf{Standard stability selection.} It consists in using the stability selection procedure devised by \cite{meinshausen:buhlmann:2010}
  which guarantees the robustness of the selected variables. This  approach can be described as follows.
The vector $\mathcal{Y}$ defined in (\ref{eq:def_Y_X}) is randomly split into several subsamples of size $(p+1)/2$, which corresponds to half of the length of $\mathcal{Y}$. \textcolor{black}{The number of subsamples is equal to 1000 in our numerical experiments.}
For each subsample $\mathcal{Y}^{(s)}$ and the corresponding design matrix $\mathcal{X}^{(s)}$,
the LASSO criterion (\ref{eq:beta_hat}) is applied with a given $\lambda$,
where $\mathcal{Y}$ and $\mathcal{X}$ are replaced by $\mathcal{Y}^{(s)}$ and $\mathcal{X}^{(s)}$, respectively.
For each subsampling, the indices $i$ of the non null $\widehat{\beta}_i$ are stored.
\textcolor{black}{At the end, we calculate a frequency of index selection, namely the amount of times each index was selected divided by the number of subsamples.
For a given threshold, we keep in the final
 set of selected variables the ones whose indices have a frequency larger than this threshold.}
 Concerning the choice of $\lambda$, we shall consider the one obtained by cross-validation (Chapter 7 of \cite{hastie2009elements})
 \textcolor{black}{called $\mathsf{ss \_ cv}$ in the following}
 and the smallest element of the grid of $\lambda$ provided by the R \texttt{glmnet} package \textcolor{black}{called $\mathsf{ss \_ min}$ in the following.}
\item \textsf{Fast stability selection.}
  It consists in applying the LASSO criterion (\ref{eq:beta_hat}) for several values of $\lambda$.
  For each $\lambda$, the indices $i$ of the non null $\widehat{\beta}_i(\lambda)$ are stored.
  \textcolor{black}{Then, we calculate a frequency of index selection, namely the amount of times each index was selected divided by the number of $\lambda$'s
    considered. For a given threshold, we keep in the final
  set of selected variables the ones whose indices have a frequency larger than this threshold. This approach is called $\mathsf{fast \_ ss}$
    in the following.}
\end{itemize}
These approaches will be further investigated in Section \ref{sec:num}.

\subsection{\textcolor{black}{Practical implementation}}\label{sec:practical}
% We summarize hereafter the different steps of our methodology.% which is implemented in the R package \texttt{HighDimGlarma}.

In practice, the previous approach can be summarized as follows. 

\begin{itemize}
\item\textsf{Initialization.} We take for $\boldsymbol{\beta}^{(0)}$ 
the estimator of $\boldsymbol{\beta}^\star$ obtained by fitting a GLM to the observations
$Y_1,\dots,Y_n$ thus ignoring the ARMA part of the model in the case where $n>p$.
If $p$ is larger than $n$, then a regularized criterion for GLM models
can be used, see for instance \cite{friedman:hastie:tibshirani:2010}.
For $\boldsymbol{\gamma}^{(0)}$, we take the null vector.
% use standard estimators of MA($q$) process
% fitted to $\left(\log(Y_t)-\boldsymbol{\beta}^{(0)}x_t'\right)_{1\leq t\leq n}$.
\item\textsf{Newton-Raphson algorithm.} We use the recursion defined in (\ref{eq:newton_raphson:gamma}) with
the initialization $(\boldsymbol{\beta}^{(0)},\boldsymbol{\gamma}^{(0)})$ obtained in the previous step and
we stop at the iteration $R$ such that $\|\boldsymbol{\gamma}^{(R)}-\boldsymbol{\gamma}^{(R-1)}\|_\infty<10^{-6}$.
\item\textsf{Variable selection.} To obtain a sparse estimator of $\boldsymbol{\beta}^\star$, we use the criterion (\ref{eq:beta_hat})
where $\boldsymbol{\beta}^{(0)}$ and $\widehat{\boldsymbol{\gamma}}$ appearing in (\ref{eq:def_Y_X}) are replaced by $\boldsymbol{\beta}^{(0)}$ and $\boldsymbol{\gamma}^{(R)}$ 
obtained in the previous steps. We thus get $\widehat{\boldsymbol{\beta}}$ by using one of the three approaches described at the end of Section
\ref{sec:variable}.
% \item\textsf{\underline{Fourth step:} Choice of $\lambda$.} To choose the value of $\lambda$ and thus the final estimator $\widehat{\boldsymbol{\beta}}$ of $\boldsymbol{\beta}^\star$, we use the stability
% selection approach devised by \cite{meinshausen:buhlmann:2010}.
% \item\textsf{\underline{Fifth step:}} Replace $\boldsymbol{\beta}^{(0)}$ in the first step by $\widehat{\boldsymbol{\beta}}$ obtained in the third step to obtain
% a new $\boldsymbol{\gamma}^{(0)}$ and repeat the second and third steps.
\end{itemize}

This procedure can be improved by iterating the \textsf{Newton-Raphson algorithm} and \textsf{Variable selection} steps.
More precisely, let us denote by $\boldsymbol{\beta}_{1}^{(0)}$, $\gamma_{1}^{(R_1)}$ and $\widehat{\boldsymbol{\beta}}_1$ 
the values of $\boldsymbol{\beta}^{(0)}$, $\gamma^{(R)}$ and $\widehat{\boldsymbol{\beta}}$ obtained
in the three steps described above at the first iteration.
At the second iteration, $(\boldsymbol{\beta}^{(0)},\boldsymbol{\gamma}^{(0)})$ appearing in the \textsf{Newton-Raphson algorithm} step
is replaced by $(\widehat{\boldsymbol{\beta}}_1,\gamma_{1}^{(R_1)})$. At the end of this second iteration, $\widehat{\boldsymbol{\beta}}_2$ and $\gamma_{2}^{(R_2)}$
denote the obtained values of $\widehat{\boldsymbol{\beta}}$ and $\gamma^{(R)}$, respectively.
This approach is iterated until the stabilization of $\gamma_{k}^{(R_k)}$.

\subsection{Consistency results}\label{sec:consistency}

In this section, we shall establish the consistency of the parameter $\gamma_1^\star$ in the case where $q=1$
from $Y_1,\dots,Y_n$ defined in (\ref{eq:Yt}) and (\ref{eq:Zt})
where (\ref{eq:mut_Wt}) is replaced by
\begin{equation}\label{eq:mut_simple}
\mu_t^\star=\exp(W_t^\star) \textrm{ with } W_t^\star=\beta_0^\star+Z_t^\star.
\end{equation}
We limit ourselves to this framework since in the more general one 
the consistency is much more tricky to handle and is beyond the scope of this paper.
Note that some theoretical results have already been obtained in this framework (no covariates and $q=1$)
by \cite{davis:dunsmuir:streett:2003} and \cite{davis:dunsmuir:street:2005}. However, here, we provide, on the one hand, a more detailed version
of the proof of these results and on the other hand, a proof of the consistency of $\gamma_1^\star$ based on a stochastic equicontinuity result.

\begin{theo}\label{theo:MA1}
Assume that $Y_1,\dots,Y_n$ satisfy the model defined by (\ref{eq:Yt}), (\ref{eq:mut_simple}) and (\ref{eq:Zt}) with $q=1$ and $\gamma_1^\star\in\Gamma$ where $\Gamma$ is a compact set
of $\mathbb{R}$ which does not contain 0. Assume also that $(W_t^\star)$ 
starts with its stationary invariant distribution. Let $\widehat{\gamma}_1$ be defined by:
$$
\widehat{\gamma}_1=\textrm{Argmax}_{\gamma_1\in\Gamma}\; L(\beta_0^\star,\gamma_1),
$$
where 
\begin{equation}\label{eq:L:beta_0}
L(\beta_0^\star,\gamma_1)=\sum_{t=1}^n\left(Y_t W_t(\beta_0^\star,\gamma_1)-\exp(W_t(\beta_0^\star,\gamma_1)\right),
\end{equation}
with 
\begin{equation}\label{eq:W_Z}
W_t(\beta_0^\star,\gamma_1)=\beta_0^\star+Z_t(\gamma_1)=\beta_0^\star+\gamma_1 E_{t-1}(\gamma_1), 
%\textrm{ and } E_{t}(\gamma_1)=Y_t\exp(-W_{t})-1,\; t>0
\end{equation}

$$
E_{t-1}(\gamma_1)=Y_{t-1}\exp(-W_{t-1}(\beta_0^\star,\gamma_1))-1, \textrm{ if } t>1 \textrm{ and } E_{t-1}(\gamma_1)=0, \textrm{ if } t\leq 1.
$$

Then $\widehat{\gamma}_1\stackrel{p}{\longrightarrow}\gamma_1^\star$, as $n$ tends to infinity, where $\stackrel{p}{\longrightarrow}$ denotes the convergence in probability.
\end{theo}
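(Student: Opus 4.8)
The plan is to treat $\widehat\gamma_1$ as an M-estimator and to invoke the standard argmax consistency theorem: if the normalized criterion converges uniformly on the compact set $\Gamma$ to a deterministic limit that is uniquely maximized at $\gamma_1^\star$, then $\widehat\gamma_1 \stackrel{p}{\longrightarrow} \gamma_1^\star$. Accordingly I would set $Q_n(\gamma_1)=n^{-1}L(\beta_0^\star,\gamma_1)$ and write $\ell_t(\gamma_1)=Y_t W_t(\beta_0^\star,\gamma_1)-\exp(W_t(\beta_0^\star,\gamma_1))$, so that $Q_n(\gamma_1)=n^{-1}\sum_{t=1}^n\ell_t(\gamma_1)$, and reduce the whole proof to three ingredients: pointwise convergence of $Q_n$, uniform convergence via stochastic equicontinuity, and identification.

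First, I would exploit the assumption that $(W_t^\star)$ is started from its stationary invariant distribution, so that the data process $(Y_t,W_t^\star,E_t^\star)$ is strictly stationary and ergodic (the chain is a measurable functional of the i.i.d. Poisson innovations driving the recursion). For a fixed $\gamma_1\in\Gamma$, the recursion for $E_{t-1}(\gamma_1)$ in \eqref{eq:W_Z} with the truncation $E_t(\gamma_1)=0$ for $t\le 1$ does not coincide with a stationary sequence because of this initialization. I would therefore introduce the stationary version $\widetilde E_t(\gamma_1)$ obtained by extending the recursion into the infinite past, define $\widetilde\ell_t(\gamma_1)$ accordingly, and show that the initialization error $E_t(\gamma_1)-\widetilde E_t(\gamma_1)$ decays geometrically, uniformly over $\Gamma$, by a contraction argument on the recursion (using $\partial E_t/\partial\gamma_1=-(E_t+1)(E_{t-1}+\gamma_1\,\partial E_{t-1}/\partial\gamma_1)$ and the compactness of $\Gamma$). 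Ergodicity then gives the pointwise limit $Q_n(\gamma_1)\to Q(\gamma_1):=\mathbb E[\widetilde\ell_1(\gamma_1)]$ in probability.

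The main obstacle is the passage from pointwise to uniform convergence, i.e. the stochastic equicontinuity of $\{Q_n\}$. To obtain it I would bound the increments $|\ell_t(\gamma_1)-\ell_t(\gamma_1')|$ by $|\gamma_1-\gamma_1'|\sup_{\gamma_1\in\Gamma}|\partial\ell_t/\partial\gamma_1|$ and show that $n^{-1}\sum_{t=1}^n\sup_{\gamma_1\in\Gamma}|\partial\ell_t/\partial\gamma_1|=O_p(1)$. This requires controlling the derivative process $\partial W_t/\partial\gamma_1$, which itself satisfies a nonlinear recursion involving the factors $(E_t+1)=Y_t\exp(-W_t)$, together with moment bounds: although $Y_t$ is Poisson (hence has finite moments of all orders), the exponential terms $\exp(W_t(\beta_0^\star,\gamma_1))$ must be shown to have a supremum over $\Gamma$ with finite expectation under the stationary law, and the derivative recursion must be shown to be stable uniformly in $\gamma_1\in\Gamma$. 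This uniform stability --- essentially a geometric contraction of the recursion governing $\partial E_t/\partial\gamma_1$ across the whole compact parameter set --- is where the delicate work lies, and it is exactly what a stochastic equicontinuity lemma is meant to encapsulate. Combining the Lipschitz-type bound with the pointwise limit upgrades the convergence to $\sup_{\gamma_1\in\Gamma}|Q_n(\gamma_1)-Q(\gamma_1)|\stackrel{p}{\longrightarrow}0$.

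Finally I would verify identification, i.e. that $Q$ is uniquely maximized at $\gamma_1^\star$. Conditioning on $\mathcal F_{t-1}$ and using $\mathbb E[Y_t\mid\mathcal F_{t-1}]=\mu_t^\star=\exp(W_t^\star)$, the map $w\mapsto\mu_t^\star w-\exp(w)$ is strictly concave and maximized at $w=W_t^\star$, so $Q(\gamma_1)\le Q(\gamma_1^\star)$ with equality iff $\widetilde W_t(\gamma_1)=W_t^\star$ almost surely. Feeding this equality back through the recursion gives $\widetilde E_t(\gamma_1)=E_t^\star$ and then $(\gamma_1-\gamma_1^\star)E_t^\star=0$ a.s.; since conditionally on $\mathcal F_{t-1}$ the variable $E_t^\star=Y_t\exp(-W_t^\star)-1$ has variance $\exp(-W_t^\star)>0$ and is thus nondegenerate, this forces $\gamma_1=\gamma_1^\star$. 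With uniform convergence, a unique maximizer, continuity of $Q$ and compactness of $\Gamma$, the standard extremum-estimator argument yields $\widehat\gamma_1\stackrel{p}{\longrightarrow}\gamma_1^\star$.
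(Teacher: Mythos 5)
Your proposal is correct and takes essentially the same route as the paper, whose proof is exactly this extremum-estimator argument split into three propositions: pointwise convergence of $n^{-1}L(\beta_0^\star,\gamma_1)$ from stationarity and ergodicity of the Markov chain $(W_t^\star)$ (established via aperiodicity and Doeblin's condition), identification via conditioning on the past and the inequality $x-\exp(x)\leq -1$ (your strict-concavity step), and uniform convergence via a Lipschitz-in-$\gamma_1$ stochastic equicontinuity bound. The differences are only matters of execution — the paper bounds the criterion increments directly using $|\mathrm{e}^x-\mathrm{e}^y|\leq \mathrm{e}^x|x-y|\mathrm{e}^{|x-y|}$ rather than through derivative bounds, and obtains ergodicity from uniform ergodicity of the chain rather than from a functional-of-innovations representation.
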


The proof of Theorem \ref{theo:MA1} is based on the following propositions which are proved in Section \ref{sec:proofs}. These propositions are the classical arguments for establishing consistency
results of maximum likelihood estimators. Note that we shall explain in the proof of Proposition \ref{prop1}
why a stationary invariant distribution for $(W_t^\star)$ does exist. The main tools used for proving Propositions \ref{prop1} and \ref{prop3} are the Markov property and the ergodicity of $(W_t^\star)$.

\begin{prop}\label{prop1}
For all fixed $\gamma_1$, under the assumptions of Theorem \ref{theo:MA1}, 
\begin{equation}\label{eq:conv}
\frac1n L(\beta_0^\star,\gamma_1)\stackrel{p}{\longrightarrow} 
\mathcal{L}(\gamma_1):=\mathbb{E}\left[Y_3 W_3(\beta_0^\star,\gamma_1)-\exp(W_3(\beta_0^\star,\gamma_1)\right], \textrm{ as $n$ tends to infinity.}
\end{equation}
\end{prop}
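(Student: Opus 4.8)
The plan is to recognize the left-hand side of \eqref{eq:conv} as a time average of a function of a Markov chain and to invoke an ergodic theorem. I would first note that, when $q=1$ and \eqref{eq:mut_simple} holds, the true log-intensity satisfies
\begin{equation*}
W_t^\star=\beta_0^\star+\gamma_1^\star\left(Y_{t-1}e^{-W_{t-1}^\star}-1\right),\qquad Y_{t-1}\mid\mathcal{F}_{t-2}\sim\mathcal{P}\left(e^{W_{t-1}^\star}\right),
\end{equation*}
so the conditional law of $W_t^\star$ given the past depends on $W_{t-1}^\star$ only; hence $(W_t^\star)$ is a homogeneous Markov chain. The first task, as promised before the statement, is to explain why this chain admits a stationary invariant distribution. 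Here I would exploit that $\mathbb{E}[Y_{t-1}\mid W_{t-1}^\star]=e^{W_{t-1}^\star}$ forces the conditional mean $\mathbb{E}[W_t^\star\mid W_{t-1}^\star]=\beta_0^\star$ to be constant, which is the mean-reversion underlying a Lyapunov/drift condition on $|W_t^\star|$; combined with $\psi$-irreducibility and aperiodicity this would yield positive Harris recurrence, a unique invariant law $\pi$, and ergodicity of $(W_t^\star)$.

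Next I would bring in the \emph{evaluated} log-intensity $W_t(\beta_0^\star,\gamma_1)$ of \eqref{eq:W_Z}, which obeys the analogous recursion with the fixed $\gamma_1$ in place of $\gamma_1^\star$. Since $Y_{t-1}$ is driven only by $W_{t-1}^\star$, the pair $\left(W_t^\star,W_t(\beta_0^\star,\gamma_1)\right)$ is again a Markov chain, inheriting irreducibility and ergodicity from its first coordinate. The summand in \eqref{eq:L:beta_0}, namely $g_t:=Y_tW_t(\beta_0^\star,\gamma_1)-\exp(W_t(\beta_0^\star,\gamma_1))$, is a measurable function of this chain and of the current emission $Y_t$, whose conditional mean is $e^{W_t^\star}$. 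Started from stationarity, Birkhoff's ergodic theorem (equivalently, the strong law for ergodic Markov chains) would then give $\tfrac1n\sum_{t=1}^n g_t\to\mathbb{E}_\pi[g]$, and I would identify this limit with $\mathcal{L}(\gamma_1)=\mathbb{E}[Y_3W_3(\beta_0^\star,\gamma_1)-\exp(W_3(\beta_0^\star,\gamma_1))]$, the index $3$ in the statement playing the role of a generic interior time in the stationary regime.

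Two technical points would need care, and I expect them to be the main obstacle. First, integrability: the ergodic theorem requires $\mathbb{E}_\pi|g|<\infty$, and since $W_t(\beta_0^\star,\gamma_1)$ contains the term $\gamma_1 Y_{t-1}e^{-W_{t-1}(\beta_0^\star,\gamma_1)}$, controlling $\mathbb{E}[\exp(W_t(\beta_0^\star,\gamma_1))]$ and $\mathbb{E}[Y_tW_t(\beta_0^\star,\gamma_1)]$ calls for exponential moments of the Poisson variables, which are finite through $\mathbb{E}[e^{sY}]=\exp(\mu(e^s-1))$ and must be bounded uniformly using the compactness of $\Gamma$. Second, the recursion in \eqref{eq:Et} is initialized with $E_0=0$ rather than from the stationary law of the joint chain, so I would argue this initialization is asymptotically negligible, either by appealing to the fact that for a Harris ergodic chain the strong law holds for every starting distribution, or by coupling $W_t(\beta_0^\star,\gamma_1)$ with its stationary version and showing the discrepancy contributes $o(1)$ to the normalized sum. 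Once the drift condition securing $\pi$ and the accompanying uniform integrability are established, the convergence \eqref{eq:conv} follows directly from the ergodic theorem.
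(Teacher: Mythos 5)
Your overall strategy (Markov-chain ergodicity plus Birkhoff's theorem) is the same as the paper's, but the two places where you deviate from it both contain gaps, and the serious one is your treatment of the evaluated process $W_t(\beta_0^\star,\gamma_1)$. You assert that the pair $\left(W_t^\star,\,W_t(\beta_0^\star,\gamma_1)\right)$ is a Markov chain ``inheriting irreducibility and ergodicity from its first coordinate.'' The pair is indeed Markov, but nothing about the first coordinate forces the joint chain to be $\psi$-irreducible or ergodic: the second coordinate is a deterministic filter driven by the emissions and run at a mis-specified parameter $\gamma_1\neq\gamma_1^\star$, so two copies of the joint chain started from different values of that coordinate need never couple; whether the filter forgets its initialization is precisely the filter-stability question, which your argument assumes rather than proves. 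The paper avoids this issue entirely. Having shown in Lemma \ref{lem:aperiodic_doeblin} that $(W_t^\star)$ is an aperiodic Markov process satisfying Doeblin's condition, hence uniformly ergodic by Theorem 16.0.2 of \cite{meyn:tweedie}, and therefore strictly stationary and ergodic when started from its invariant law (as assumed in Theorem \ref{theo:MA1}), it observes that the summand $Y_t W_t(\beta_0^\star,\gamma_1)-\exp(W_t(\beta_0^\star,\gamma_1))$ is a measurable function of $W_{t+1}^\star,W_t^\star,\dots,W_2^\star$ alone --- this is where the assumption $0\notin\Gamma$ is used, since $Y_t$ is recovered from $(W_t^\star,W_{t+1}^\star)$ by inverting \eqref{eq:Wtstar} --- and then applies Theorems 1.3.3 and 1.3.5 of \cite{taniguchibook:2012}: measurable functions of a strictly stationary ergodic process are strictly stationary and ergodic, and the ergodic theorem applies to them. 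No ergodic property of any enlarged Markov chain is ever needed.

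Your route to the invariant law is also incomplete as sketched: the constant conditional mean $\mathbb{E}\left[W_t^\star\mid W_{t-1}^\star\right]=\beta_0^\star$ is not by itself a Foster--Lyapunov drift condition; one must bound a conditional moment of $\left|W_t^\star-\beta_0^\star\right|$ (in this model that is doable, since $\mathbb{E}\left[\left|W_t^\star-\beta_0^\star\right|\mid W_{t-1}^\star\right]\leq 2|\gamma_1^\star|$ uniformly), and $\psi$-irreducibility needs its own argument, e.g.\ via the accessible atom $\beta_0^\star-\gamma_1^\star$ reached whenever $Y_{t-1}=0$. The paper instead verifies strong aperiodicity through exactly that atom and obtains Doeblin's condition by reference to the proof of Proposition 2 in \cite{davis:dunsmuir:streett:2003}, which yields uniform ergodicity --- stronger than the positive Harris recurrence your drift argument targets. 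Your explicit attention to integrability of the summand and to the $E_0=0$ initialization is well placed (the paper is silent on both points), but as written your proposal leaves the central step --- the law of large numbers for a functional involving $W_t(\beta_0^\star,\gamma_1)$ --- resting on an unproved ergodicity claim.
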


\begin{prop}\label{prop2}
The function $\mathcal{L}$ defined in (\ref{eq:conv}) has a unique maximum at the true parameter $\gamma_1=\gamma_1^\star$.
\end{prop}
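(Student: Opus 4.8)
The plan is to run the classical information-inequality argument for maximum likelihood estimators: I would show that $\mathcal{L}(\gamma_1^\star)-\mathcal{L}(\gamma_1)\geq 0$ for every $\gamma_1\in\Gamma$, with a strict inequality whenever $\gamma_1\neq\gamma_1^\star$. The first step is to eliminate the observation $Y_3$ by conditioning on the past. Since $W_3(\beta_0^\star,\gamma_1)$ is a function of $Y_1,Y_2$ and hence $\mathcal{F}_2$-measurable, while the model (\ref{eq:Yt}) gives $\mathbb{E}[Y_3\mid\mathcal{F}_2]=\mu_3^\star=\exp(W_3^\star)$ with $W_3^\star:=W_3(\beta_0^\star,\gamma_1^\star)$, the tower property turns the definition (\ref{eq:conv}) into
\[
\mathcal{L}(\gamma_1)=\mathbb{E}\left[\exp(W_3^\star)\,W_3(\beta_0^\star,\gamma_1)-\exp(W_3(\beta_0^\star,\gamma_1))\right].
\]

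Writing $a=W_3^\star$ and $b=W_3(\beta_0^\star,\gamma_1)$, the difference of interest becomes
\[
\mathcal{L}(\gamma_1^\star)-\mathcal{L}(\gamma_1)=\mathbb{E}\left[\exp(b)-\exp(a)-\exp(a)(b-a)\right].
\]
The integrand is the Bregman divergence attached to the strictly convex function $\exp$; by the tangent-line inequality it is nonnegative and vanishes exactly when $a=b$. This immediately yields $\mathcal{L}(\gamma_1)\leq\mathcal{L}(\gamma_1^\star)$, with equality if and only if $W_3(\beta_0^\star,\gamma_1)=W_3(\beta_0^\star,\gamma_1^\star)$ almost surely. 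This convexity part is routine; the substantive content lies in the conditioning identity and in the identifiability step that follows.

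It remains to show that the almost sure equality $W_3(\beta_0^\star,\gamma_1)=W_3(\beta_0^\star,\gamma_1^\star)$ forces $\gamma_1=\gamma_1^\star$. Rather than inverting the nonlinear recursion defining $E_2$, I would exploit the Poisson support of the data. On the event $\{Y_2=0\}$, which has positive probability since $Y_2\mid\mathcal{F}_1$ is Poisson with strictly positive mean, one has $E_2(\gamma_1)=-1$ for every $\gamma_1$ (the factor $Y_2$ annihilates the dependence on $W_2$), hence $W_3(\beta_0^\star,\gamma_1)=\beta_0^\star-\gamma_1$. Restricting the almost sure identity to this event gives $\beta_0^\star-\gamma_1=\beta_0^\star-\gamma_1^\star$, that is $\gamma_1=\gamma_1^\star$, which establishes that $\gamma_1^\star$ is the unique maximizer.

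The main obstacle is not the convexity inequality but the bookkeeping around $W_3^\star$. One must make sure that the conditional-mean identity $\mathbb{E}[Y_3\mid\mathcal{F}_2]=\exp(W_3(\beta_0^\star,\gamma_1^\star))$ is applied to exactly the process over which the expectation defining $\mathcal{L}$ is taken, namely the stationary process inherited from Proposition \ref{prop1}; otherwise the initialization convention $E_t=0$ for $t\leq 0$ and the stationary start of $(W_t^\star)$ are mismatched and the fitted conditional log-mean at $\gamma_1^\star$ need not coincide with the true one. Once $W_3^\star$ is correctly identified with the true conditional log-mean at $\gamma_1^\star$, so that the Bregman integrand genuinely vanishes at $\gamma_1=\gamma_1^\star$, the remainder follows as above, the $\{Y_2=0\}$ device being what spares us a delicate inversion of the recursion.
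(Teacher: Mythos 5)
Your proposal follows the same route as the paper's proof: condition on $\mathcal{F}_2$ to replace $Y_3$ by $\exp(W_3^\star)$ (tower property), then invoke the strict convexity of $\exp$. Your Bregman inequality $\exp(b)-\exp(a)-\exp(a)(b-a)\geq 0$ with equality iff $a=b$ is, after dividing by $\exp(a)>0$, exactly the paper's inequality $x-\exp(x)\leq -1$ with $x=b-a$, so the core of the two arguments is identical. Where you genuinely add something is the final identifiability step: the paper stops at ``this inequality is an equality only when $x=0$ which means that $\gamma_1=\gamma_1^\star$'', i.e.\ it \emph{asserts} that $W_3(\beta_0^\star,\gamma_1)=W_3^\star$ a.s.\ forces $\gamma_1=\gamma_1^\star$, whereas your $\{Y_2=0\}$ device proves it: $\mathbb{P}(Y_2=0)=\mathbb{E}[\exp(-\mu_2^\star)]>0$, and on that event $E_2(\gamma_1)=-1$ for every $\gamma_1$, so $W_3(\beta_0^\star,\gamma_1)=\beta_0^\star-\gamma_1$ while $W_3^\star=\beta_0^\star-\gamma_1^\star$ by (\ref{eq:Wtstar}), pinning down $\gamma_1=\gamma_1^\star$ without inverting the recursion. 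This is a correct and clean completion of a step the paper leaves implicit. Your closing caveat is also well taken, and it applies to the paper's proof as much as to yours: the last identity $\mathbb{E}\left[\exp(W_3^\star)(W_3^\star-1)\right]=\mathcal{L}(\gamma_1^\star)$ requires identifying $W_3(\beta_0^\star,\gamma_1^\star)$, computed with the convention $E_{t-1}(\gamma_1)=0$ for $t\leq 1$, with the true log-mean $W_3^\star$ of the stationarily started process; the paper passes over this mismatch silently, and neither proof resolves it beyond assuming the identification holds.
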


\begin{prop}\label{prop3}
Under the assumptions of Theorem \ref{theo:MA1}
$$\sup_{\gamma_1\in\Gamma}\left|\frac{L(\beta_0^\star,\gamma_1)}{n}-\mathcal{L}(\gamma_1)\right|\stackrel{p}{\longrightarrow}0, \textrm{ as $n$ tends to infinity,}$$
where $\mathcal{L}(\gamma_1)$ is defined in (\ref{eq:conv}).
\end{prop}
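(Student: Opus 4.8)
The plan is to upgrade the pointwise convergence of Proposition \ref{prop1} to uniform convergence over the compact set $\Gamma$ by combining it with a stochastic equicontinuity argument. Writing $\frac1n L(\beta_0^\star,\gamma_1)=\frac1n\sum_{t=1}^n g_t(\gamma_1)$ with $g_t(\gamma_1)=Y_t W_t(\beta_0^\star,\gamma_1)-\exp(W_t(\beta_0^\star,\gamma_1))$, the standard scheme has three ingredients: (i) the pointwise limit $\frac1n L(\beta_0^\star,\gamma_1)\stackrel{p}{\longrightarrow}\mathcal{L}(\gamma_1)$ for each fixed $\gamma_1$, which is exactly Proposition \ref{prop1}; (ii) the continuity of $\gamma_1\mapsto\mathcal{L}(\gamma_1)$ on the compact $\Gamma$; and (iii) stochastic equicontinuity of the family $\{\frac1n L(\beta_0^\star,\cdot)\}_n$. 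Once these are in hand, a finite covering argument — cover $\Gamma$ by finitely many balls of radius $\delta$, use (i) at the centers and (iii) inside the balls — yields the asserted uniform convergence in probability.

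First I would reduce stochastic equicontinuity to a Lipschitz bound with a stochastically bounded constant. Since $\gamma_1\mapsto g_t(\gamma_1)$ is continuously differentiable, the mean value theorem gives
$$\left|\frac1n L(\beta_0^\star,\gamma_1)-\frac1n L(\beta_0^\star,\gamma_1')\right|\le B_n\,|\gamma_1-\gamma_1'|,\qquad B_n:=\frac1n\sum_{t=1}^n\sup_{\gamma_1\in\Gamma}\left|\frac{\partial g_t}{\partial\gamma_1}(\gamma_1)\right|.$$
A direct computation gives $\frac{\partial g_t}{\partial\gamma_1}(\gamma_1)=(Y_t-\exp(W_t(\beta_0^\star,\gamma_1)))\,D_t(\gamma_1)$, where $D_t(\gamma_1):=\partial W_t/\partial\gamma_1$ satisfies the linear recursion $D_t=E_{t-1}(\gamma_1)-\gamma_1(E_{t-1}(\gamma_1)+1)D_{t-1}$ with $D_t=0$ for $t\le1$. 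The envelope $\sup_{\gamma_1\in\Gamma}|\partial g_t/\partial\gamma_1|$ is a measurable function of the stationary ergodic chain, so the ergodic theorem gives $B_n\stackrel{p}{\longrightarrow}\mathbb{E}\big[\sup_{\gamma_1\in\Gamma}|\partial g_3/\partial\gamma_1(\gamma_1)|\big]$, and this limit is finite precisely when the envelope is integrable; this is what makes $B_n=O_p(1)$ and delivers (iii).

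The hard part will be establishing the integrable envelope, i.e. $\mathbb{E}[\sup_{\gamma_1\in\Gamma}|\partial g_t/\partial\gamma_1(\gamma_1)|]<\infty$, uniformly in $\gamma_1$. The difficulty is twofold. First, one must control the derivative process $D_t(\gamma_1)$, whose recursion is a stochastic linear recurrence with random multiplier $-\gamma_1(E_{t-1}(\gamma_1)+1)=-\gamma_1 Y_{t-1}\exp(-W_{t-1}(\beta_0^\star,\gamma_1))$; showing that $\sup_{\gamma_1\in\Gamma}|D_t(\gamma_1)|$ has finite moments requires a contraction / negative top-Lyapunov-exponent condition for this recurrence, together with the fact that $\Gamma$ is compact and bounded away from $0$. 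Second, one must control $\sup_{\gamma_1}|Y_t-\exp(W_t(\beta_0^\star,\gamma_1))|$: here the Poisson structure ensures that all conditional moments of $Y_t$ are finite and the exponential term is controlled through the pseudo-process $W_t(\beta_0^\star,\gamma_1)$, but care is needed because $\exp(-W_{t-1})$ can be large when $W_{t-1}$ is very negative. Using the stationarity and ergodicity of $(W_t^\star)$ assumed in Theorem \ref{theo:MA1}, together with the attendant negligibility of the $E_0=0$ initialization, these moment bounds can be made uniform over $\gamma_1\in\Gamma$; a Cauchy–Schwarz split of $\partial g_t/\partial\gamma_1$ into the two pieces above, each bounded as just described, then closes the argument.
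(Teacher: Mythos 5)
Your overall scheme --- pointwise convergence from Proposition \ref{prop1}, upgraded to uniform convergence over the compact set $\Gamma$ by stochastic equicontinuity and a finite covering --- is exactly the structure of the paper's proof, which likewise reduces Proposition \ref{prop3} to showing that $\sup_{|\gamma_1-\gamma_2|\leq\delta}\left|n^{-1}L(\beta_0^\star,\gamma_1)-n^{-1}L(\beta_0^\star,\gamma_2)\right|\stackrel{p}{\longrightarrow}0$ for some $\delta>0$. The problem is in how you propose to establish that equicontinuity, and there the proposal has a genuine gap. You route everything through the mean value theorem, so the entire proof hinges on the integrable envelope condition $\mathbb{E}\bigl[\sup_{\gamma_1\in\Gamma}\left|\partial g_t/\partial\gamma_1(\gamma_1)\right|\bigr]<\infty$, and you then explicitly defer this step (``the hard part''), listing what it would require --- finite moments for the solution of the random linear recursion $D_t=E_{t-1}-\gamma_1(E_{t-1}+1)D_{t-1}$, a negative top Lyapunov exponent for the random multiplier $-\gamma_1 Y_{t-1}\exp(-W_{t-1})$, and uniform-in-$\gamma_1$ control of $\exp(-W_{t-1})$ --- without establishing any of it. None of these conditions appears among the assumptions of Theorem \ref{theo:MA1}, so each would have to be derived from the model itself; the contraction condition in particular is an additional fact you would need to prove (it is plausible, for instance because $\mathbb{E}[Y_t\exp(-W_t^\star)]=1$ and $\mathbb{P}(Y_t=0)>0$, but nothing in your text does so). As written, the crucial technical content of the equicontinuity step is missing.

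The paper's own proof avoids derivatives entirely. It bounds the increment $\left|W_t(\beta_0^\star,\gamma_1)-W_t(\beta_0^\star,\gamma_2)\right|$ directly from the recursion $W_t=\beta_0^\star+\gamma_1\left(Y_{t-1}\exp(-W_{t-1})-1\right)$, using the elementary inequality $|\mathrm{e}^x-\mathrm{e}^y|\leq \mathrm{e}^x|x-y|\mathrm{e}^{|x-y|}$, and unrolls the recursion to obtain $\left|W_t(\beta_0^\star,\gamma_1)-W_t(\beta_0^\star,\gamma_2)\right|\leq\delta\, F(Y_{t-1},\dots,Y_1)$ whenever $|\gamma_1-\gamma_2|\leq\delta$, with an analogous bound $\delta\, G(Y_{t-1},\dots,Y_1)$ for the term $\left|\exp(W_t(\beta_0^\star,\gamma_1))-\exp(W_t(\beta_0^\star,\gamma_2))\right|$. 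This yields $\sup_{|\gamma_1-\gamma_2|\leq\delta}\left|n^{-1}L(\beta_0^\star,\gamma_1)-n^{-1}L(\beta_0^\star,\gamma_2)\right|\leq \delta\, n^{-1}\sum_{t=1}^n H(Y_t,\dots,Y_1)$, and the ergodic theorem already invoked in Proposition \ref{prop1} makes the right-hand side $O_p(\delta)$. The point of this construction is that the Lipschitz factor $\delta$ is pulled out explicitly and the dominating function is built constructively, so no Lyapunov-exponent or moment condition on a derivative process ever has to be verified. (To be fair, the paper also asserts rather than fully proves $\mathbb{E}[|H|]<\infty$, but it at least exhibits $H$; your envelope is defined implicitly through a recursion whose stability is itself in question.) To repair your argument, you must either actually prove the envelope integrability uniformly over $\Gamma$ --- including the contraction property you only name --- or replace the mean-value-theorem step with the paper's direct recursive difference estimate.
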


% \begin{theo}\label{theo:MAq}
% Assume that $Y_1,\dots,Y_n$ satisfy (\ref{eq:Yt}), (\ref{eq:mut_simple}) and (\ref{eq:Zt}) and that there exists $k_0$ such that $\gamma_{k_0}^\star\neq 0$ \textcolor{black}{A VERIFIER}. 
% Let $\widehat{\boldsymbol{\gamma}}=(\widehat{\gamma}_1,\dots,\widehat{\gamma}_q)$ be defined by:
% $$
% \widehat{\boldsymbol{\gamma}}=\textrm{Argmax}_{\boldsymbol{\gamma}} L(\beta_0^\star,\boldsymbol{\gamma}),
% $$
% where 
% $$
% L(\beta_0^\star,\boldsymbol{\gamma})=\sum_{t=1}^n\left(Y_t W_t(\beta_0^\star,\boldsymbol{\gamma})-\exp(W_t(\beta_0^\star,\boldsymbol{\gamma})\right).
% $$
% Then $\widehat{\boldsymbol{\gamma}}\stackrel{p}{\longrightarrow}(\gamma_1^\star,\dots,\gamma_q^\star)$ as $n$ tends to infinity, where $\stackrel{p}{\longrightarrow}$ denotes the convergence in probability.
% \end{theo}

% estimation de beta à inclure ????
%   dire qu'on ne peut pas mettre les $x_{i,t}$ (????) à cause de la stationarité

\section{Numerical experiments}\label{sec:num}

%The goal of this section is to investigate the performance of our method, \textcolor{blue}{which is implemented in the R package \texttt{GlarmaVarSel}}, both from a statistical and a numerical points of view, using synthetic data generated by the model defined by (\ref{eq:Yt}), (\ref{eq:mut_Wt}) and (\ref{eq:Zt})
%\textcolor{blue}{and also by analyzing RNA-Seq kinetics \textcolor{red}{ (temporal evolution)} data.}

\textcolor{black}{ This section aims to investigate the performance of our method, the implementation of which is available in the R package
  \texttt{GlarmaVarSel}. We study it both from a statistical and a numerical point of view, using synthetic data generated by the model defined by (\ref{eq:Yt}), (\ref{eq:mut_Wt}), and (\ref{eq:Zt}).}

\subsection{Statistical performance}

\subsubsection{Estimation of the parameters when $p=0$}

In this section, we investigate the statistical performance of our methodology in the model
defined by (\ref{eq:Yt}), (\ref{eq:mut_Wt}) and (\ref{eq:Zt}) for $n$ in $\{50,100,250,500,1000\}$
in the case where $p=0$, namely when there are no covariates and for $q$ in
$\{1,2,3\}$. The performance of our approach for estimating $\beta_0^\star$ and the $\gamma_k^\star$ are displayed in Figures \ref{fig:estim_beta}, \ref{fig:estim:gam1} and \ref{fig:estim:gam2_3}. We can see from these figures that
the accuracy of the parameter estimations is improved when $n$ increases, which corroborates the consistency of $\gamma_1^\star$ 
given in Theorem \ref{theo:MA1} in the case $q=1$.

\begin{figure}[!htbp]
  \centering
  \includegraphics[scale=0.28]{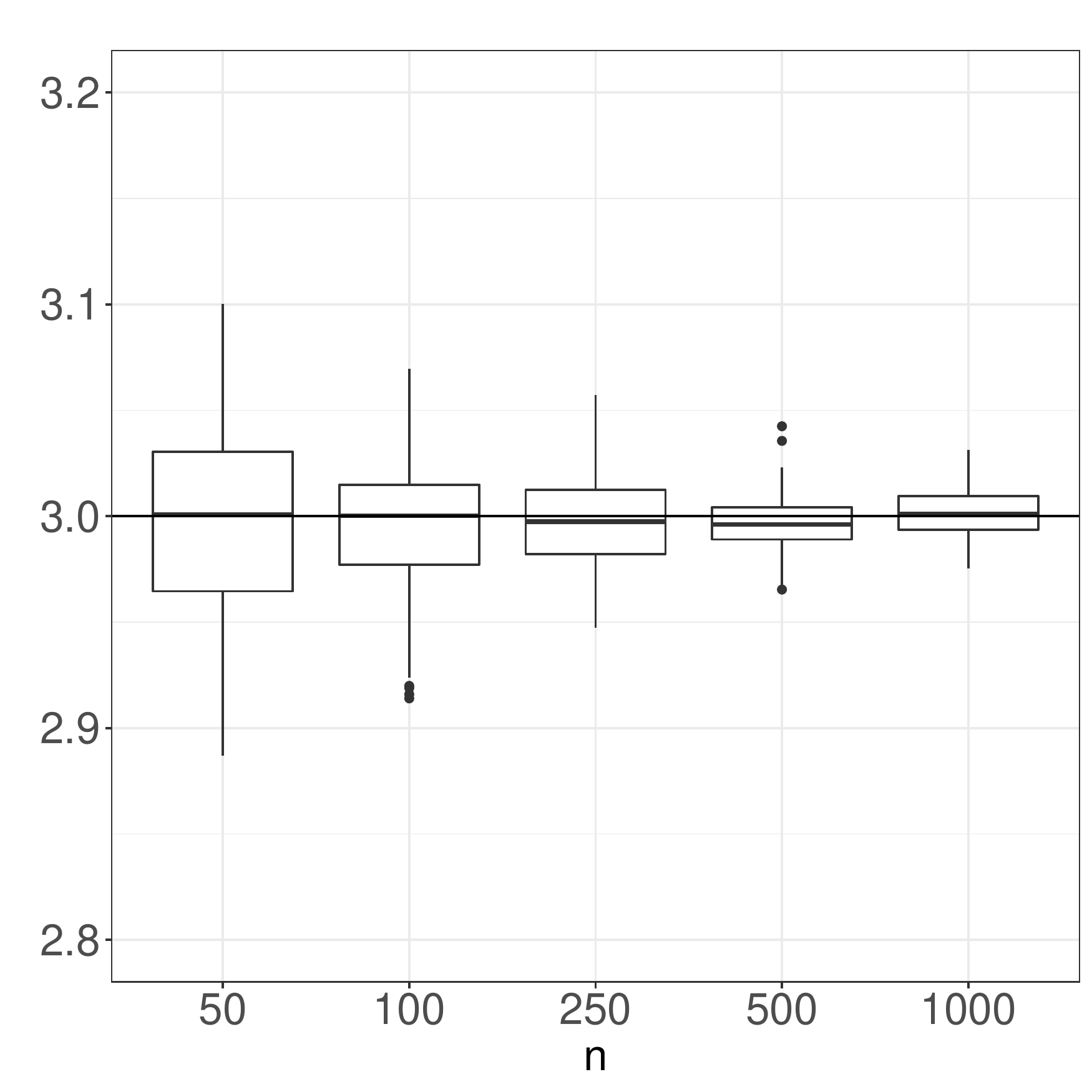}
  \includegraphics[scale=0.28]{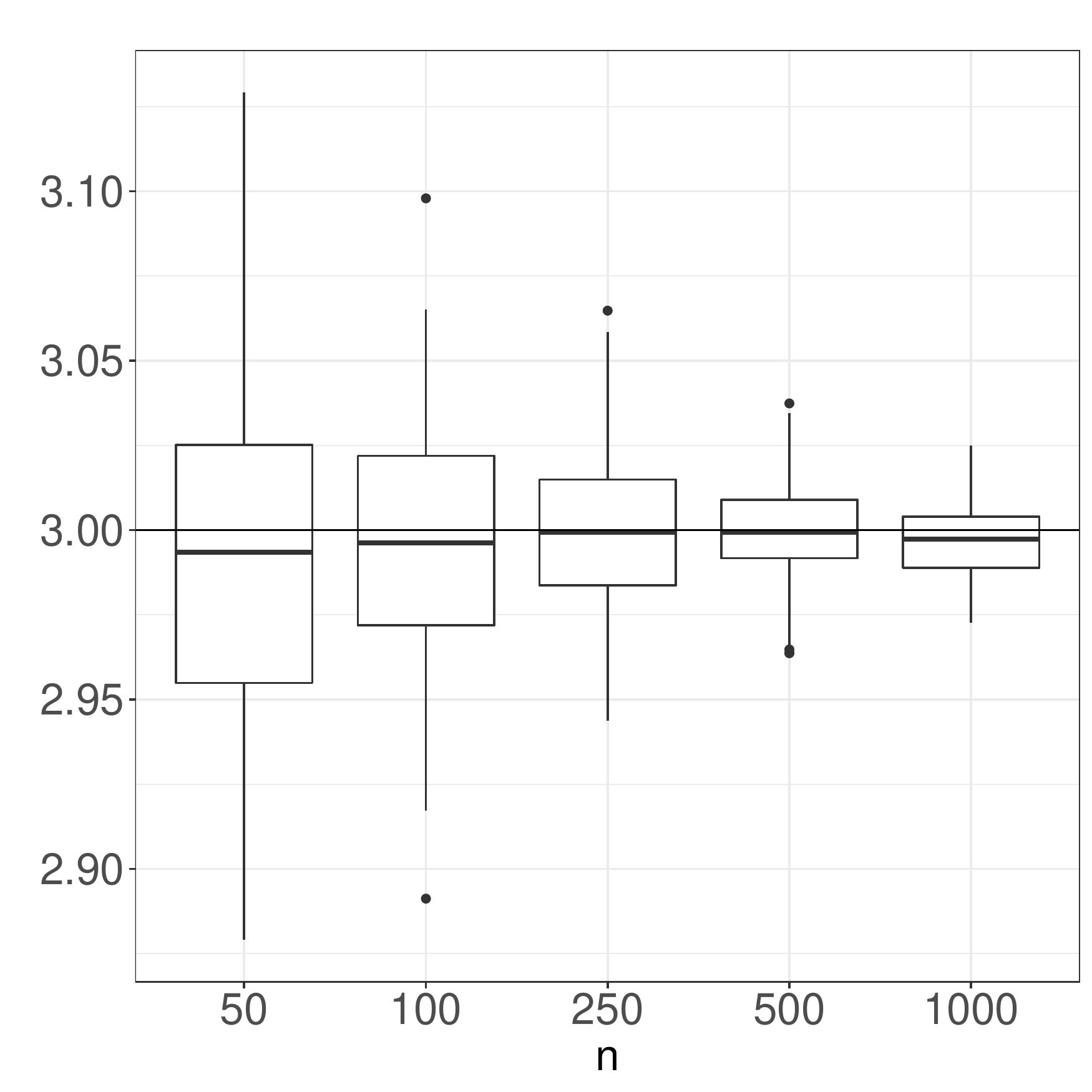}
  \includegraphics[scale=0.28]{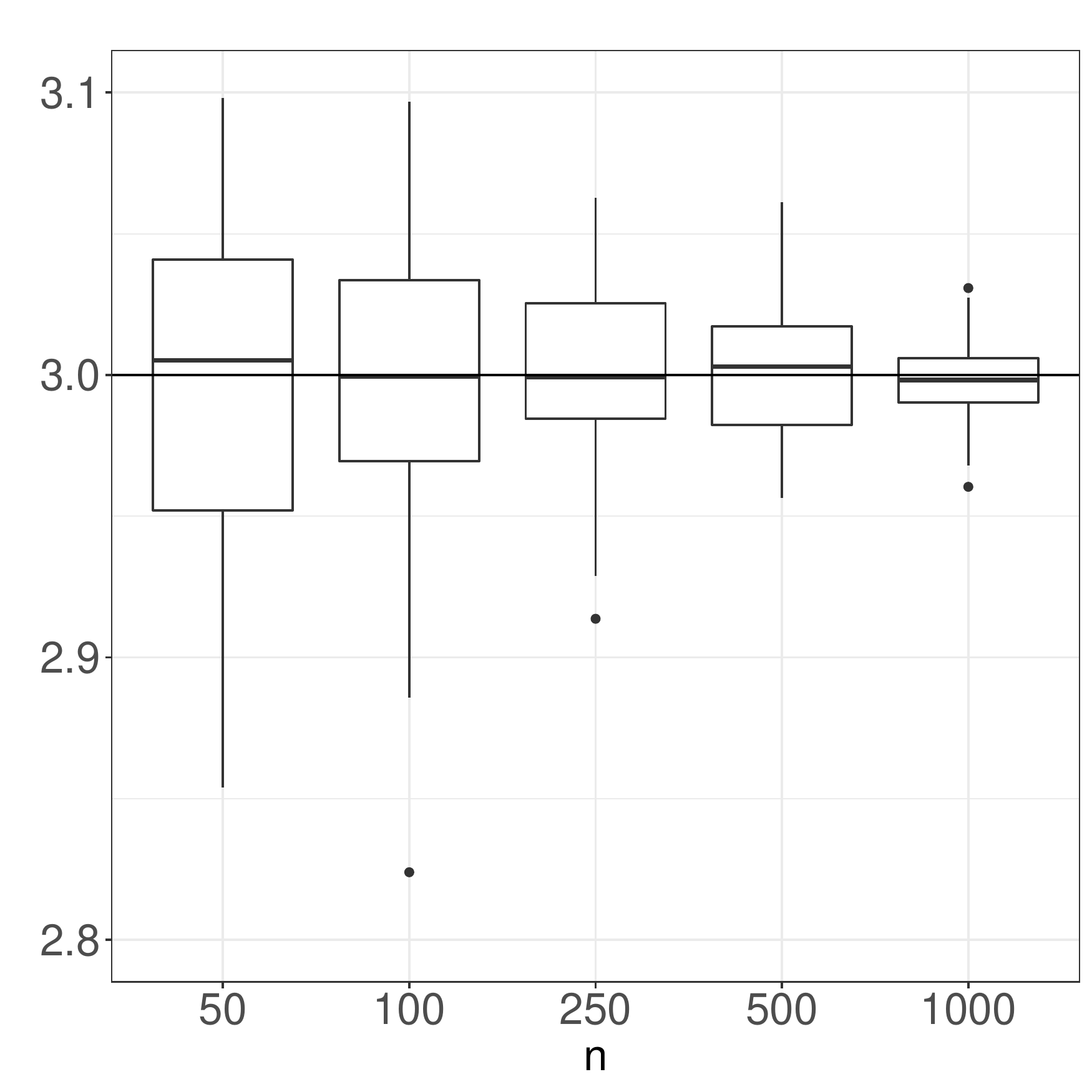}
  \caption{Boxplots for the estimations of $\beta_0^\star=3$ in Model (\ref{eq:mut_Wt}) with no regressor and $q=1$ (left), $q=2$ (middle) and $q=3$ (right). 
The horizontal lines correspond to the value of $\beta_0^\star$.\label{fig:estim_beta}}
\end{figure}

\begin{figure}[!htbp]
  \centering
  \includegraphics[scale=0.28]{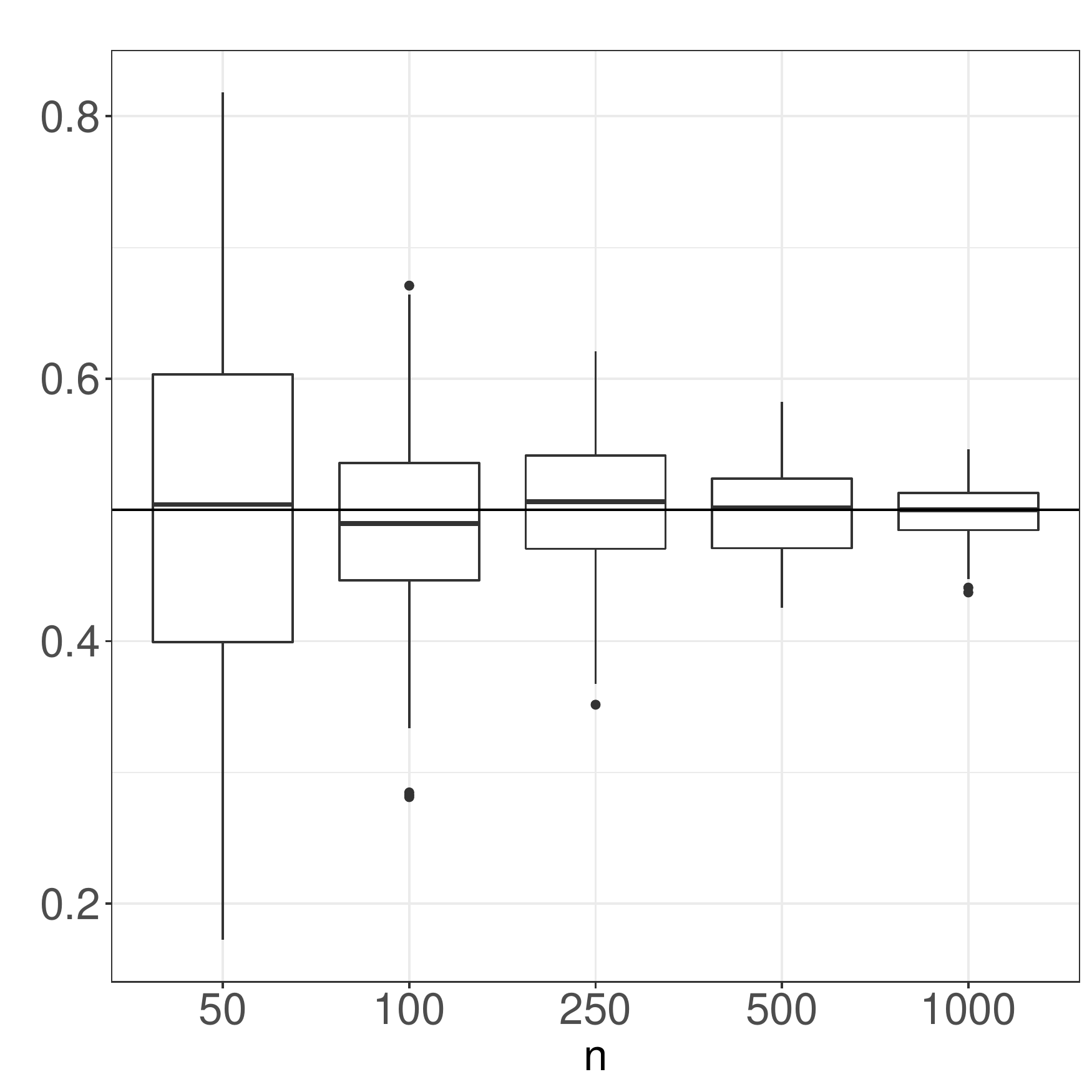}
  \includegraphics[scale=0.28]{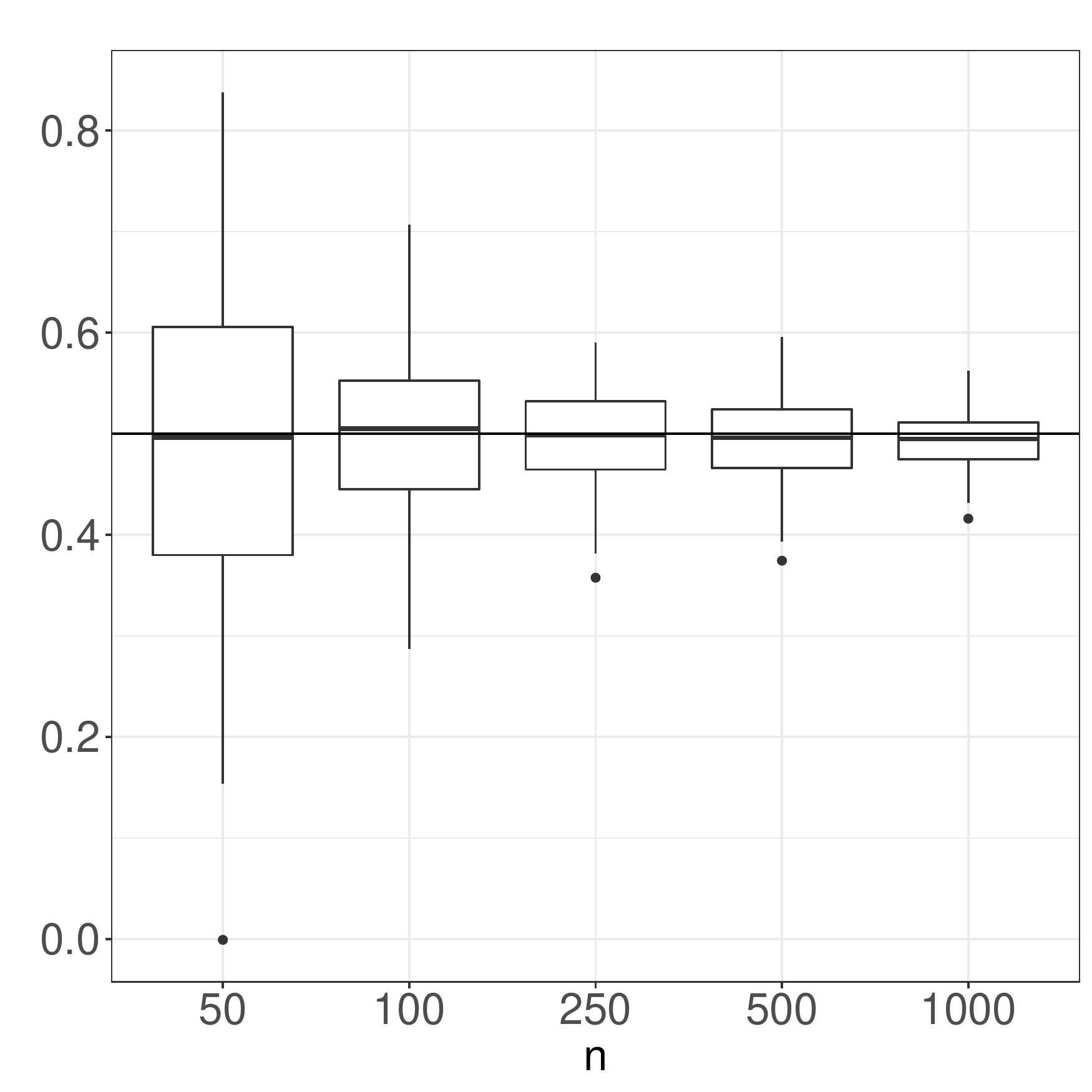}
  \includegraphics[scale=0.28]{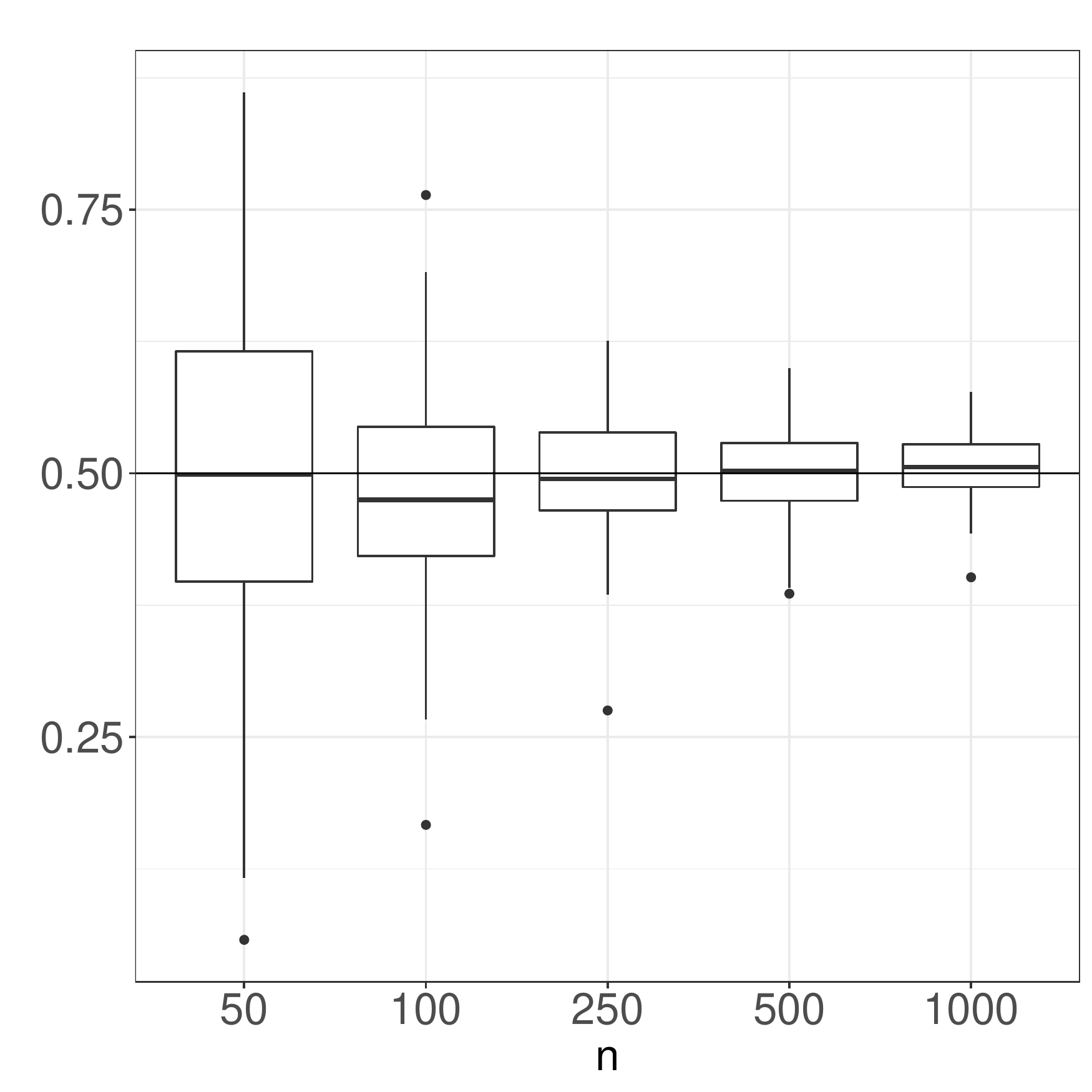}
  \caption{Boxplots for the estimations of $\gamma_1^\star=0.5$ in Model (\ref{eq:mut_Wt}) with no regressor and $q=1$ (left), $q=2$ (middle) and $q=3$ (right).
  \textcolor{black}{The horizontal lines correspond to the value of $\gamma_1^\star$.} \label{fig:estim:gam1}}
\end{figure}

\begin{figure}[!htbp]
  \centering
  \includegraphics[scale=0.28]{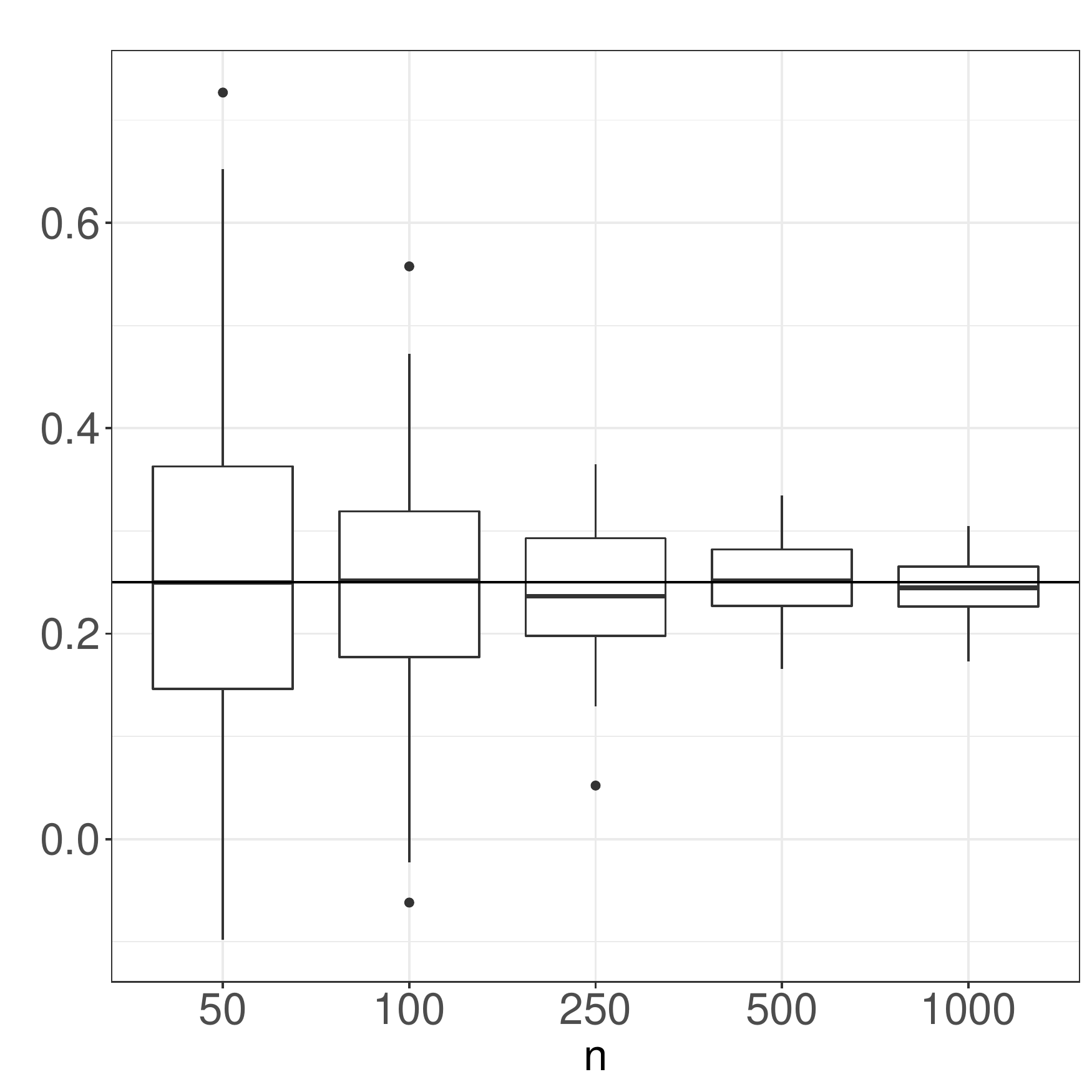}
  \includegraphics[scale=0.28]{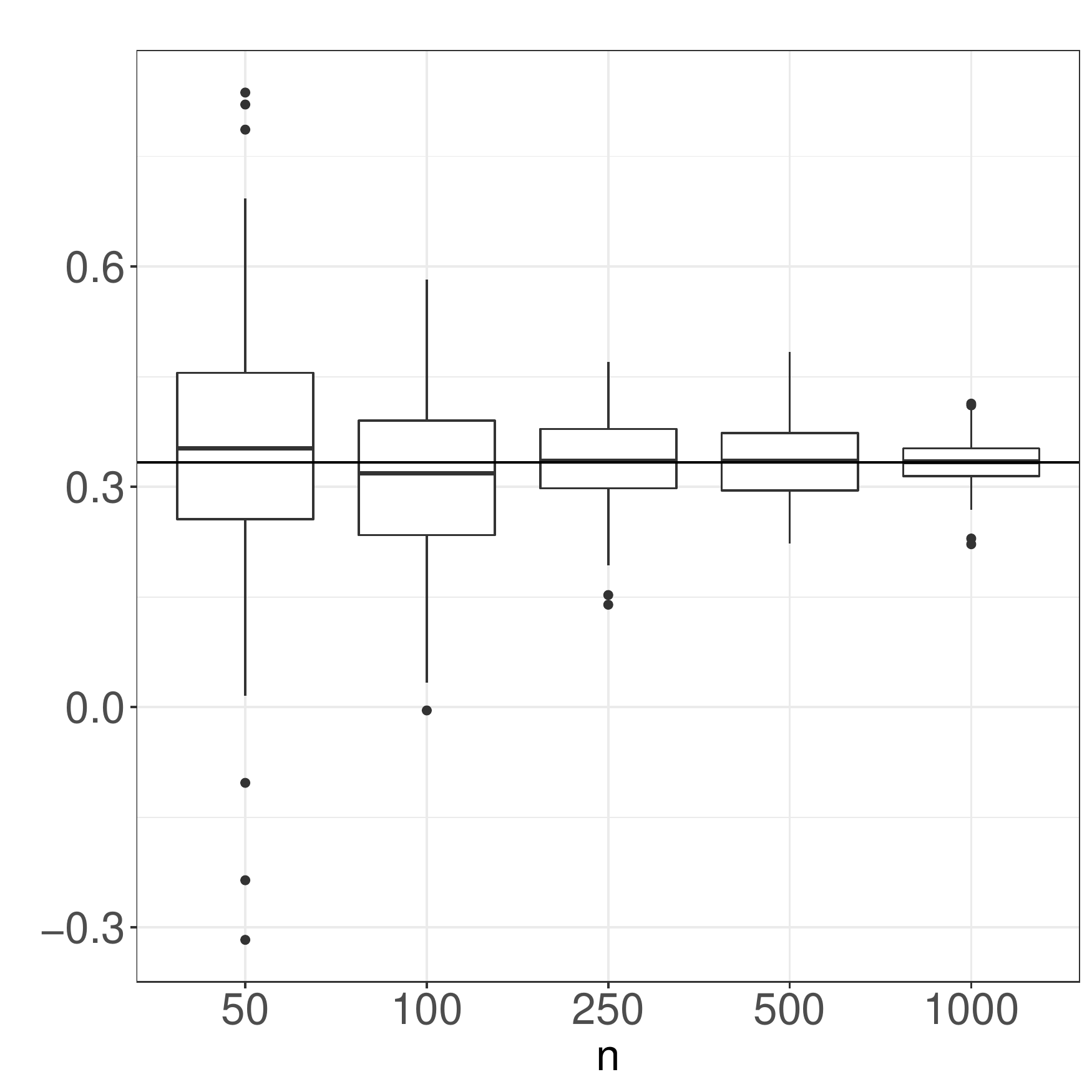}
  \includegraphics[scale=0.28]{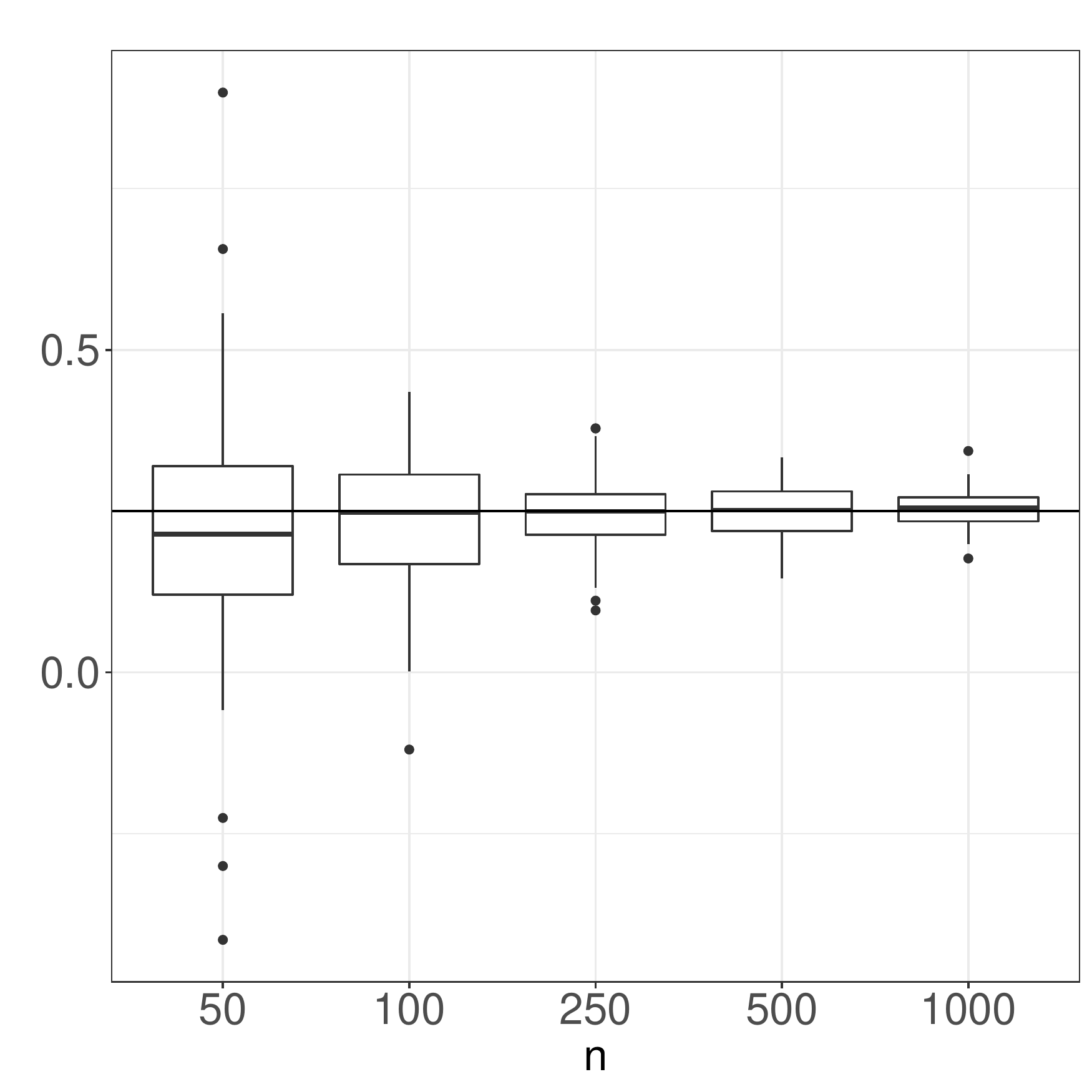}
  \caption{Boxplots for the estimations of $\gamma_2^\star=1/4$ in Model (\ref{eq:mut_Wt}) with no regressor and $q=2$ (left), $\gamma_2^\star=1/3$ in Model (\ref{eq:mut_Wt}) 
with no regressor and $q=3$ (middle) and of $\gamma_3^\star=1/4$
    in Model (\ref{eq:mut_Wt}) with no regressor and $q=3$ (right).
\textcolor{black}{The horizontal lines correspond to the true values of the parameters.}
  \label{fig:estim:gam2_3}}
\end{figure}

Moreover, it has to be noticed that in this particular context where there are no covariates ($p=0$), the
performance of our approach in terms of parameters estimation is similar to the one of the package \texttt{glarma}
described in \cite{glarma:package}.

%% MA(3) ou MA(4) : boxplots qui montrent que l'estimation s'améliore quand $n$ augmente sans régresseurs

%% comparaison au package GLARMA, dire qu'on est capable de calculer des AIC aussi et que memes résultats que nous.
%%  Prendre un cas de MA non inversible que GLARMA ne sait pas gérer

\subsubsection{Estimation of the parameters when $p\geq 1$ and $\boldsymbol{\beta}^\star$ is sparse}\label{sec:sparse_estim}

In this section, we assess the performance of our methodology in terms of support recovery, namely the identification of the
non null coefficients of $\boldsymbol{\beta}^\star$, and of the estimation of $\boldsymbol{\gamma}^\star$. We shall consider $Y_1,\dots,Y_n$ satisfying the model
defined by (\ref{eq:Yt}), (\ref{eq:mut_Wt}) and (\ref{eq:Zt}) with covariates chosen in a Fourier basis \textcolor{black}{ defined by
  $x_{t,i}=\cos(2 \pi i t f/n)$, when $i=1, \ldots, [p/2]$ and $x_{t,i} = \sin(2\pi i t f/n)$, when $i=[p/2] + 1, \ldots,  p$, with $t = 1, \ldots, n$ and $f=0.7$.
  Note that $[x]$ denotes the integer part of $x$.
  Here} $n=1000$ in the first two paragraphs \textcolor{black}{
  (``Estimation of the support of $\boldsymbol{\beta}^\star$'' and ``Estimation of $\boldsymbol{\gamma}^\star$'')},
$q\in\{1,2,3\}$, $p=100$ and two sparsity levels (5\% or 10\% of non null coefficients in $\boldsymbol{\beta}^\star$). More precisely, when the sparsity level is
5\% (resp. 10\%)  all the $\beta_i^\star$'s are assumed to be equal to zero except for five (resp. ten) of them:
  $\beta_1^\star=1.73$, $\beta_3^\star=0.38$, $\beta_{17}^\star=0.29$, $\beta_{33}^\star=-0.64$ and $\beta_{44}^\star=-0.13$
(resp. $\beta_1^\star=1.73$, $\beta_3^\star=1.2$, $\beta_5^\star=0.67$, $\beta_{10}^\star=0.5$, $\beta_{14}^\star=-0.38$, $\beta_{17}^\star=0.29$,
$\beta_{30}^\star=-0.64$, $\beta_{33}^\star=-0.13$, $\beta_{38}^\star=-0.1$ and $\beta_{44}^\star=-0.07$).
%are given in the caption of Figure \ref{fig:TPR:FPR:1} (resp.\,in the caption of Figure \ref{fig:TPR:FPR:1:10} given in the Appendix).
% $\beta_1^\star=1.73$, $\beta_3^\star=0.38$, $\beta_{17}^\star=0.29$, $\beta_{33}^\star=-0.64$ and $\beta_{44}^\star=-0.13$ (resp.
% $\beta_1^\star=1.73$, $\beta_3^\star=1.2$, $\beta_5^\star=0.67$, $\beta_{10}^\star=0.5$, $\beta_{14}^\star=-0.38$, $\beta_{17}^\star=0.29$,
% $\beta_{30}^\star=-0.64$, $\beta_{33}^\star=-0.13$, $\beta_{38}^\star=-0.1$ and $\beta_{44}^\star=-0.07$). 
%\textcolor{blue}{for which the values are given in the caption of Figure \ref{fig:TPR:FPR:1} (resp.\,in the caption of Figure \ref{fig:TPR:FPR:1:10} given in the Appendix).}
Other values of $n$ (150, 200, 500, 1000) will be considered in the third paragraph \textcolor{black}{(``Impact of the value of $n$'')}
to evaluate the impact of $n$ on the performance of our approach.

\vspace{20mm}

\textbf{Estimation of the support of $\boldsymbol{\beta}^\star$}

In this paragraph, we focus on the performance of our approach for retrieving the support of $\boldsymbol{\beta}^\star$ by computing the TPR
  (True Positive Rates, namely the proportion of non-null coefficients correctly estimated as non null)
  and FPR (False Positive Rates, namely the proportion of null coefficients estimated as non null).
We shall consider
the two methods that are proposed in Section \ref{sec:variable}: standard stability selection (\verb|ss_cv| and \verb|ss_min|) and fast stability selection (\verb|fast_ss|). For comparison purpose, we shall
also consider the standard Lasso approach proposed by \cite{friedman:hastie:tibshirani:2010} in GLM where the parameter $\lambda$ is either chosen
thanks to the standard cross-validation (\verb|lasso_cv|) or by taking the optimal $\lambda$ which maximizes the difference between
the TPR and FPR (\verb|lasso_best|).

Figures \ref{fig:TPR:FPR:1}, \ref{fig:TPR:FPR:2} and \ref{fig:TPR:FPR:3} display the TPR and FPR of the previously mentioned approaches
with respect to the threshold defined at the end of Section \ref{sec:variable}
when $n=1000$, the sparsity level is equal to 5\% and $q=1$, 2 and 3, respectively.
We can see from these figures that when the threshold is well tuned, our approaches outperform the classical Lasso even when the parameter
$\lambda$ is chosen in an optimal way. More precisely, the thresholds 0.4, 0.7 and 0.8 achieve a satisfactory trade-off between the TPR and the FPR
for \verb|fast_ss|, \verb|ss_cv| and \verb|ss_min|, respectively.
The conclusions are similar in the case where the sparsity level is equal to 10\%, the corresponding figures (\ref{fig:TPR:FPR:1:10}, \ref{fig:TPR:FPR:2:10}
and \ref{fig:TPR:FPR:3:10}) are given in the Appendix.
We can observe from these figures that the performance of \verb|fast_ss| are slightly better than \verb|ss_cv| and \verb|ss_min| when the
sparsity level is equal to 5\% but it is the reverse when the sparsity level is equal to 10\%.

\begin{figure}[!htbp]
  \centering
  \includegraphics[scale=0.28]{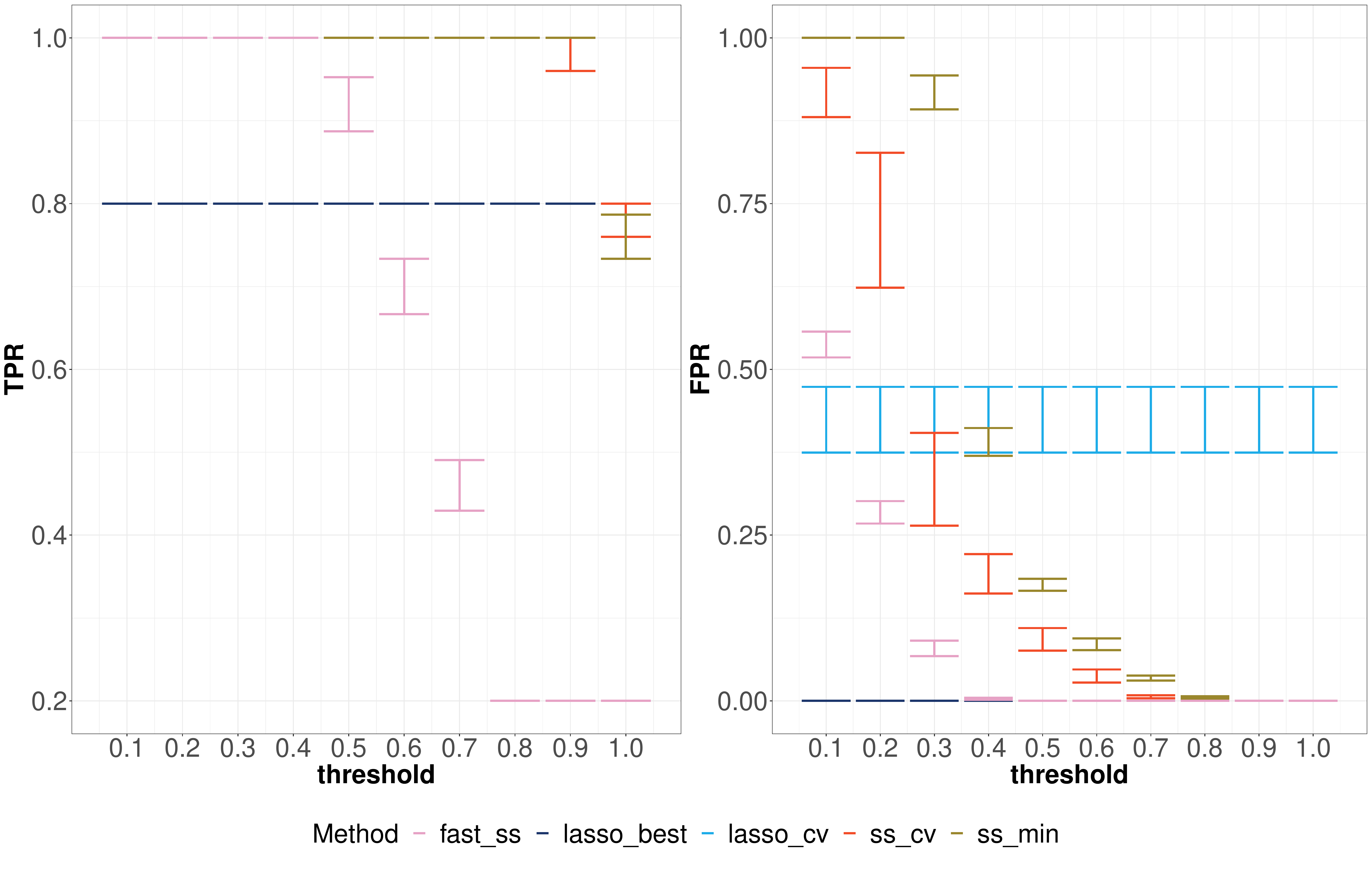}
  \caption{Error bars of the TPR and FPR associated to the support recovery of $\boldsymbol{\beta}^\star$ for five methods with respect to the thresholds when $n=1000$, $q=1$, $p=100$ and a 5\% sparsity level. \textcolor{black}{The error bars of the TPR of  $\mathsf{lasso\_cv}$ and $\mathsf{lasso\_best}$ coincide.}
    %All the $\beta_i^\star=0$ except for five of them: $\beta_1^\star=1.73$, $\beta_3^\star=0.38$, $\beta_{17}^\star=0.29$, $\beta_{33}^\star=-0.64$ and $\beta_{44}^\star=-0.13$.
    \label{fig:TPR:FPR:1}}
\end{figure}

\begin{figure}[!htbp]
  \centering
  \includegraphics[scale=0.28]{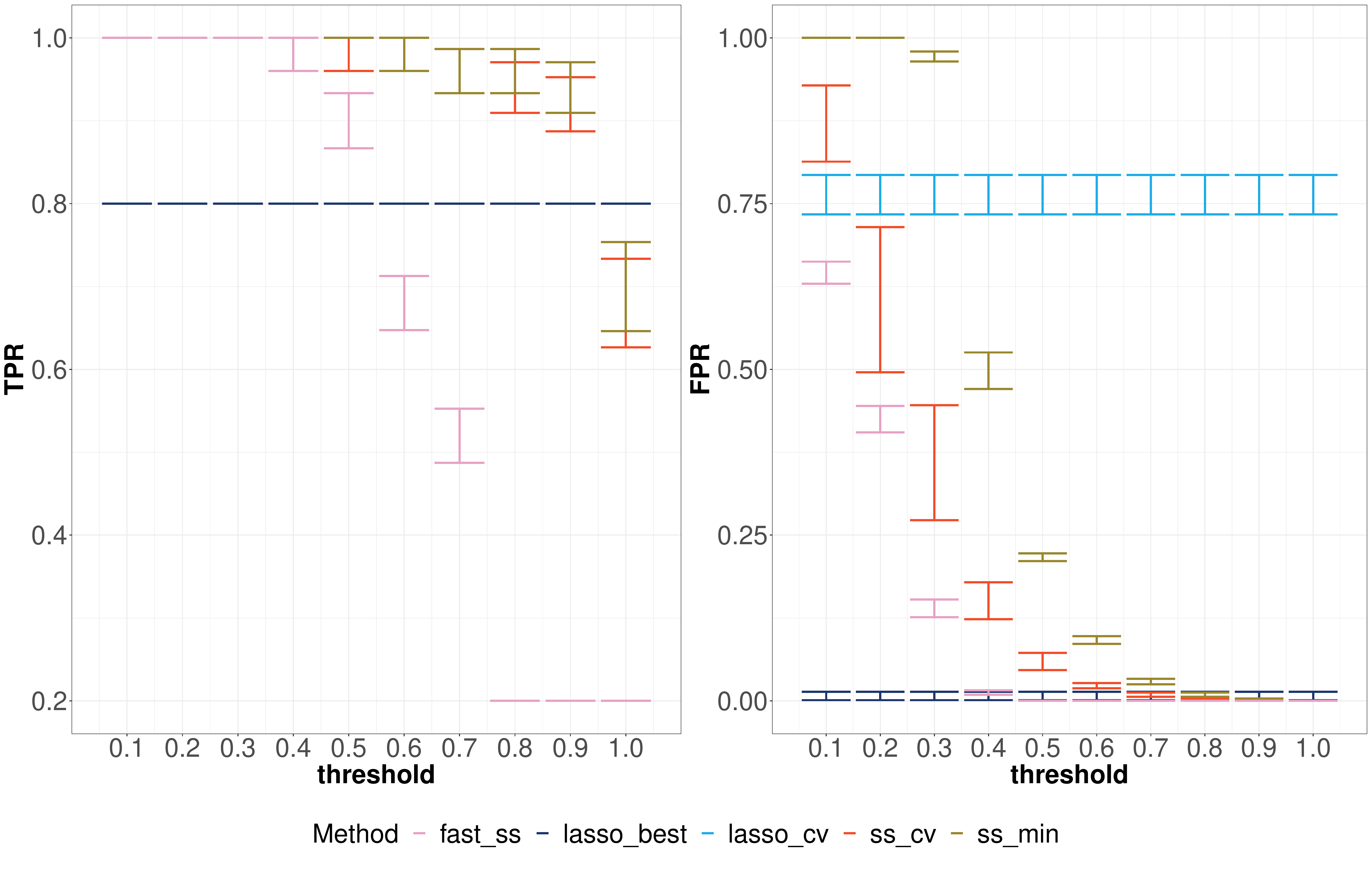}
  \caption{Error bars of the TPR and FPR associated to the support recovery of $\boldsymbol{\beta}^\star$ for five methods with respect to the thresholds when $n=1000$, $q=2$, $p=100$ and a 5\% sparsity level. \textcolor{black}{The error bars of the TPR of  $\mathsf{lasso\_cv}$ and $\mathsf{lasso\_best}$ coincide.}
%    All the $\beta_i^\star=0$ except for five of them: $\beta_1^\star=1.73$, $\beta_3^\star=0.38$, $\beta_{17}^\star=0.29$, $\beta_{33}^\star=-0.64$ and $\beta_{44}^\star=-0.13$.
    \label{fig:TPR:FPR:2}}
\end{figure}

\begin{figure}[!htbp]
  \centering
  \includegraphics[scale=0.28]{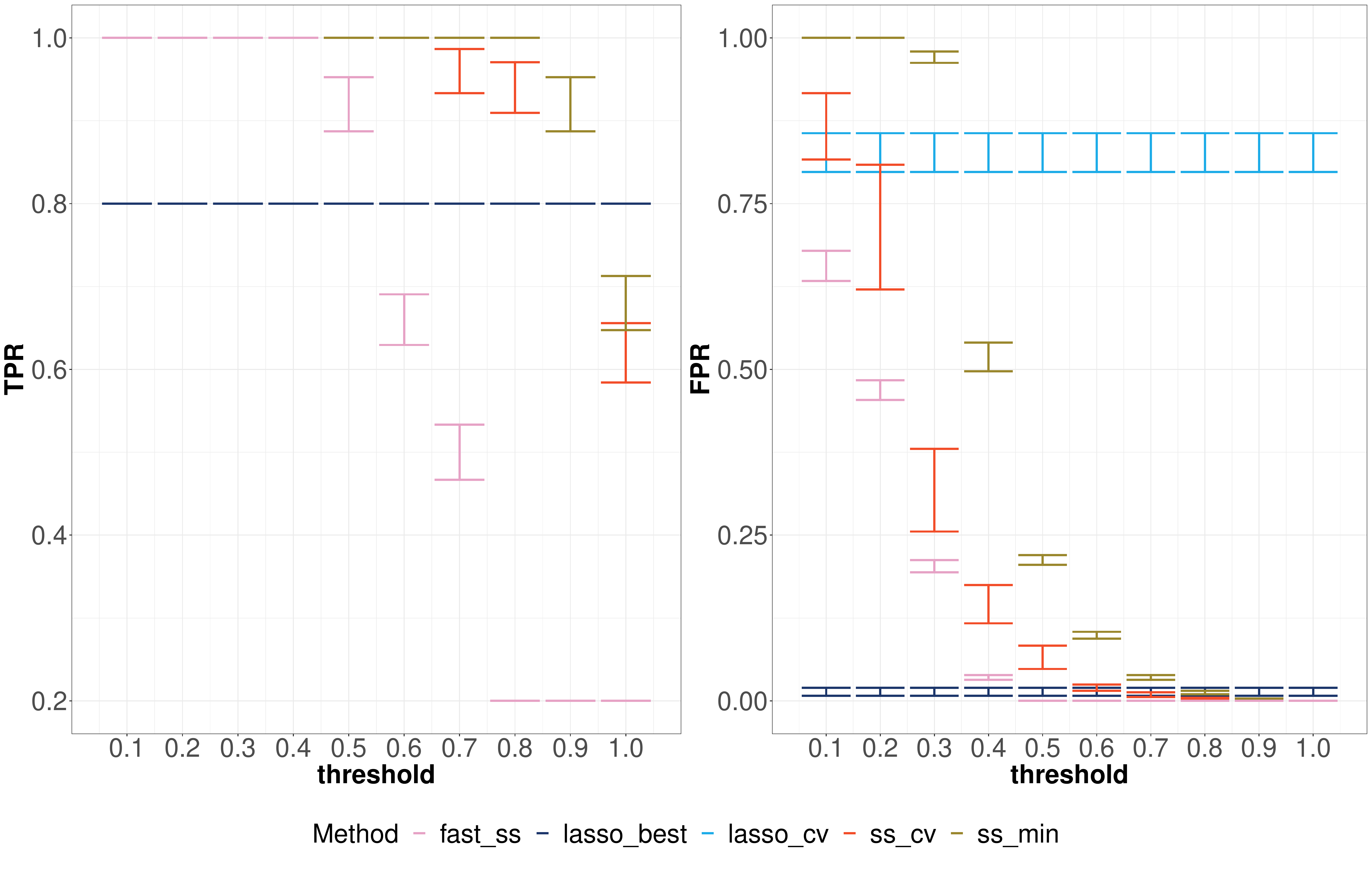}
  \caption{Error bars of the TPR and FPR associated to the support recovery of $\boldsymbol{\beta}^\star$ for five methods with respect to the thresholds when $n=1000$, $q=3$, $p=100$ and a 5\% sparsity level. \textcolor{black}{The error bars of the TPR of  $\mathsf{lasso\_cv}$ and $\mathsf{lasso\_best}$ coincide.}
    %All the $\beta_i^\star=0$ except for five of them: $\beta_1^\star=1.73$, $\beta_3^\star=0.38$, $\beta_{17}^\star=0.29$, $\beta_{33}^\star=-0.64$ and $\beta_{44}^\star=-0.13$.
    \label{fig:TPR:FPR:3}}
\end{figure}

We also compare our approach with the method implemented in the \texttt{glarma} package of \cite{glarma:package} in the case where $q=1$ and
when the sparsity level is equal to 5\%.
Since this method is not devised for performing variable selection, we consider that a given component of $\boldsymbol{\beta}^\star$ is estimated by 0
if its estimation obtained by the \texttt{glarma} package is smaller than a given threshold.
The results are displayed in Figure \ref{fig:TPR:FPR:glarma} for different thresholds ranging from $10^{-9}$ to 0.1.
We can see from this figure that for the best choice of the threshold the results of the variable selection provided by the \texttt{glarma} package
underperform our method.

\begin{figure}[!htbp]
  \centering
  \includegraphics[scale=0.28]{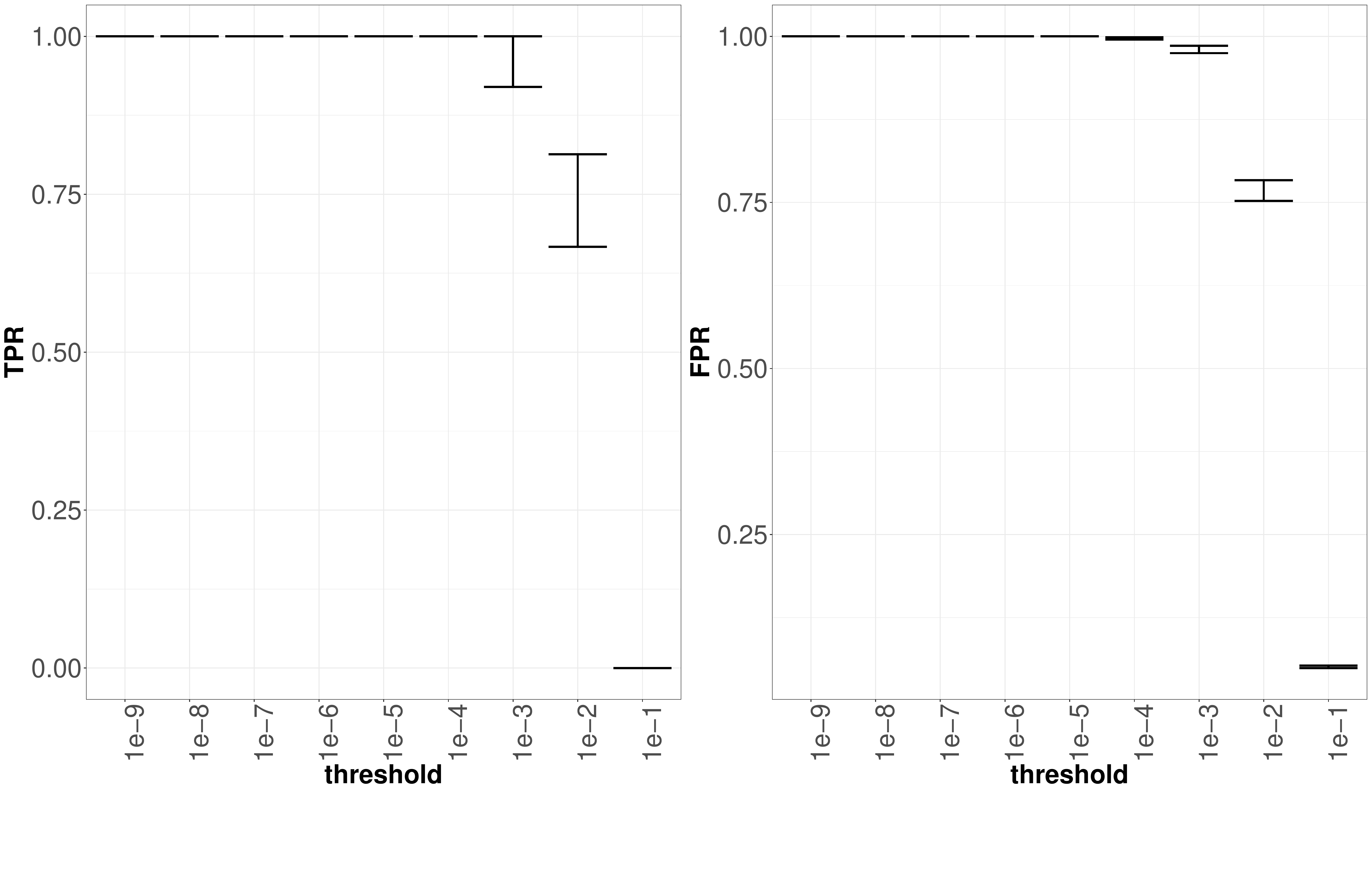}
  \caption{Error bars of the TPR and FPR associated to the support recovery of $\boldsymbol{\beta}^\star$ obtained with the \texttt{glarma} package
    for different thresholds when $n=1000$, $q=1$, $p=100$ and a 5\% sparsity level.
    %All the $\beta_i^\star=0$ except for five of them: $\beta_1^\star=1.73$, $\beta_3^\star=0.38$, $\beta_{17}^\star=0.29$, $\beta_{33}^\star=-0.64$ and $\beta_{44}^\star=-0.13$.
 \label{fig:TPR:FPR:glarma}}
\end{figure}

\textbf{Estimation of $\boldsymbol{\gamma}^\star$}

Figures \ref{fig:gamma:5:cv}, \ref{fig:gamma:5:fast} and \ref{fig:gamma:5:min} display the boxplots for the estimations of $\boldsymbol{\gamma}^\star$ in Model
(\ref{eq:mut_Wt}) with a 5\% sparsity level and $q=1,2,3$ obtained by \verb|ss_cv|, \verb|fast_ss| and \verb|ss_min|, respectively.
The threshold chosen for each of these methods is the one achieving a satisfactory trade-off between the TPR and the FPR, namely 0.7, 0.4 and 0.8.
We can see from these figures that all these approaches provide accurate estimations of $\boldsymbol{\gamma}^\star$ from the second iteration.
The conclusions are similar in the case where the sparsity level is equal to 10\%, the corresponding figures \ref{fig:gamma:10:cv},
\ref{fig:gamma:10:fast} and \ref{fig:gamma:10:min} are given in the Appendix.

\begin{figure}[!htbp]
  \begin{center}
\begin{tabular}{ccc}
  \includegraphics[width=0.32\textwidth, height=4.5cm]{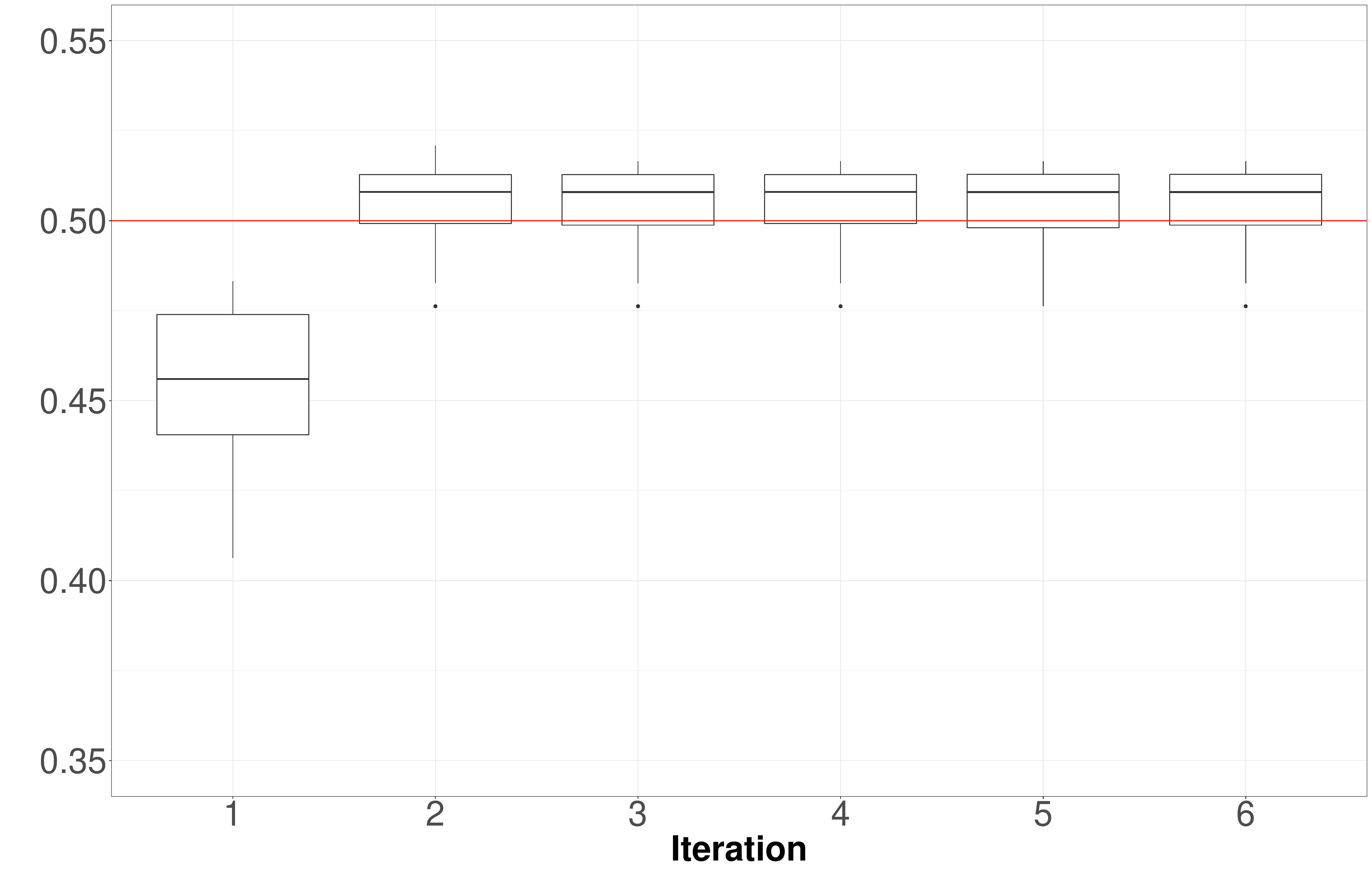}
  &  \includegraphics[width=0.32\textwidth, height=4.5cm]{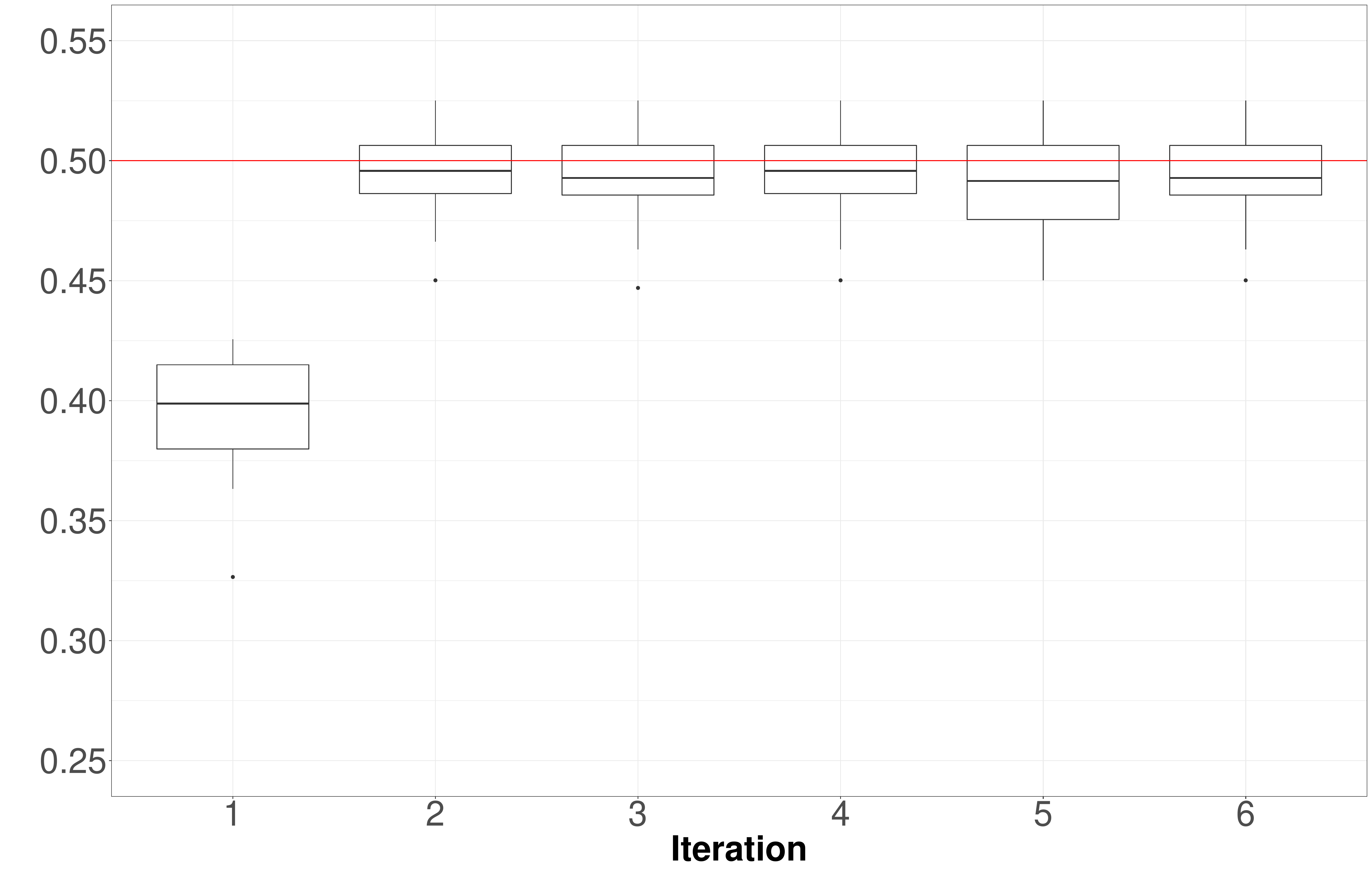} 
  & \includegraphics[width=0.32\textwidth, height=4.5cm]{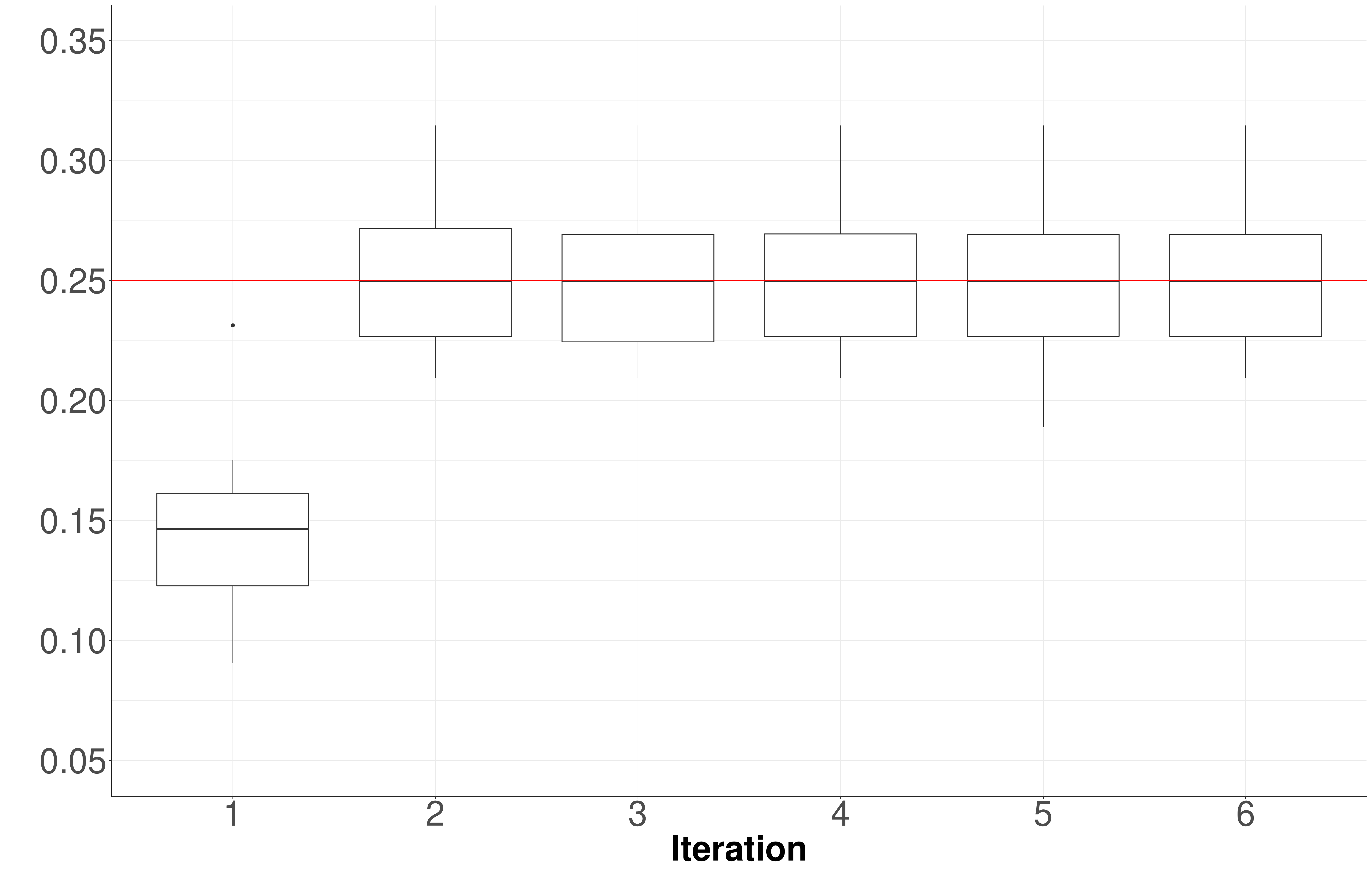}\\
\includegraphics[width=0.32\textwidth, height=4.5cm]{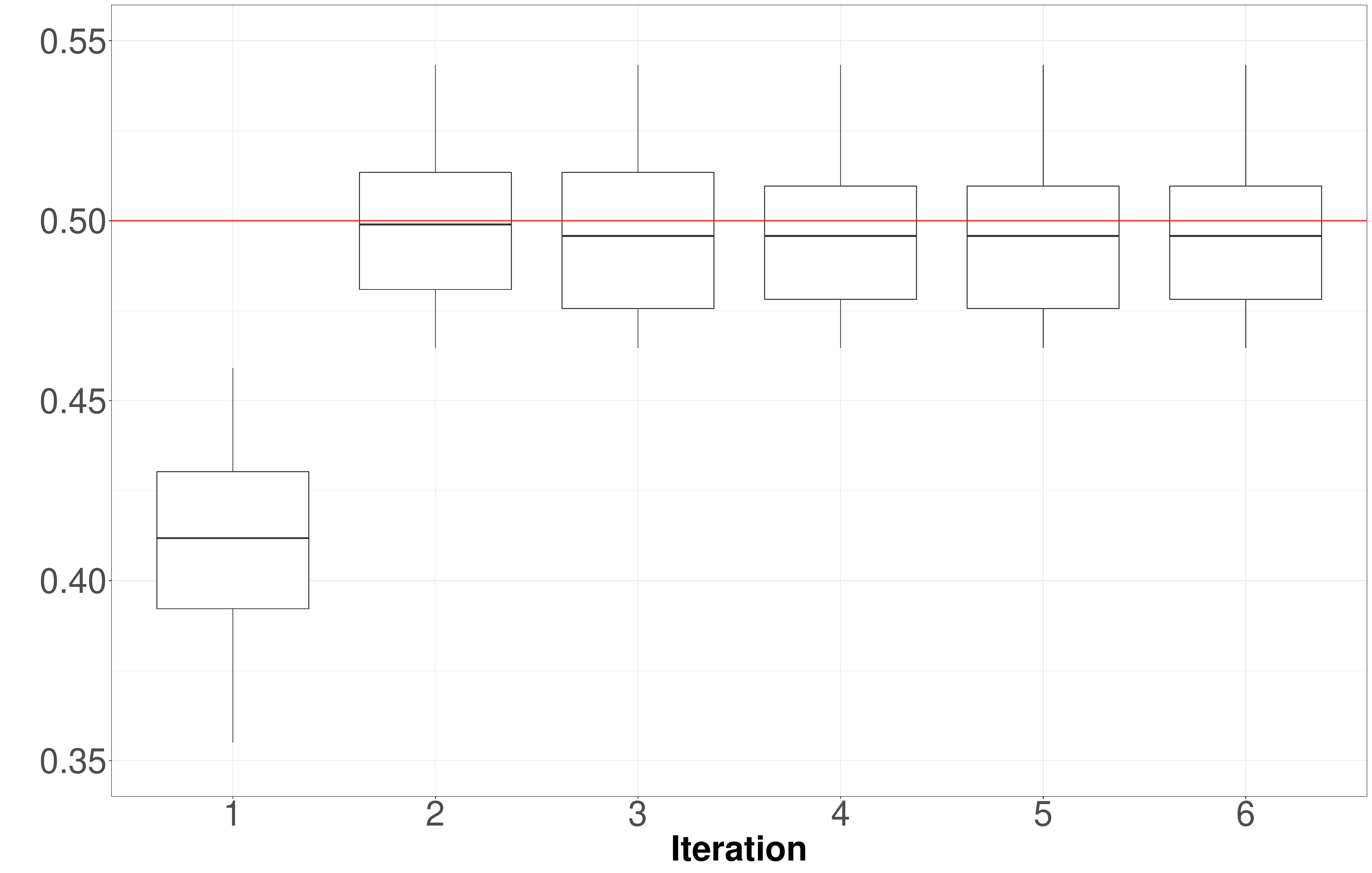}
  &  \includegraphics[width=0.32\textwidth, height=4.5cm]{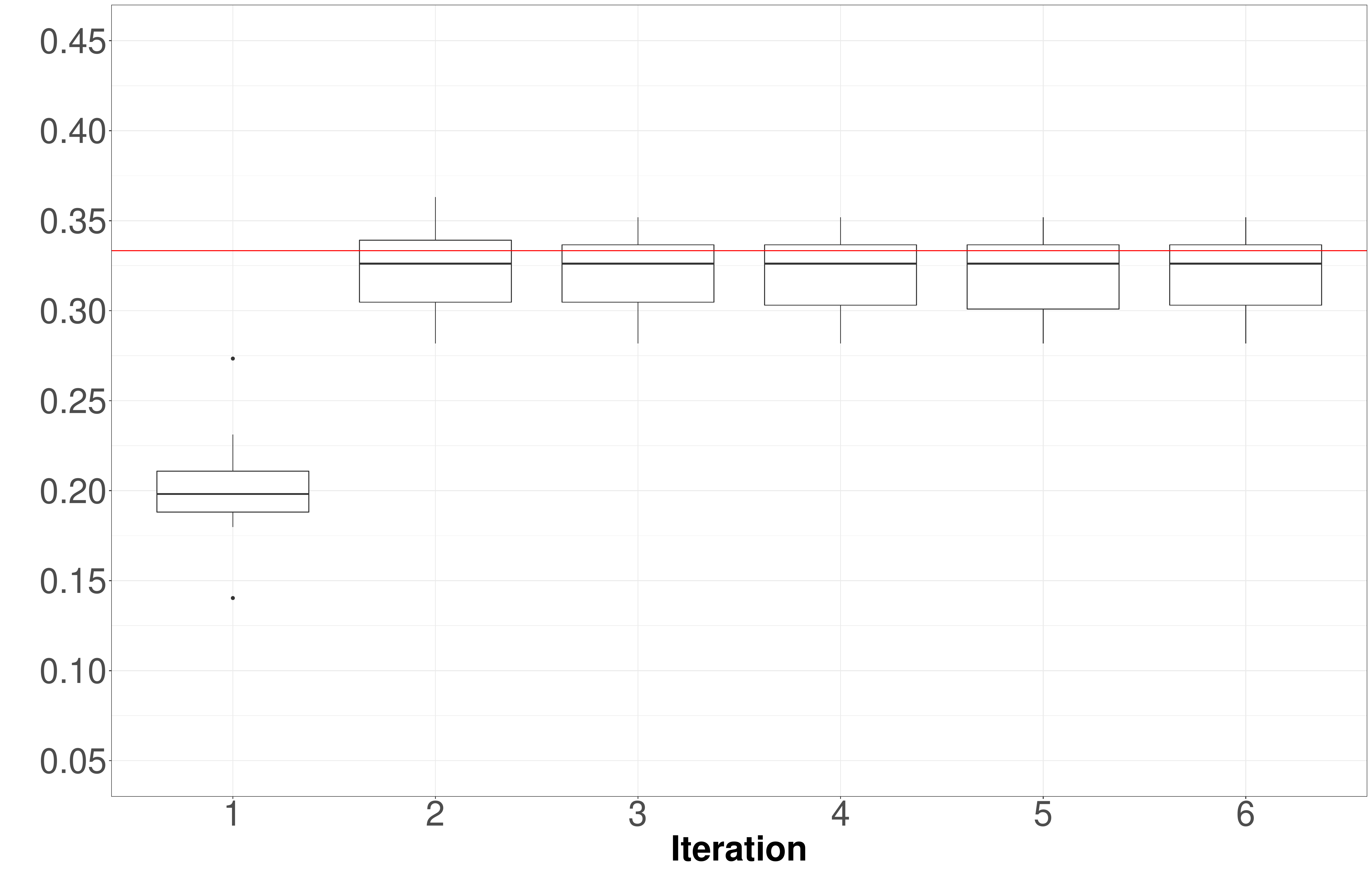} 
  & \includegraphics[width=0.32\textwidth, height=4.5cm]{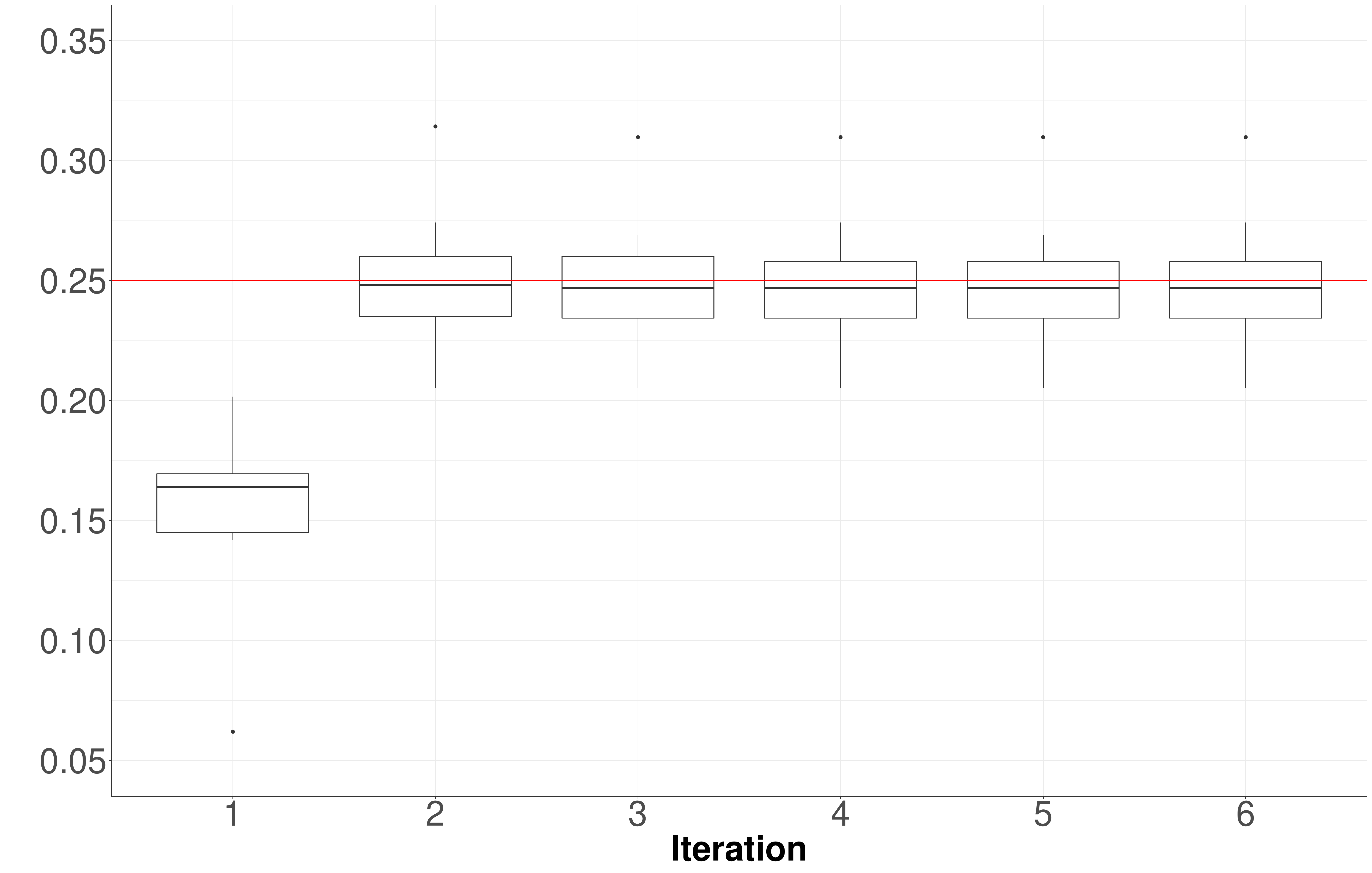}\\
\end{tabular}  
\caption{Boxplots for the estimations of $\boldsymbol{\gamma}^\star$ in Model (\ref{eq:mut_Wt}) with a 5\% sparsity level and $q=1,2,3$ obtained by
  \texttt{ss\_cv}.
Top: $q=1$ and $\gamma_1^\star=0.5$ (left), $q=2$ and $\gamma_1^\star=0.5$ (middle), $q=2$ and $\gamma_2^\star=0.25$ (right). Bottom: $q=3$ and $\gamma_1^\star=0.5$ (left), $q=3$ and  $\gamma_2^\star=1/3$ (middle), $q=3$ and $\gamma_3^\star=0.25$ (right).
The horizontal lines correspond to the values of the $\gamma_i^\star$'s. \label{fig:gamma:5:cv}}
 \end{center}
\end{figure}

\begin{figure}[!htbp]
  \begin{center}
\begin{tabular}{ccc}
  \includegraphics[width=0.32\textwidth, height=4.5cm]{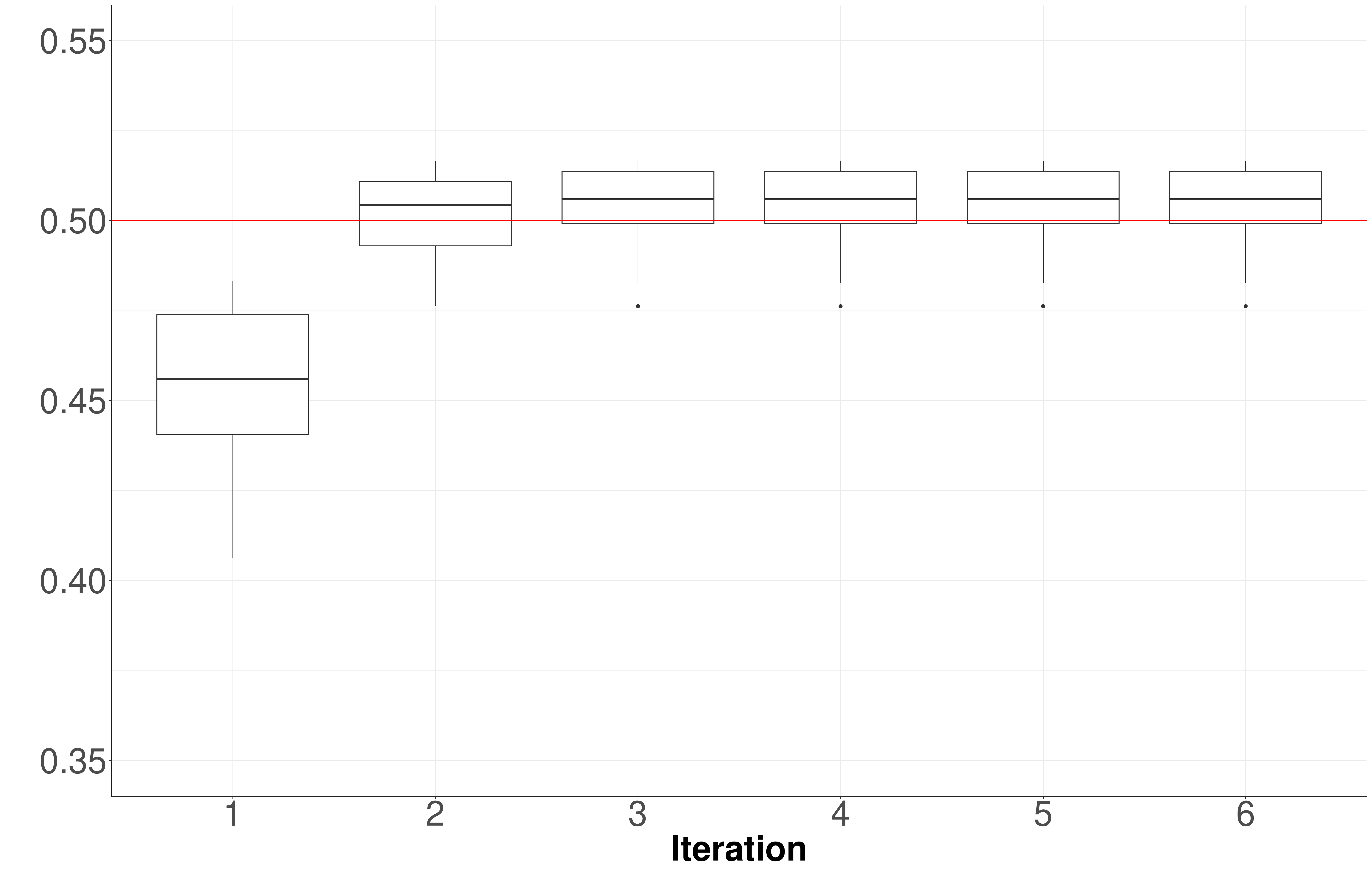}
  &  \includegraphics[width=0.32\textwidth, height=4.5cm]{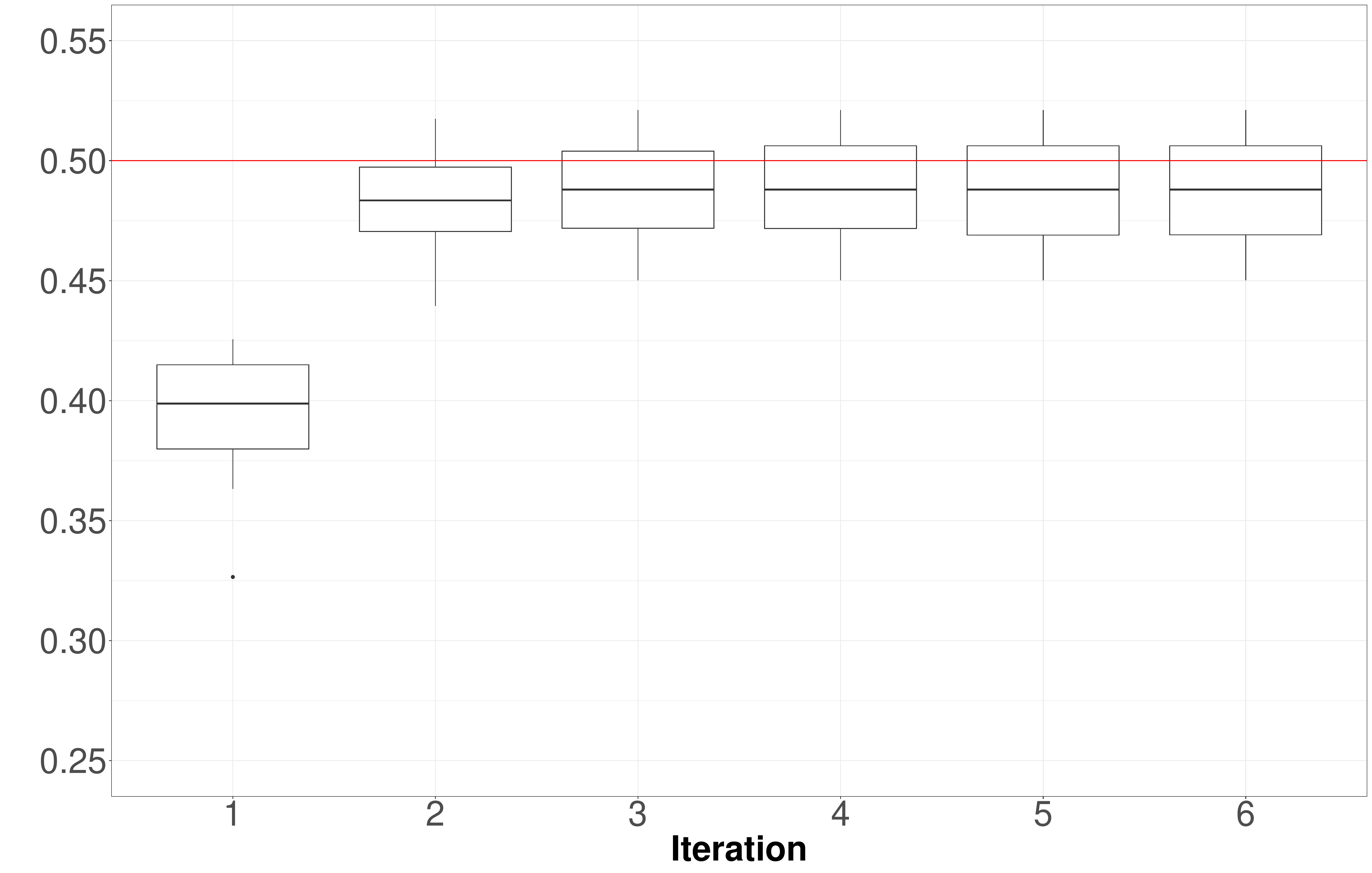} 
  & \includegraphics[width=0.32\textwidth, height=4.5cm]{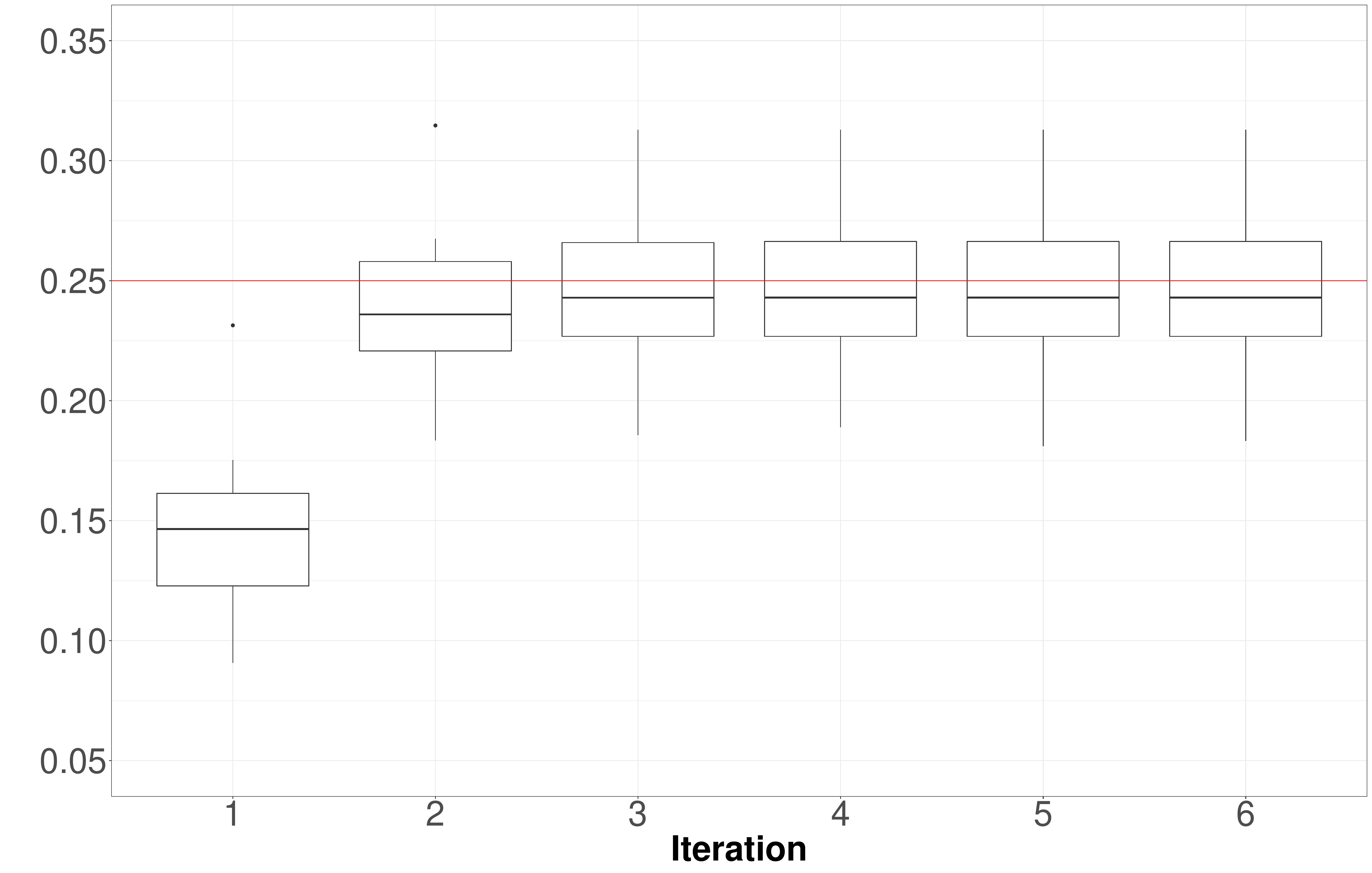}\\
\includegraphics[width=0.32\textwidth, height=4.5cm]{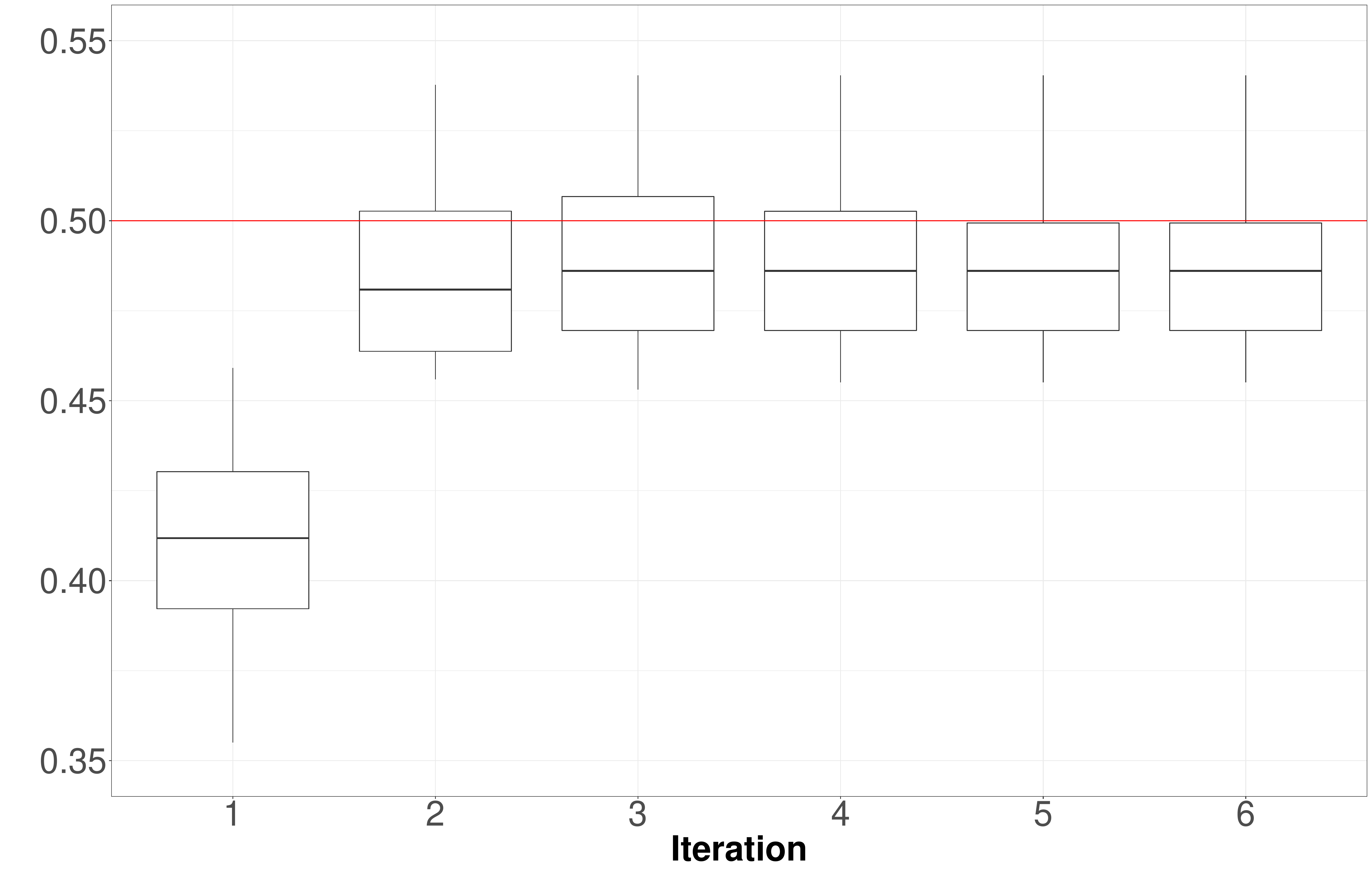}
  &  \includegraphics[width=0.32\textwidth, height=4.5cm]{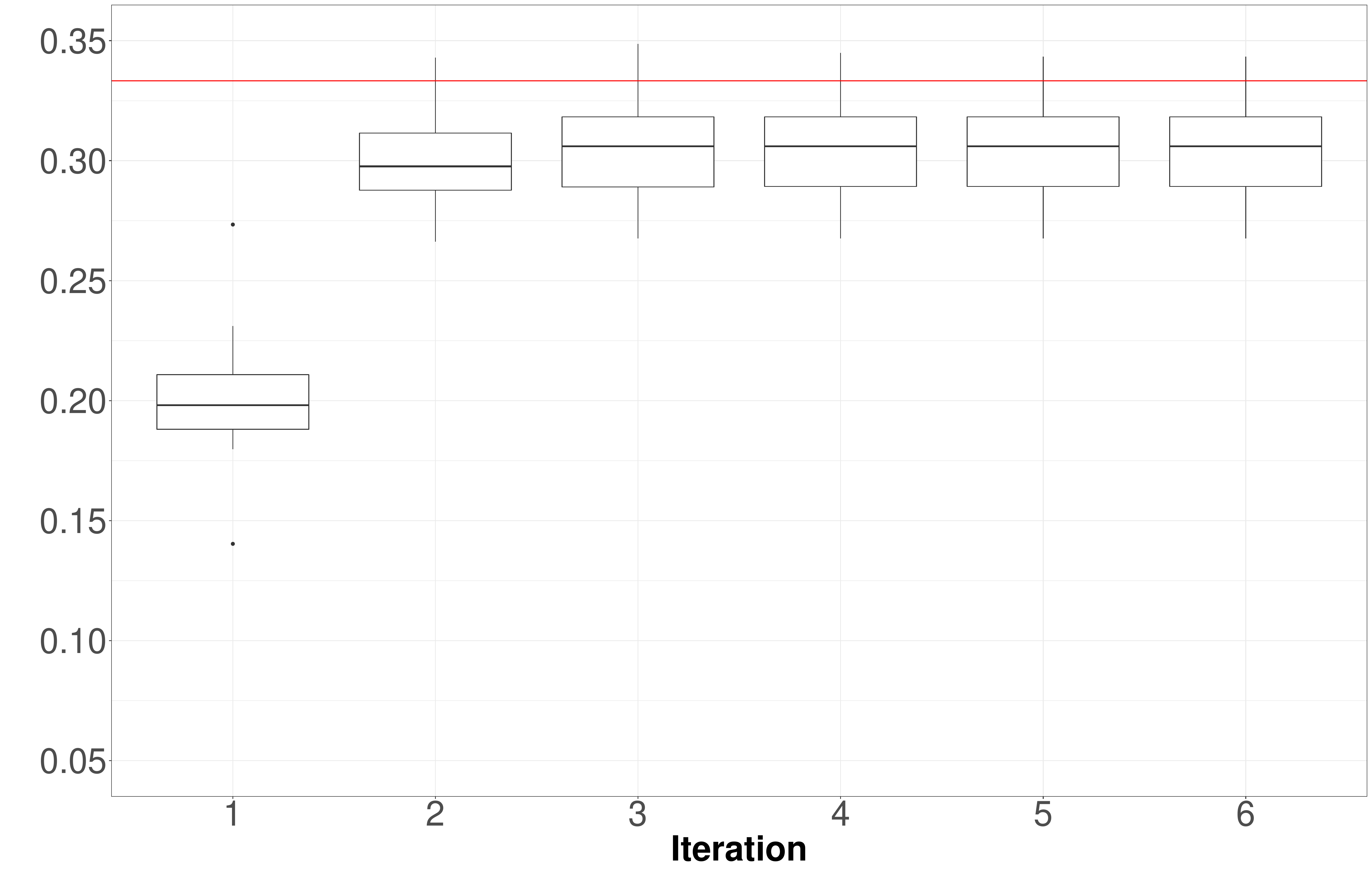} 
  & \includegraphics[width=0.32\textwidth, height=4.5cm]{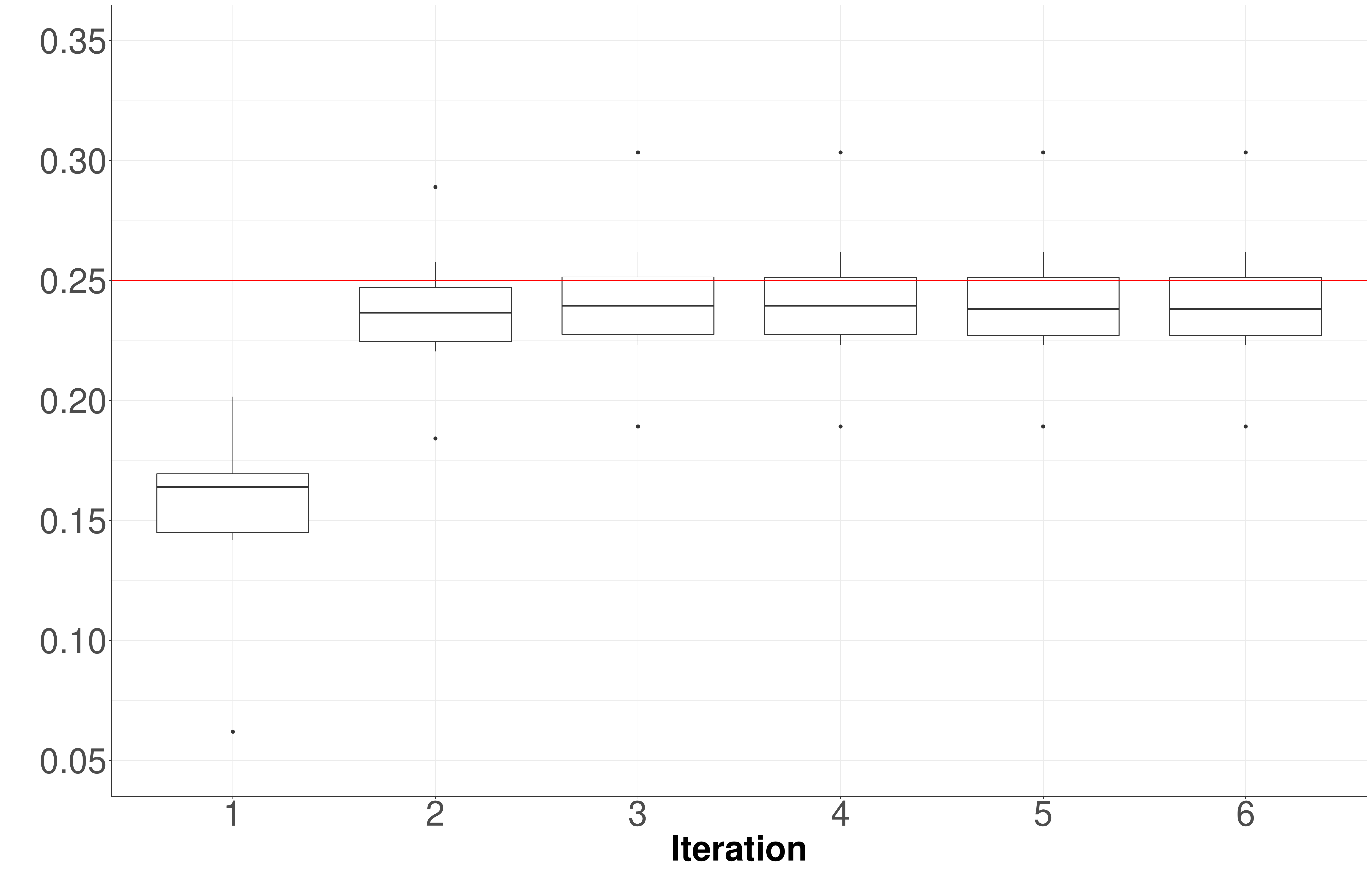}\\
\end{tabular}  
\caption{Boxplots for the estimations of $\boldsymbol{\gamma}^\star$ in Model (\ref{eq:mut_Wt}) with a 5\% sparsity level and $q=1,2,3$ obtained by
  \texttt{fast\_ss}.
 Top: $q=1$ and $\gamma_1^\star=0.5$ (left), $q=2$ and $\gamma_1^\star=0.5$ (middle), $q=2$ and $\gamma_2^\star=0.25$ (right). Bottom: $q=3$ and $\gamma_1^\star=0.5$ (left), $q=3$ and  $\gamma_2^\star=1/3$ (middle), $q=3$ and $\gamma_3^\star=0.25$ (right).
   The horizontal lines correspond to the values of the $\gamma_i^\star$'s.\label{fig:gamma:5:fast}}
 \end{center}
\end{figure}

\begin{figure}[!htbp]
  \begin{center}
\begin{tabular}{ccc}
  \includegraphics[width=0.32\textwidth, height=4.5cm]{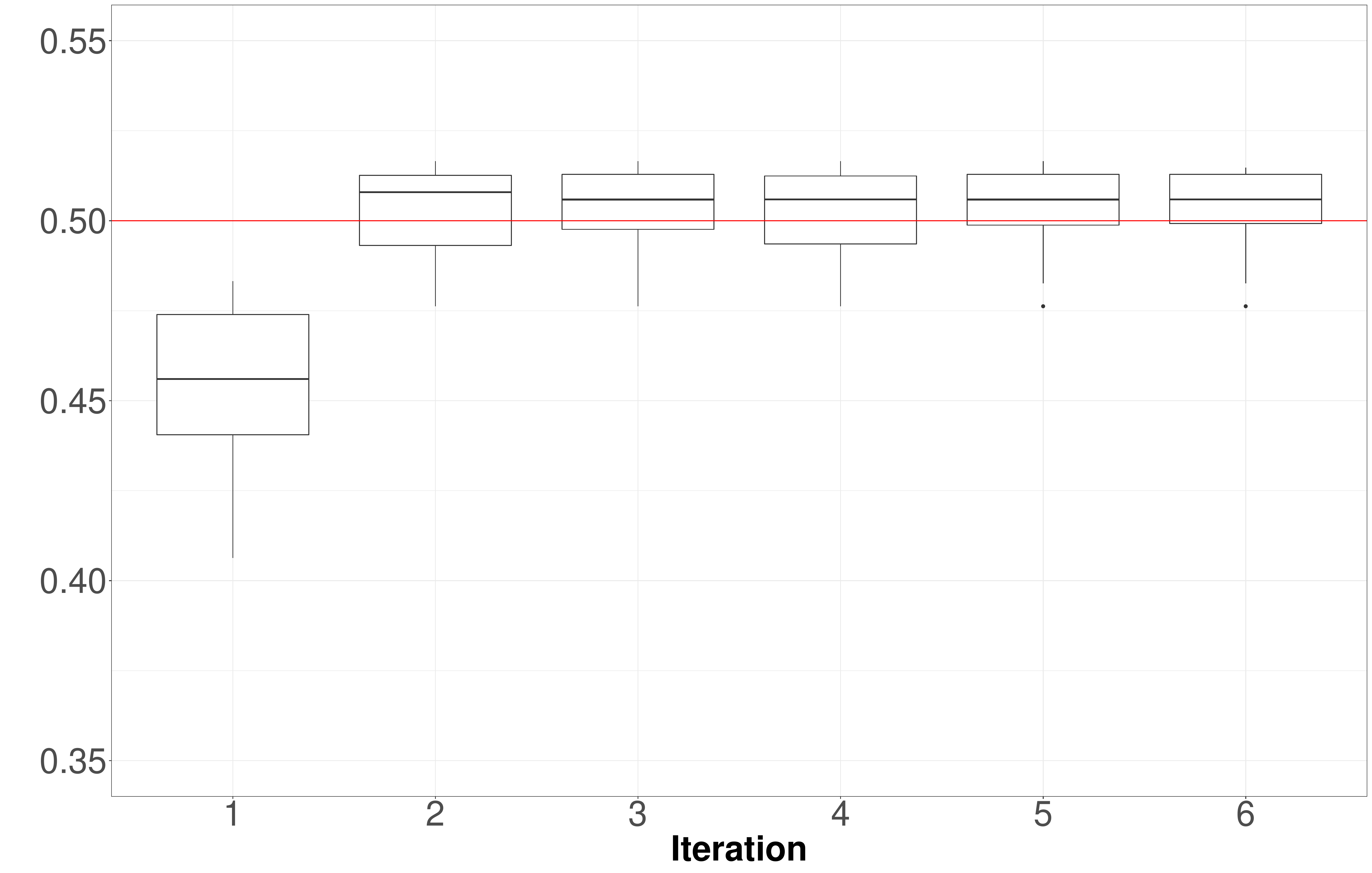}
  &  \includegraphics[width=0.32\textwidth, height=4.5cm]{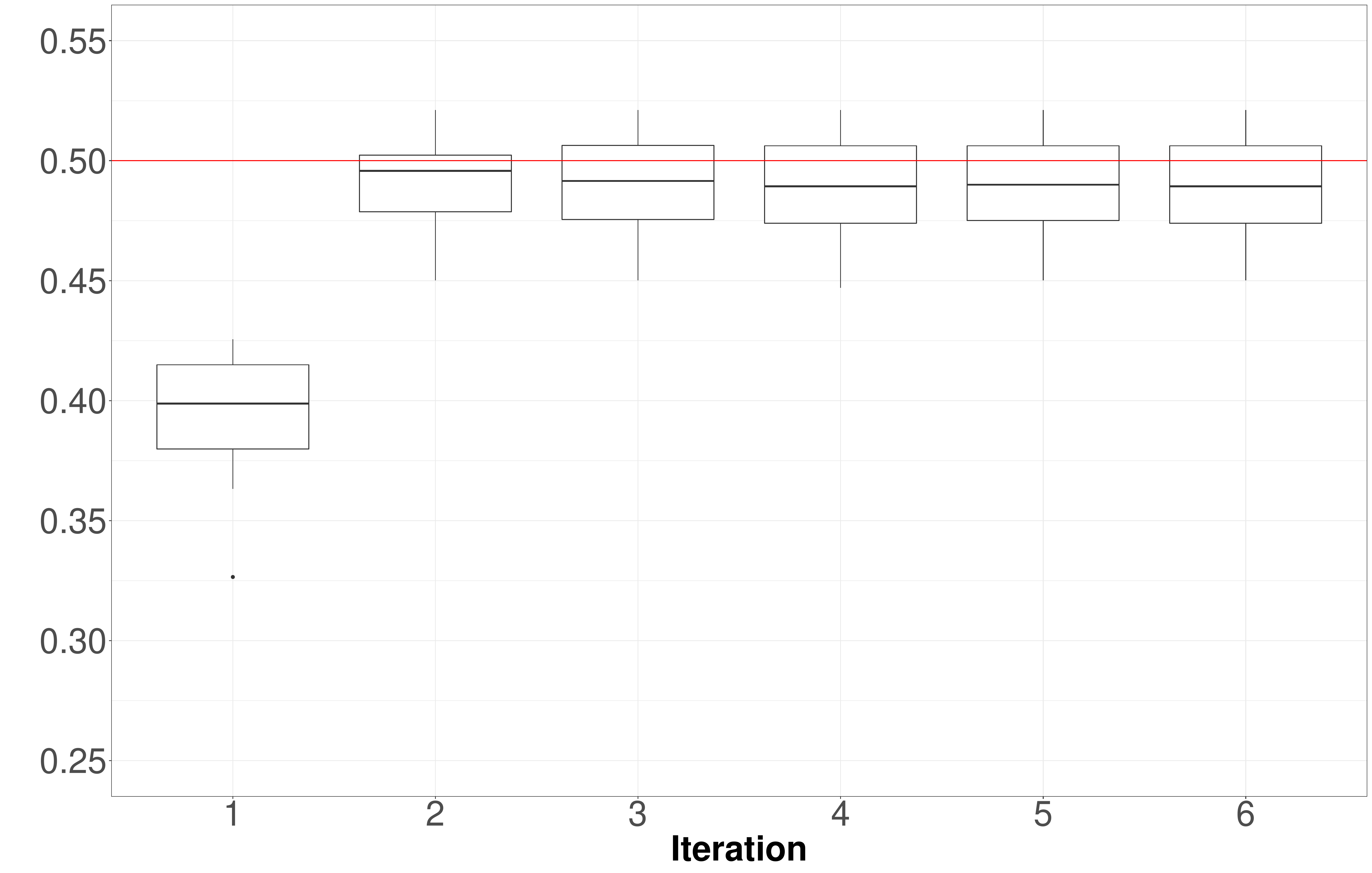} 
  & \includegraphics[width=0.32\textwidth, height=4.5cm]{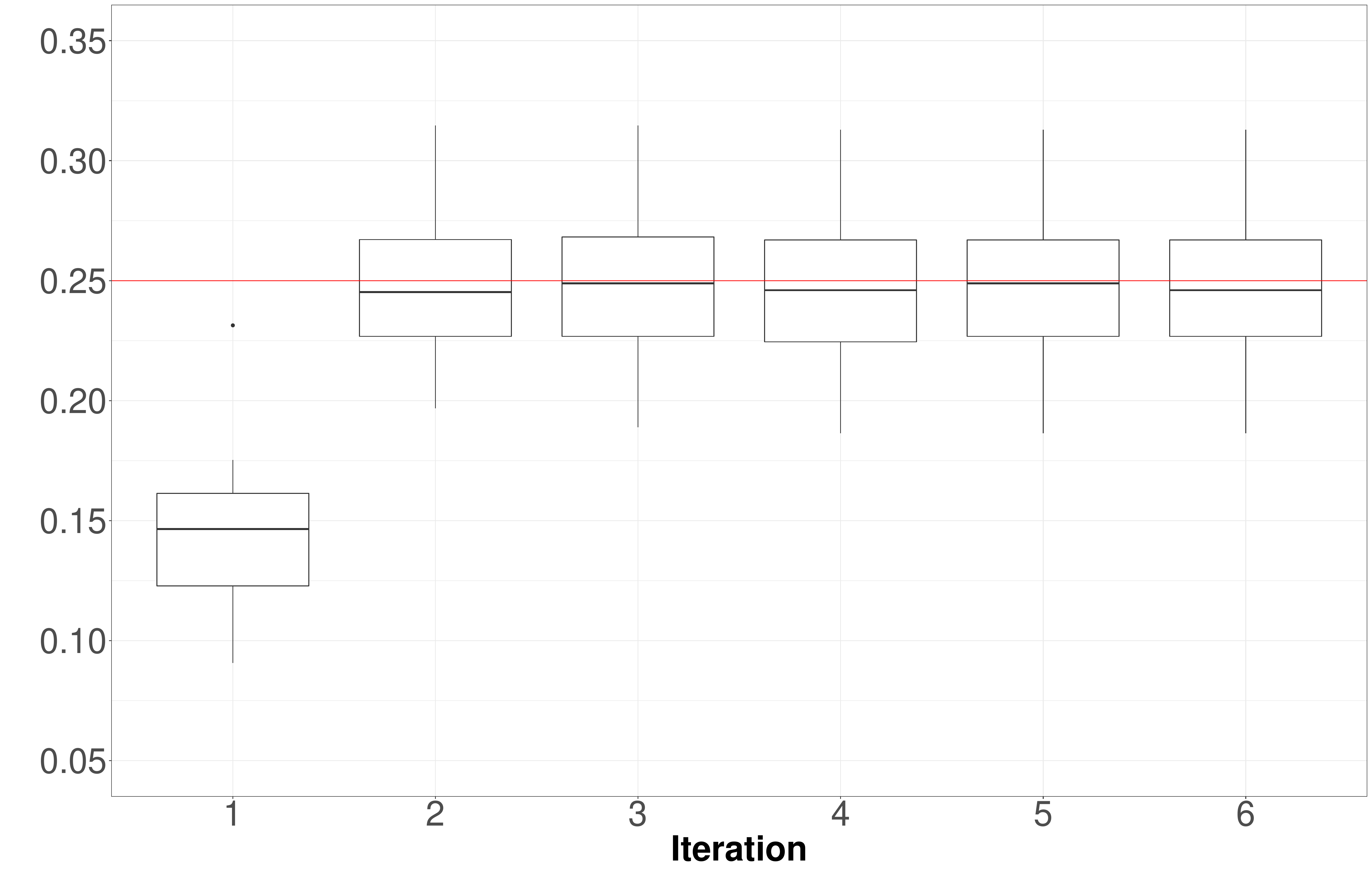}\\
\includegraphics[width=0.32\textwidth, height=4.5cm]{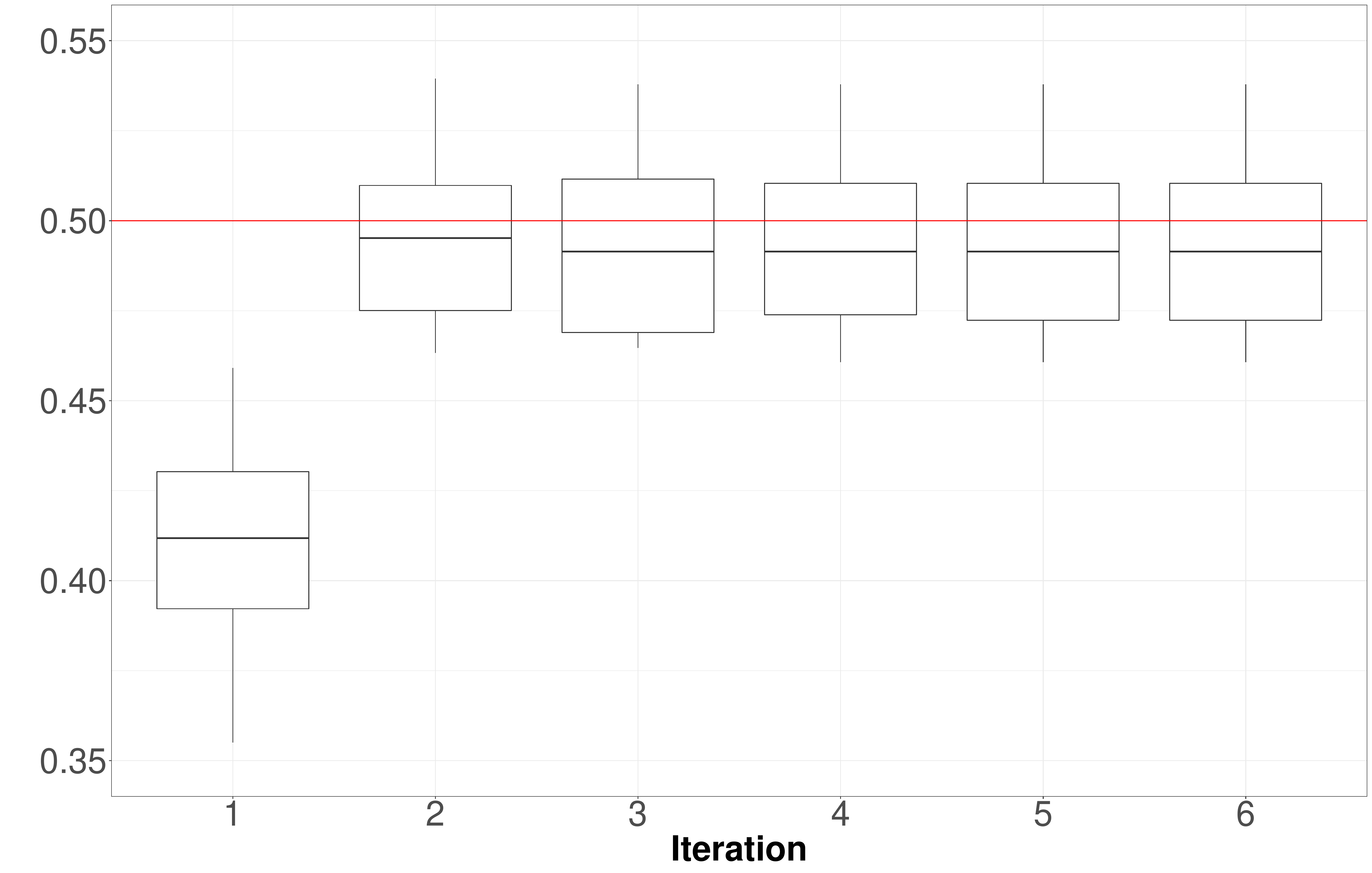}
  &  \includegraphics[width=0.32\textwidth, height=4.5cm]{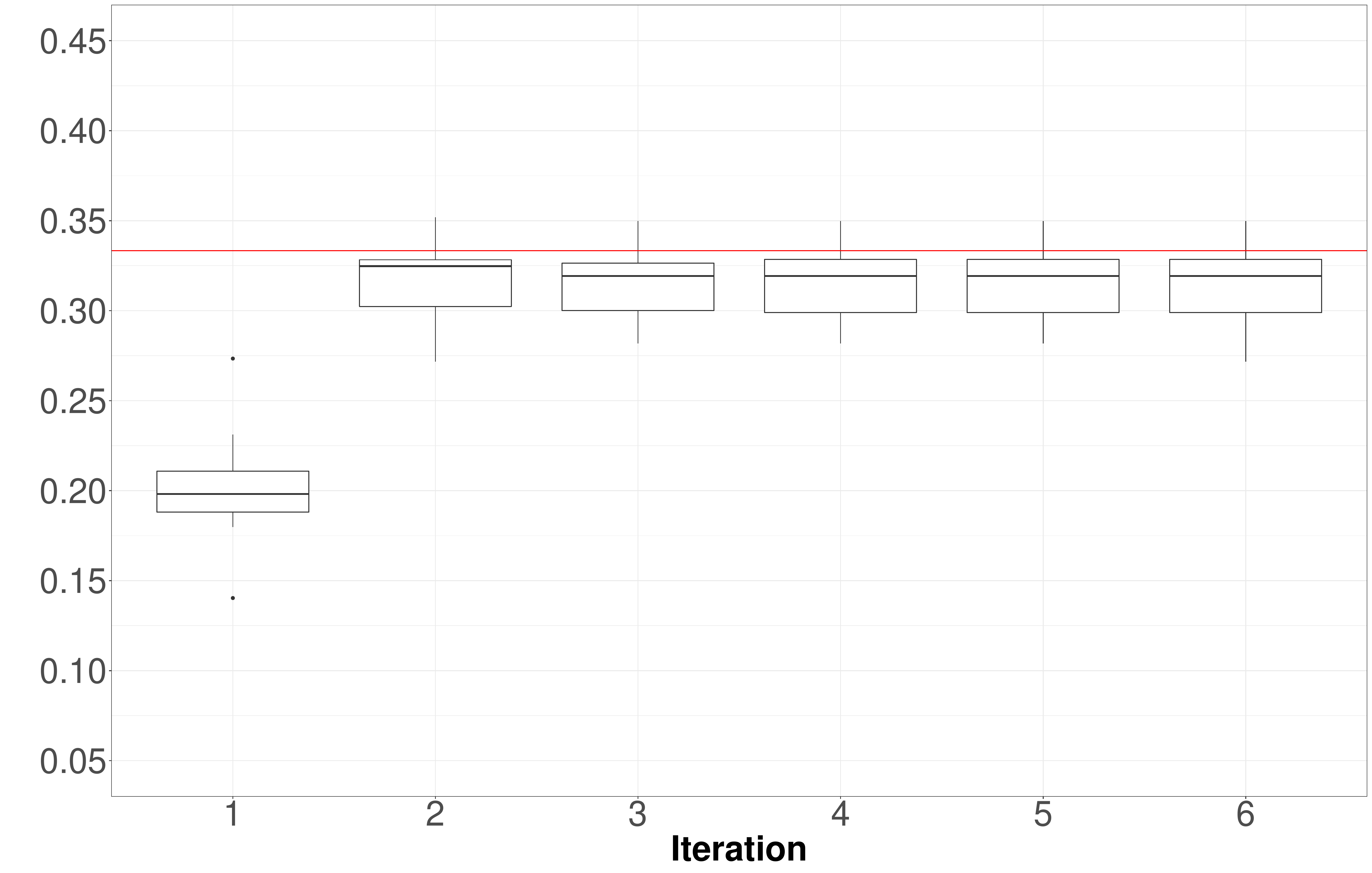} 
  & \includegraphics[width=0.32\textwidth, height=4.5cm]{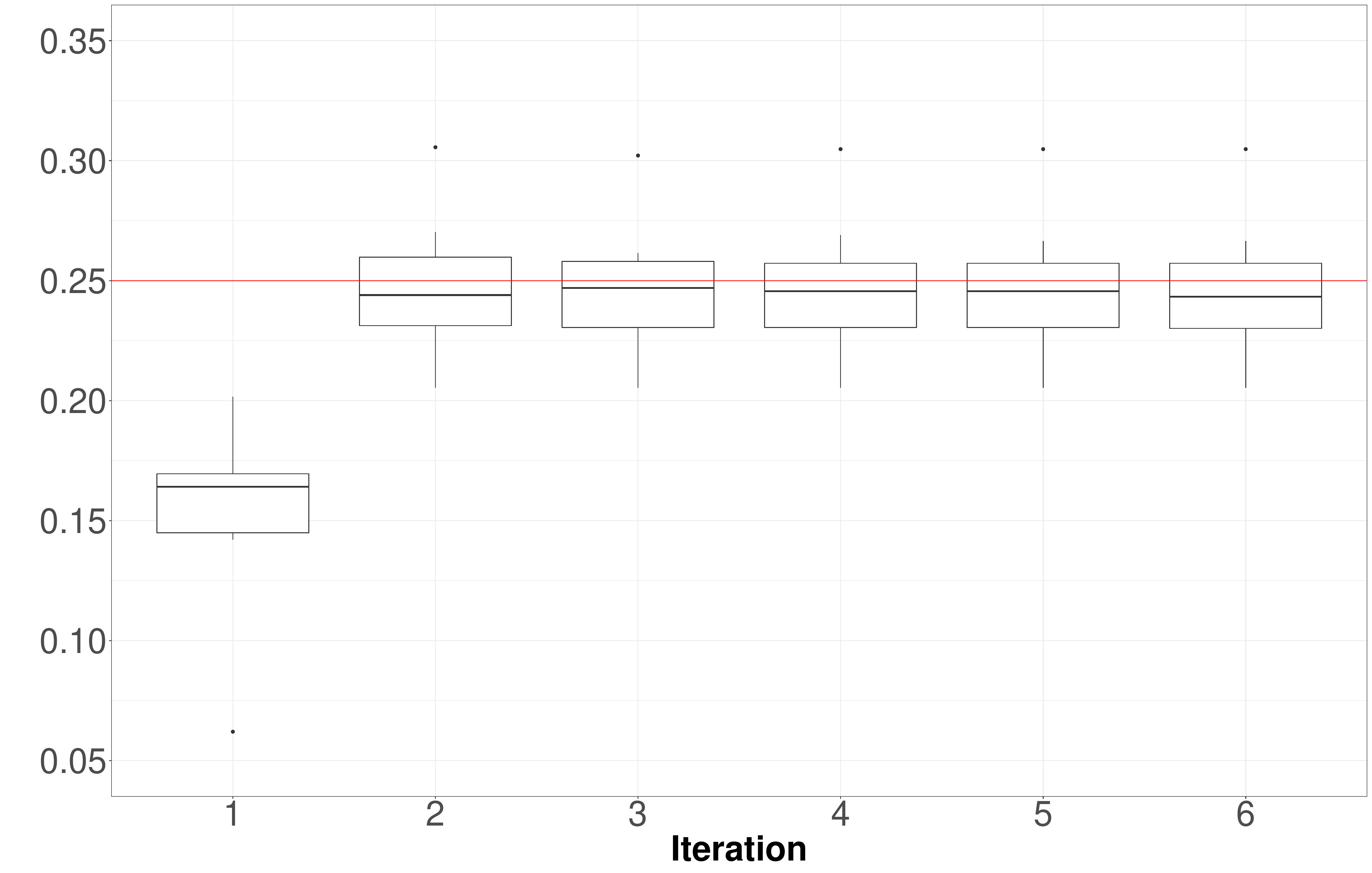}\\
\end{tabular}  
\caption{Boxplots for the estimations of $\boldsymbol{\gamma}^\star$ in Model (\ref{eq:mut_Wt}) with a 5\% sparsity level and $q=1,2,3$ obtained by
  \texttt{ss\_min}.
 Top: $q=1$ and $\gamma_1^\star=0.5$ (left), $q=2$ and $\gamma_1^\star=0.5$ (middle), $q=2$ and $\gamma_2^\star=0.25$ (right). Bottom: $q=3$ and $\gamma_1^\star=0.5$ (left), $q=3$ and  $\gamma_2^\star=1/3$ (middle), $q=3$ and $\gamma_3^\star=0.25$ (right).
   The horizontal lines correspond to the values of the $\gamma_i^\star$'s.\label{fig:gamma:5:min}}
 \end{center}
\end{figure}

\textbf{Impact of the value of $n$}

In this paragraph, we study the impact of the value of $n$ on the TPR and the FPR associated to the support recovery of $\boldsymbol{\beta}^\star$
and on the estimation of $\boldsymbol{\gamma}^\star$ for
\texttt{ss\_min}, the other approaches providing similar results.

Based on Figures \ref{fig:TPR:FPR:thresh_5} and \ref{fig:TPR:FPR:thresh_10},
we chose a threshold equal to 0.7 for both sparsity levels (5\% and 10\%) which provides a good trade-off between TPR and FPR for all values of $n$.
We can see from Figure \ref{fig:TPR:FPR:n} that \texttt{ss\_min} with this threshold
outperforms \texttt{lasso\_cv} when the sparsity level is equal to 5\% and all the values of $n$ considered. In the case where the
sparsity level is equal to 10\%, \texttt{lasso\_cv} has a slightly larger TPR for $n=150$ and $n=200$. However, the FPR of \texttt{ss\_min}
is much smaller.

\begin{figure}[!htbp]
  \centering
  \includegraphics[scale=0.22]{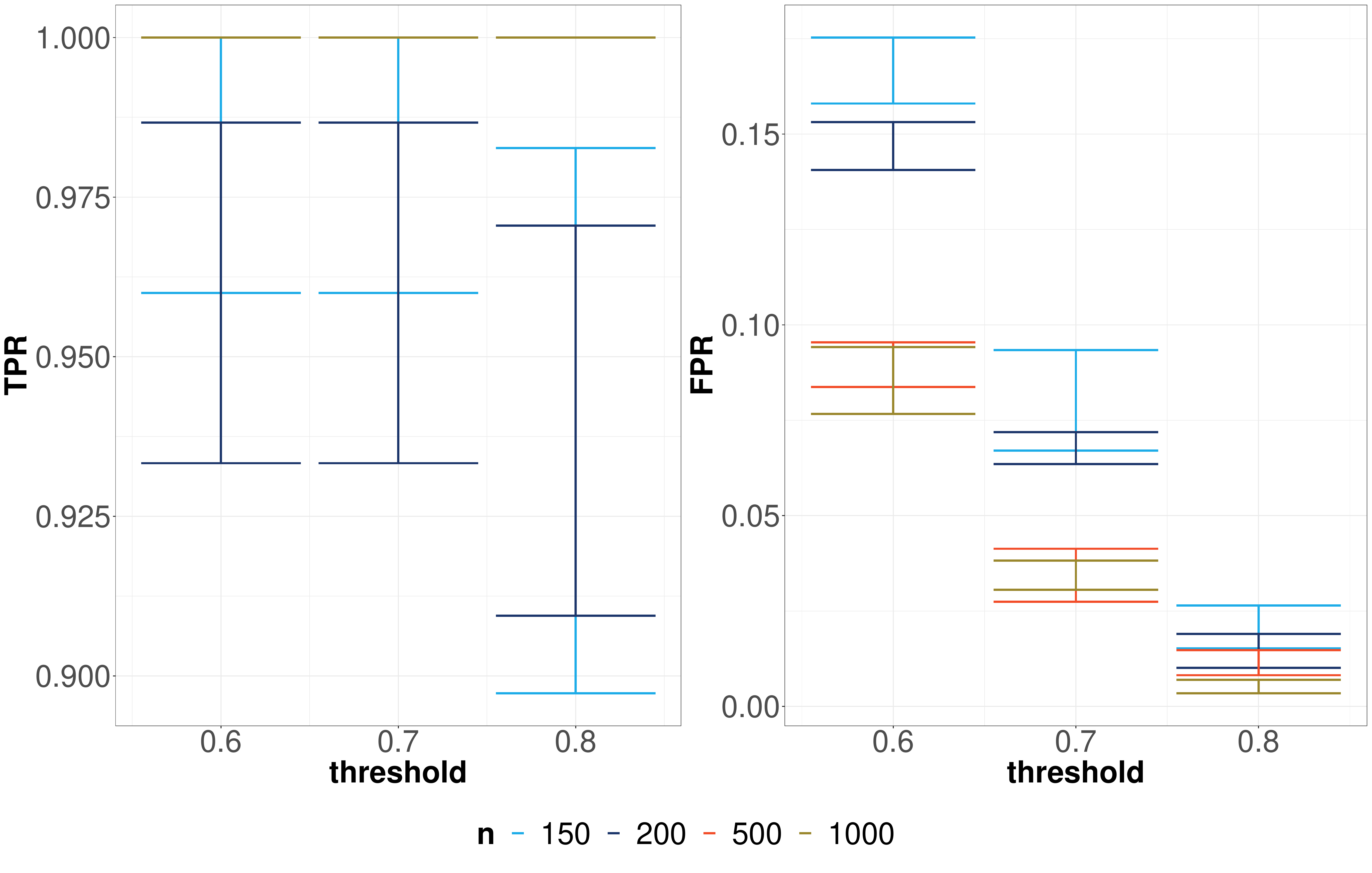}
  \caption{Error bars of the TPR and FPR associated to the support recovery of $\boldsymbol{\beta}^\star$ for \texttt{ss\_min} with respect to the thresholds
    for different values of $n$, $q=1$, $p=100$ and a 5\% sparsity level. 
 \label{fig:TPR:FPR:thresh_5}}
\end{figure}

\begin{figure}[!htbp]
  \centering
  \includegraphics[scale=0.22]{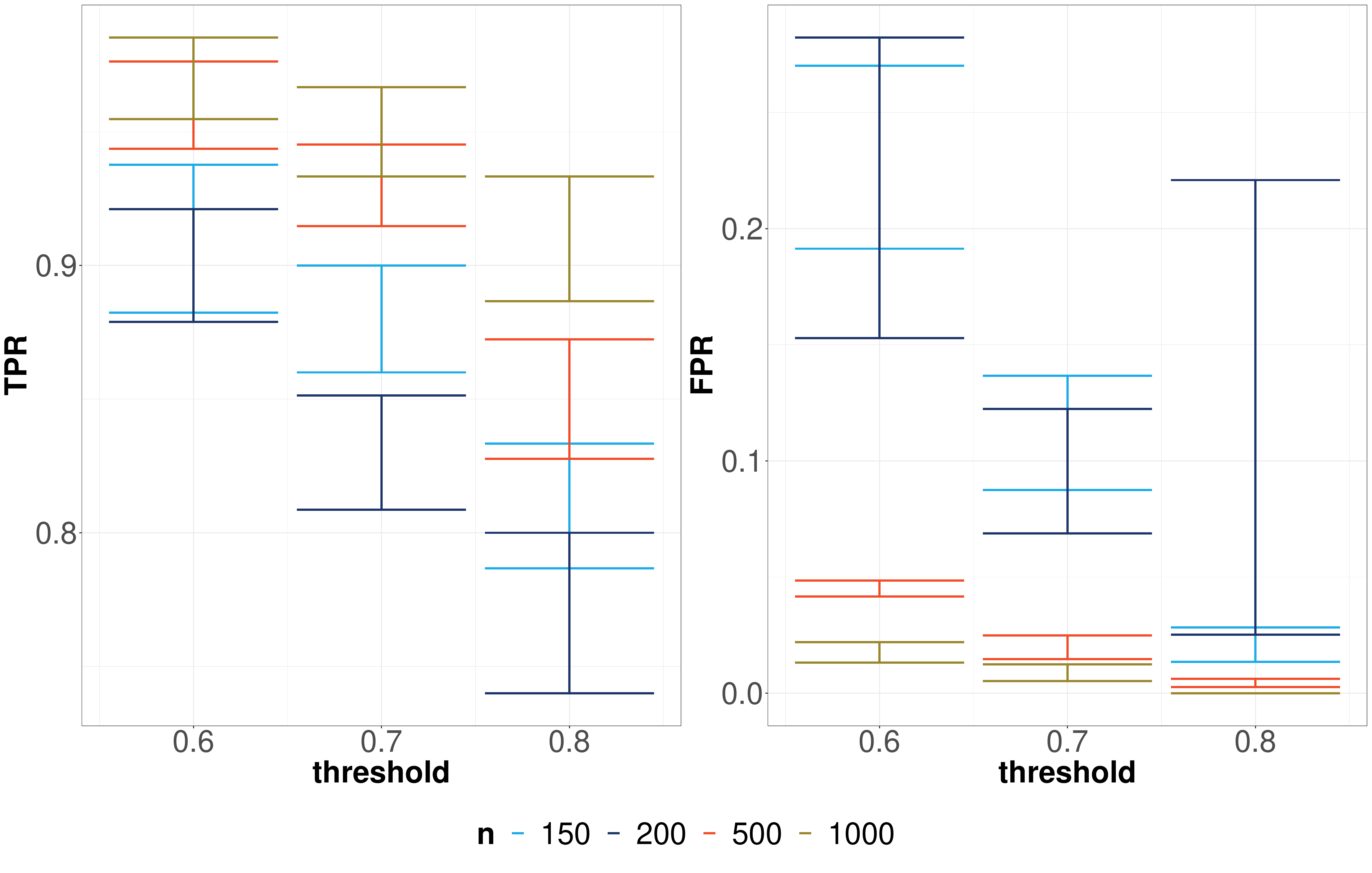}
  \caption{Error bars of the TPR and FPR associated to the support recovery of $\boldsymbol{\beta}^\star$ for \texttt{ss\_min} with respect to the thresholds
    for different values of $n$, $q=1$, $p=100$ and a 10\% sparsity level. 
 \label{fig:TPR:FPR:thresh_10}}
\end{figure}

\begin{figure}[!htbp]
  \centering
  \includegraphics[scale=0.22]{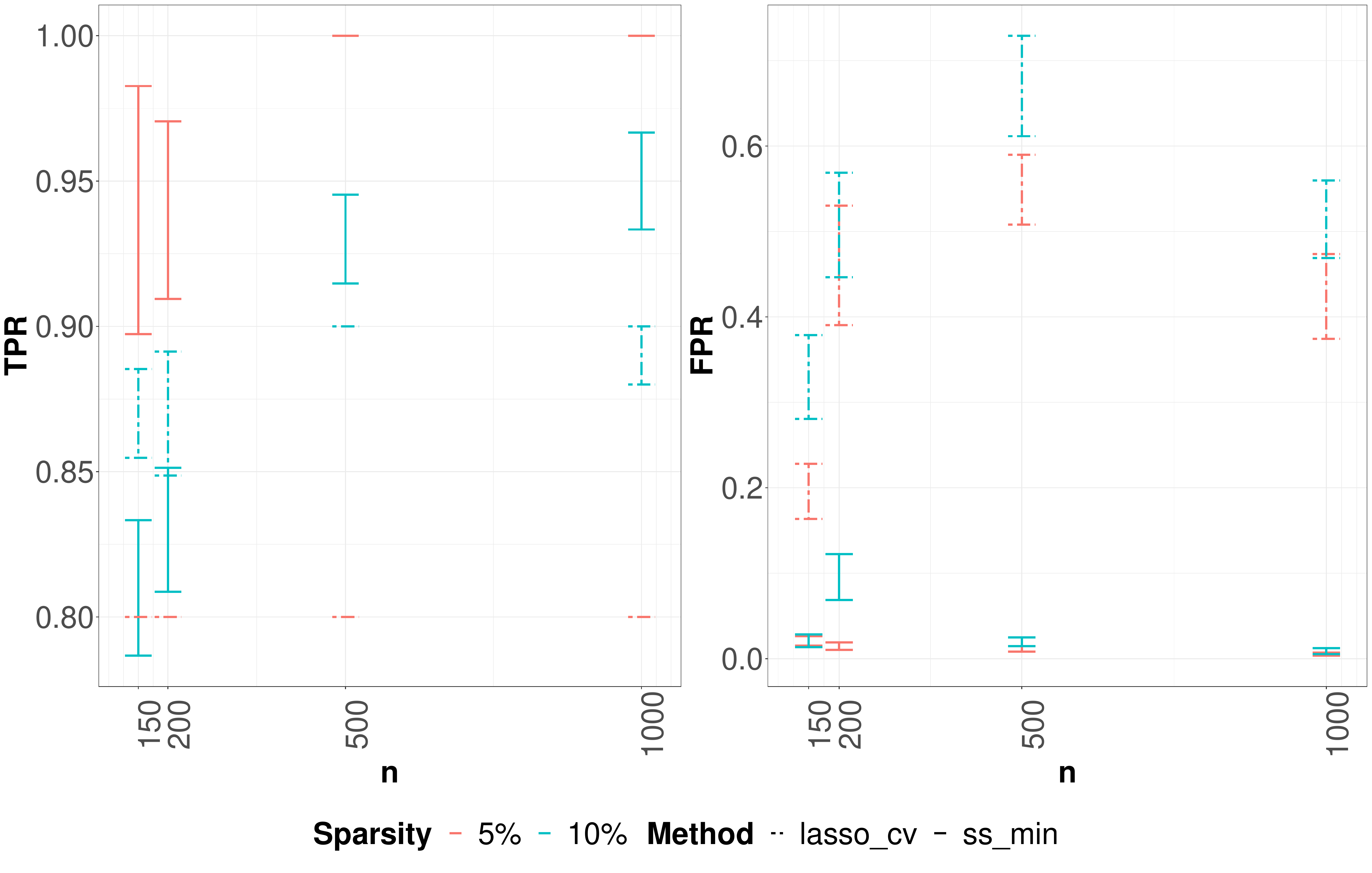}
  \caption{Error bars of the TPR and FPR associated to the support recovery of $\boldsymbol{\beta}^\star$ for \texttt{ss\_min} and \texttt{lasso\_cv}
    for different values of $n$, $q=1$, $p=100$ and different sparsity levels. \label{fig:TPR:FPR:n}}
\end{figure}

Figure \ref{fig:gamma:iter} displays the boxplots for the estimations of $\boldsymbol{\gamma}^\star$ in Model (\ref{eq:mut_Wt}) for $q=1$, $p=100$, different values of $n$ (150, 200, 500, 1000) and sparsity levels (5\% and 10\%) obtained by \texttt{ss\_min} with a threshold of 0.7 for six iterations.
We can see from this figure that this approach provides accurate estimations of $\gamma_1^\star$ from Iteration 2 especially when 
$n$ is larger than 200.

\begin{figure}[!htbp]
  \centering
  \includegraphics[scale=0.22]{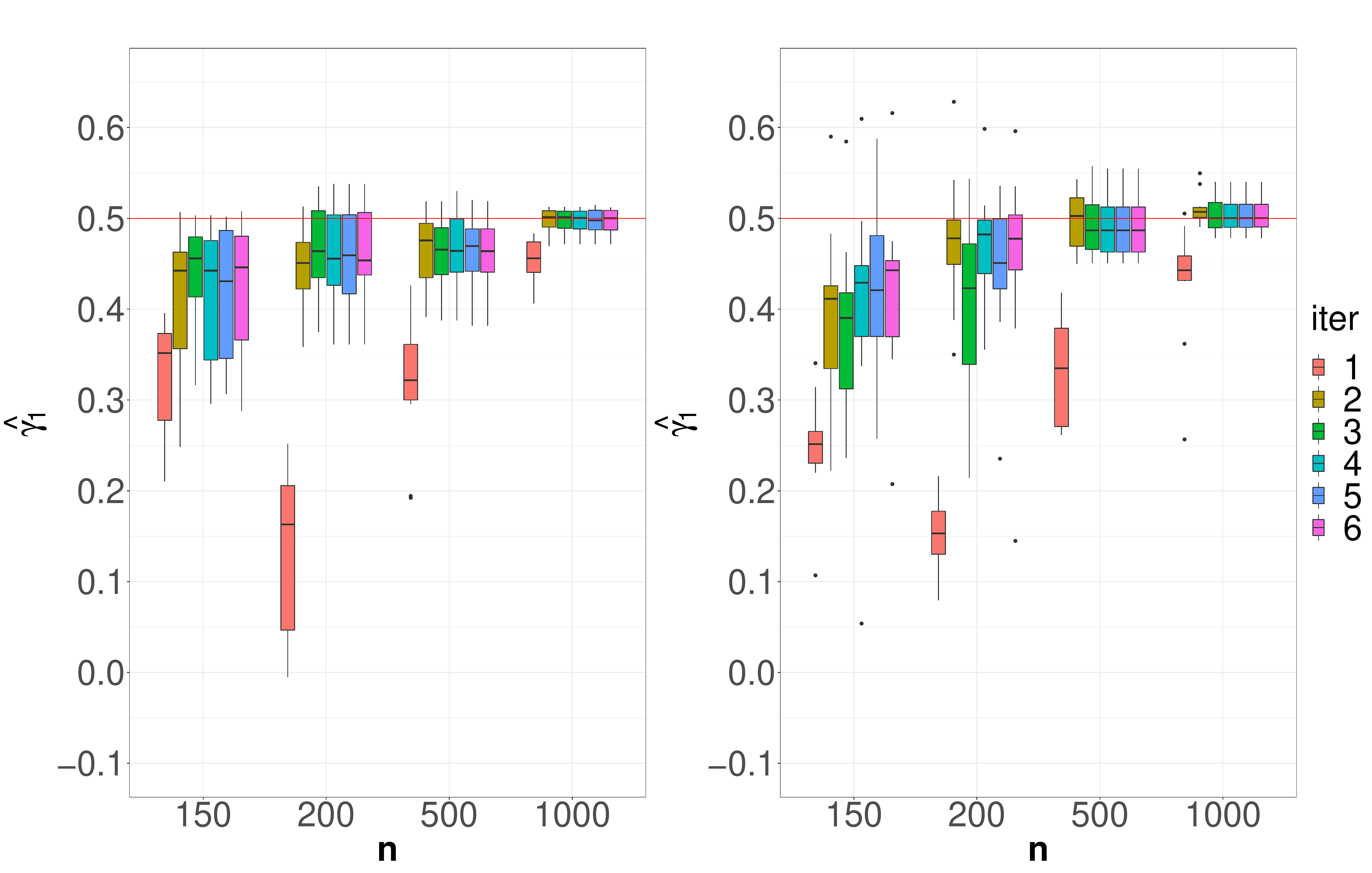}
  \caption{Boxplots for the estimations of $\boldsymbol{\gamma}^\star$ in Model (\ref{eq:mut_Wt}) for $q=1$, $p=100$, different values of $n$ and 
   sparsity levels (left: 5\%, right: 10\%) obtained by \texttt{ss\_min} with a threshold of 0.7 for different iterations (\texttt{iter}). 
 \label{fig:gamma:iter}}
\end{figure}

\FloatBarrier

\subsection{Numerical performance}

Figure \ref{fig:time} displays the means of the computational times for \texttt{ss\_min} and \texttt{fast\_ss}.
The timings were obtained on a workstation with 8GB of RAM and Dual-Core Intel Core i5 (2.7GHz) CPU. The performance
of \texttt{ss\_cv} are not displayed since they are similar to the one of \texttt{ss\_min}.
We can see from this figure that it takes around 1 minute to process observations $Y_1,\dots,Y_n$
satisfying Model (\ref{eq:Yt}) for a given threshold and one iteration, when $n=1000$ and $p=100$.
Moreover, we can observe that the computational burden of \texttt{fast\_ss} is slightly
smaller than the one of \texttt{ss\_min}. 

\begin{figure}[!htbp]
\centering\includegraphics[scale=0.2]{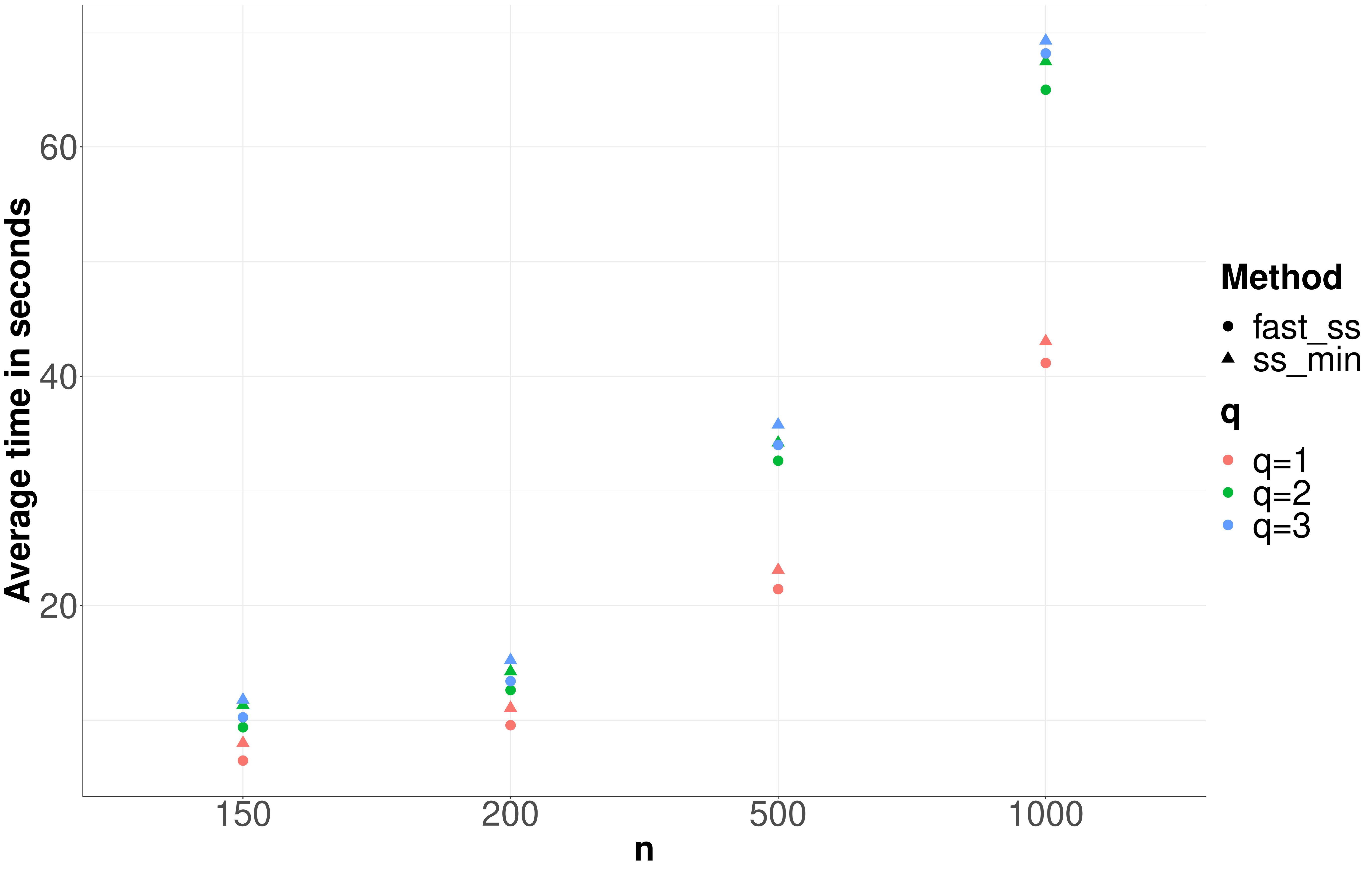}
\caption{Means of the computational times in seconds for \texttt{ss\_min} and \texttt{fast\_ss} in the case where $p=100$, and different values
 of $n$ and $q$, a given threshold and one iteration.\label{fig:time}}
\end{figure}

\section{\textcolor{black}{Application to the analysis of RNA-Seq kinetics data}}\label{sec:appli}

\subsection{\textcolor{black}{Biological context and modeling}}

\textcolor{black}{RNA sequencing (RNA-Seq) allows identifying and counting the numbers of RNA fragments present in a biological sample. By linking these RNA fragments to genes, one can determine the expression level of genes as integer counts. Over the past decades, advances in RNA-Seq analysis have revealed that many eukaryotic genomes were transcribed outside of protein-coding genes. These thousands of new transcripts have been named non-coding RNAs (ncRNAs, \cite{ariel:2015}) as opposed to coding RNAs, which code for proteins. Among these ncRNAs, long non-coding RNAs (lncRNAs) are a heterogeneous group of RNA molecules greater than 200 nucleotides, transcribed from non-coding genes, that regulate genome expression. This application aims at identifying the lncRNAs, the expression of which affects the expression of coding genes, by using the temporal evolution of the expression of both coding genes and lncRNAs.}
% \textcolor{red}{Over the past decades, advances in RNA sequencing (RNA-Seq) methods have revealed that many eukaryotic genomes were transcribed outside of protein-coding genes. These thousands of new transcripts have been named non-coding RNAs (ncRNAs, \cite{ariel:2015}) as opposed to coding RNAs, which code for proteins. Among these ncRNAs, long non-coding RNAs (lncRNAs) are a heterogeneous group of RNA molecules greater than 200 nucleotides transcribed from non-coding genes that regulate genome expression.
%   %Long non-coding RNAs are lowly expressed, and their expression is strongly related to environmental stresses. The effect of lncRNAs on the expression of other genes should allow the discovery of  the \textit{target} lncRNAs especially if one exploits coding genes that follow temporal evolution.
% %\textcolor{blue}{TO DO by Thomas: could you reformulate these two sentences to make them clearer?}
%   This application aims to identify the lncRNAs, the expression of which affects the expression of coding genes by using
%   the temporal evolution of the expression of both coding genes and lncRNAs.
  %Since the expression of both lncRNAs and coding genes represent a number of reads, they are integer-valued.
  
  \textcolor{black}{Here, we applied the methodology proposed in the paper to 9000 RNA-seq kinetics (or time series) of coding genes, each having a length $n=15$, to find which lncRNAs among $p=95$
    affect their values. Note that the Poisson modeling is adapted since the expression of coding genes are integer-valued.
    More precisely, for each coding gene, the time series is described by its expression (values) at 15 temporal points.
    In Model \eqref{eq:Yt}, \eqref{eq:mut_Wt}, and \eqref{eq:Zt} the expression of a given coding gene at time $t$ is denoted by $Y_t$ with $t=1, 2, \ldots, n =15$ and the expression of the $j$th lncRNAs at time $t$ is denoted by $x_{j,t}$ with $j=1, 2, \ldots, p =95$. Our goal is to find which lncRNAs affect the values of
    $(Y_t)$ which boils down to finding which $\beta_k^\star$ are non null.}

\subsection{\textcolor{black}{Additional numerical experiments}}

\textcolor{black}{In order to tune the threshold of $\mathsf{ss \_ cv}$ in the specific context of this application ($n=15$ and $p=95$), we ran additional numerical experiments.
We used the $x_{j,t}$ corresponding to the expression data of the lncRNAs for generating the $Y_t$'s following the model described in
\eqref{eq:Yt}, \eqref{eq:mut_Wt}, and \eqref{eq:Zt} with
$q=1$, $\gamma_1^\star=0.5$ and 5 non null $\beta_k^\star$'s. We can see from Figure \ref{fig1:appli} that $\mathsf{ss\_cv}$ outperforms $\mathsf{lasso\_cv}$
even in this framework where $n$ is much smaller than $p$ and that the best threshold for our approach is 0.4. We shall thus use this value in the following.}

\begin{figure}
 \centering \includegraphics*[scale=0.2]{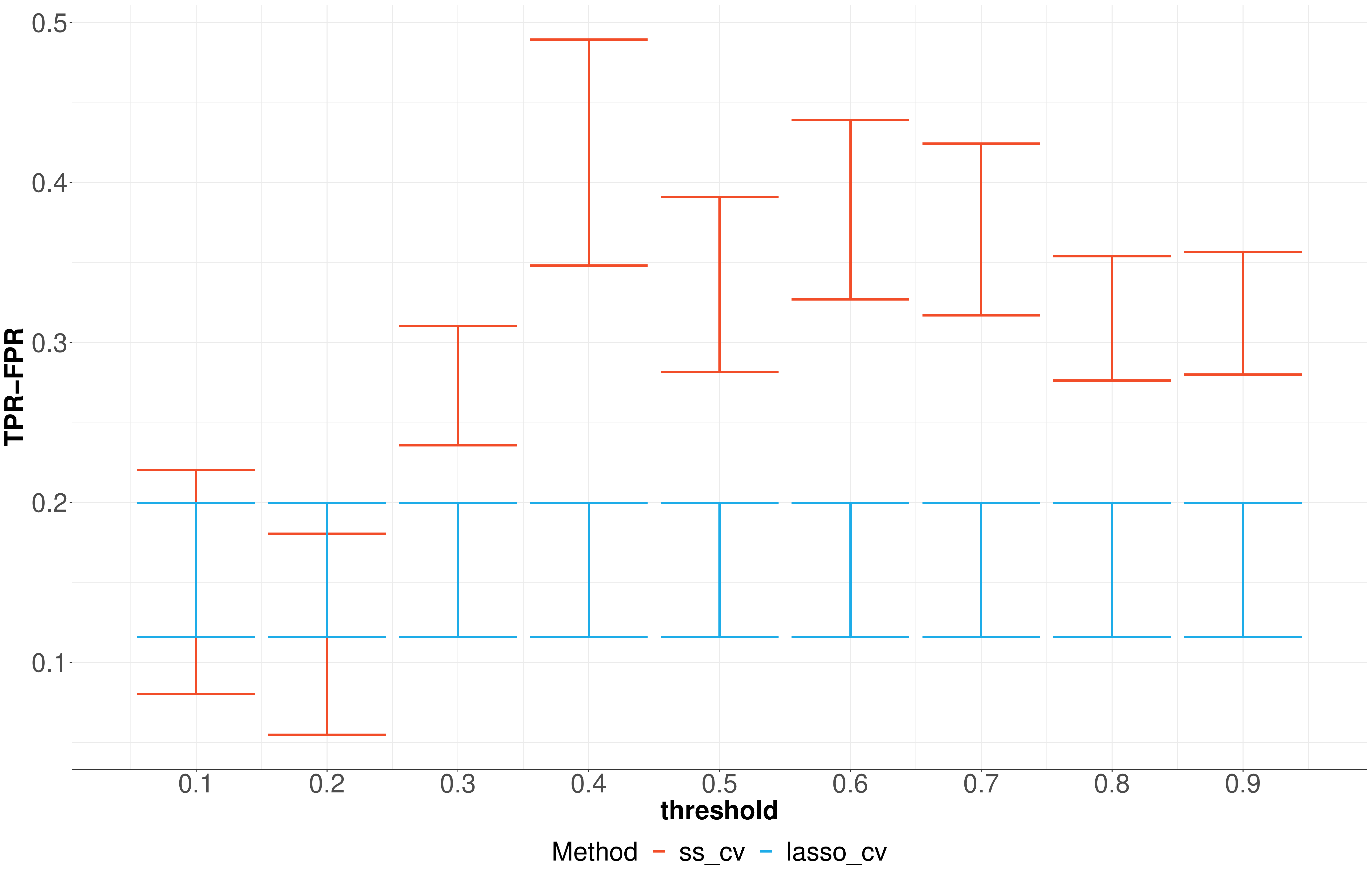}
  \caption{\textcolor{black}{Error bars of the difference between TPR and FPR associated to the support recovery of $\boldsymbol{\beta}^\star$ for $\mathsf{ss\_cv}$ with 4 iterations
    and
    $\mathsf{lasso\_cv}$ with respect to the thresholds when $n=15$, $q=1$, $p=95$ and 5 non null coefficients.\label{fig1:appli}}}
\end{figure}

\subsection{\textcolor{black}{Results}}

\textcolor{black}{In Figure \ref{application:var_sel} the results are displayed only for 10 coding genes out of 9000.  Our approach selected 46 out of
  95 lncRNAs as being relevant for explaining the expression of these 10 coding genes.}

\textcolor{black}{In Figure \ref{application:var_sel}, if a coefficient $\beta_k^\star$ is estimated as non null, meaning that the associated lncRNA affects the values of a given coding gene, there is a dot in the plot. If the influence is negative, the dot is blue and if it is positive, the dot is red. The brighter the color of the dot, the larger is the influence.} 

\begin{figure}[!htbp]
\centering%\captionsetup{labelfont={color=blue},font={color=blue}}
\includegraphics[scale=0.2]{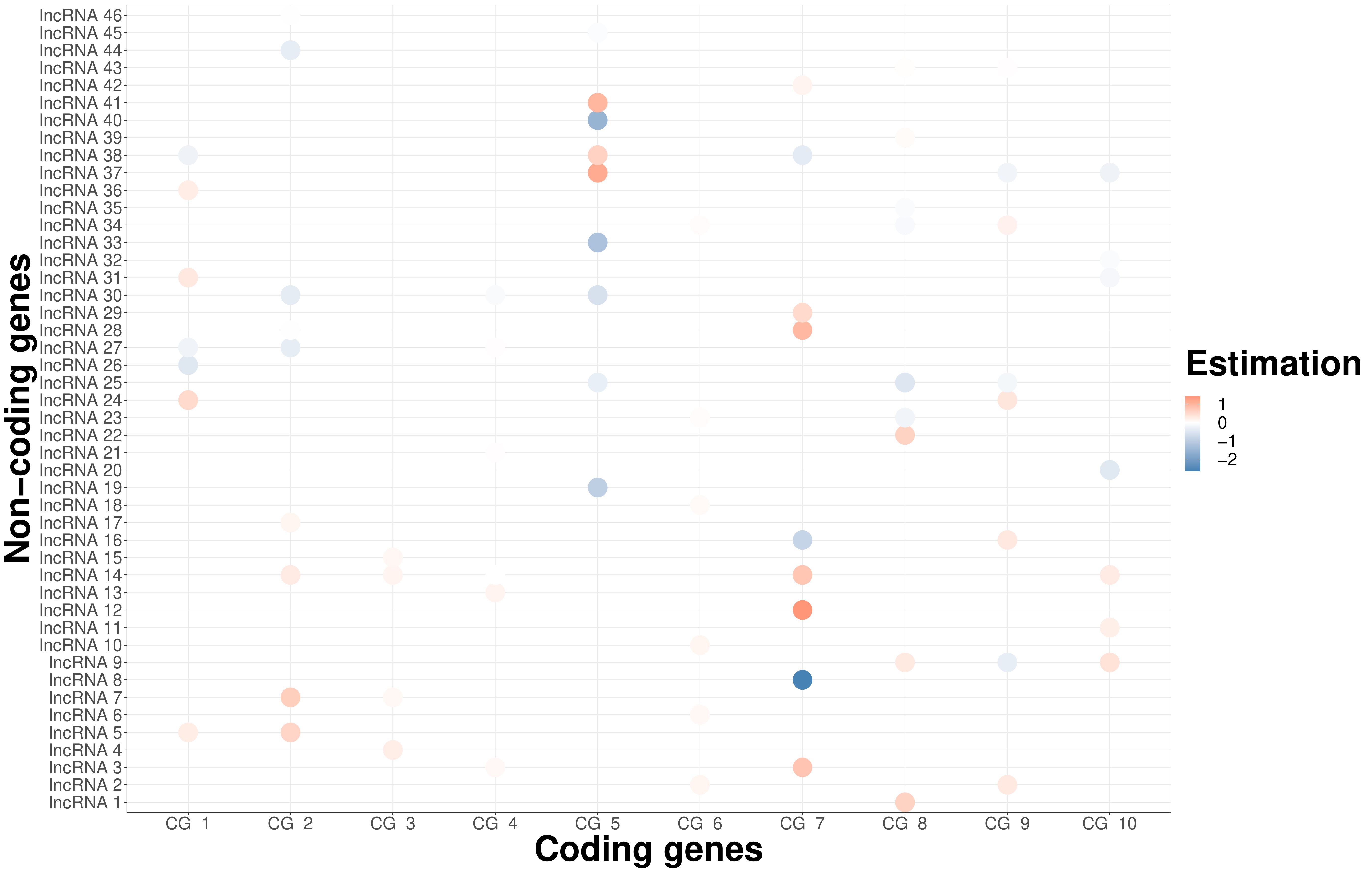}
\caption{Estimation of $\pmb{\beta}^{\star}$ with  $\mathsf{ss \_ cv}$ for explaining the values of 10 coding genes ($Y_t$) by some of the lncRNAs ($x_{t,i}$).} \label{application:var_sel}
\end{figure}

\textcolor{black}{Moreover, Figure \ref{application:gamma_est} displays the estimation of $\gamma_1^{\star}$ obtained for the 10 series associated to the coding genes. Since $n$ is very small and the model has to estimate many parameters, it is unrealistic to expect better results by taking a value of $q$ larger than $1$. After 4 iterations for all $10$ coding genes, all but one estimates of $\gamma_1^{\star}$ converge to a value in the interval from $-1$ to $0.1$.}

\begin{figure}[!htbp]
\centering%\captionsetup{labelfont={color=blue},font={color=blue}}
\includegraphics[scale=0.2]{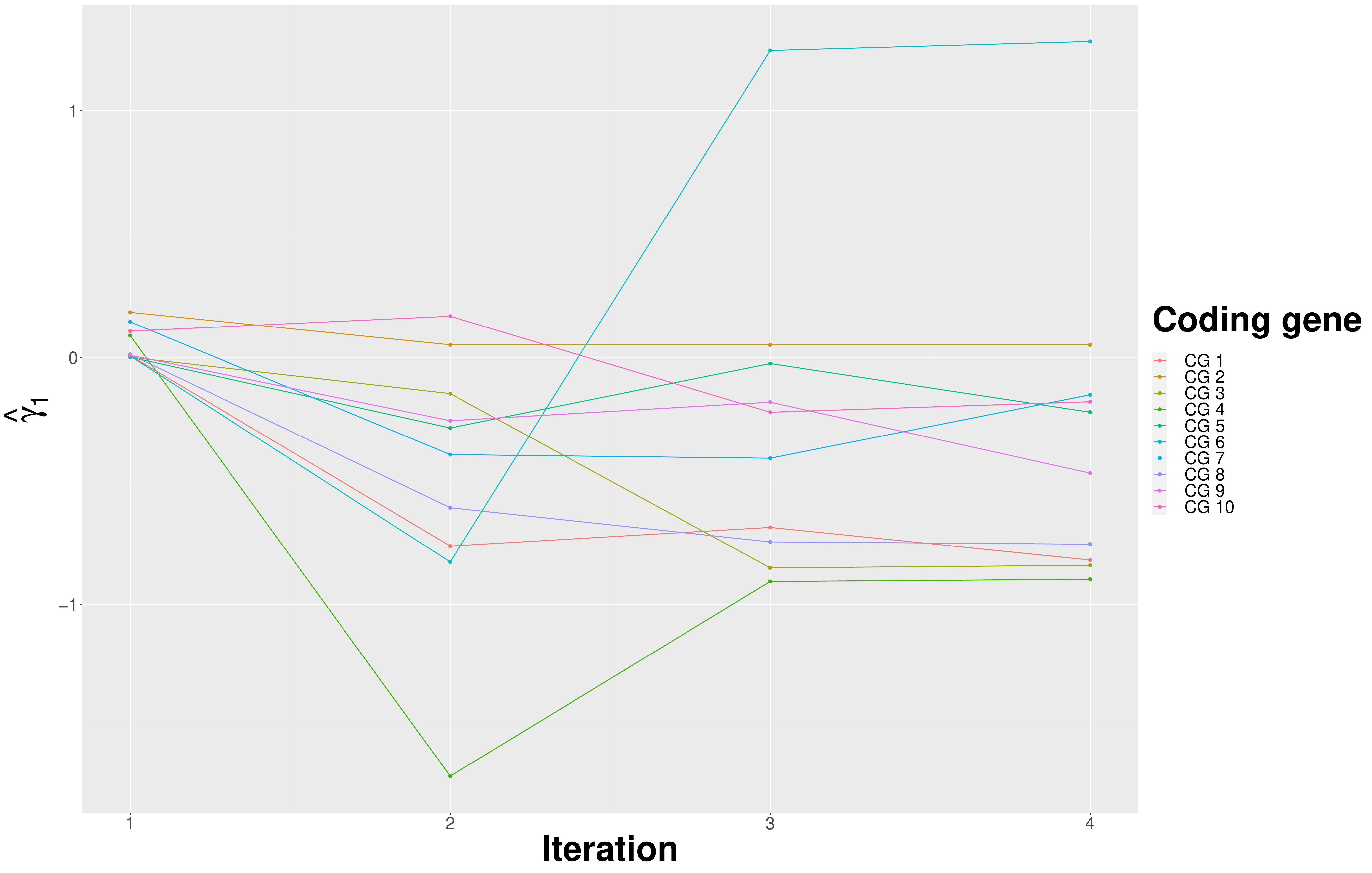}
\caption{Estimation of $\gamma_1^{\star}$ with  $\mathsf{ss \_ cv}$ for explaining the values of 10 coding genes ($Y_t$) by some of the lncRNAs ($x_{t,i}$).}\label{application:gamma_est}
\end{figure}

\section{Proofs}\label{sec:proofs}

\subsection{\textcolor{black}{Computation of the first and second derivatives of $W_t$ defined in (\ref{eq:Wt})}}

The computations given below are similar to those provided in \cite{davis:dunsmuir:street:2005} but are specific to the parametrization 
$\boldsymbol{\delta}=(\boldsymbol{\beta}',\boldsymbol{\gamma}')$ considered in this paper.

\subsubsection{\textcolor{black}{Computation of the first derivatives of $W_t$ }}\label{subsub:first_derive}

By the definition of $W_t$ given in (\ref{eq:Wt}), we get
\begin{equation*}
\frac{\partial W_t}{\partial \boldsymbol{\delta}}(\boldsymbol{\delta})=\frac{\partial\boldsymbol{\beta}' x_t}{\partial \boldsymbol{\delta}}+\frac{\partial Z_t}{\partial \boldsymbol{\delta}}
(\boldsymbol{\delta}),
\end{equation*}
%}
where $\boldsymbol{\beta}$, $x_t$ and $Z_t$ are defined in (\ref{eq:Wt}). 
More precisely, for all $k\in\{0,\dots,p\}$, $\ell\in\{1,\dots,q\}$ and $t\in\{1,\dots,n\}$, by (\ref{eq:Et}),
\begin{align}\label{eq:gradW_beta}
\frac{\partial W_t}{\partial \beta_k}&=x_{t,k}+\frac{\partial Z_t}{\partial \beta_k}=x_{t,k}+\sum_{j=1}^{q\wedge (t-1)}\gamma_j\frac{\partial E_{t-j}}{\partial \beta_k}\nonumber\\
&=x_{t,k}-\sum_{j=1}^{q\wedge (t-1)}\gamma_j Y_{t-j}\frac{\partial W_{t-j}}{\partial \beta_k}\exp(-W_{t-j})=x_{t,k}-\sum_{j=1}^{q\wedge (t-1)}\gamma_j(1+E_{t-j})\frac{\partial W_{t-j}}{\partial \beta_k},\\
\frac{\partial W_t}{\partial \gamma_\ell}&=E_{t-\ell}+\sum_{j=1}^{q\wedge (t-1)} \gamma_j\frac{\partial E_{t-j}}{\partial\gamma_\ell}\nonumber\\\label{eq:gradW_gamma}
&=E_{t-\ell}-\sum_{j=1}^{q\wedge (t-1)}\gamma_j Y_{t-j}\frac{\partial W_{t-j}}{\partial \gamma_\ell}\exp(-W_{t-j})=E_{t-\ell}-\sum_{j=1}^{q\wedge (t-1)}\gamma_j(1+E_{t-j})\frac{\partial W_{t-j}}{\partial \gamma_\ell},
\end{align}
where we used that  $E_t=0,\; \forall t\leq 0$.

The first derivatives of $W_t$ are thus obtained from the following recursive expressions. For all $k\in\{0,\dots,p\}$ 
\begin{align*}
\frac{\partial W_1}{\partial \beta_k}&=x_{1,k},\\
\frac{\partial W_2}{\partial \beta_k}&=x_{2,k}-\gamma_1(1+E_{1})\frac{\partial W_{1}}{\partial \beta_k},
\end{align*}
where
\begin{equation}\label{eq:E1}
W_1=\boldsymbol{\beta}' x_1 \textrm{ and } E_1=Y_1\exp(-W_1)-1.
\end{equation}
Moreover,
\begin{equation*}
\frac{\partial W_3}{\partial \beta_k}=x_{3,k}-\gamma_1(1+E_{2})\frac{\partial W_{2}}{\partial \beta_k}-\gamma_2(1+E_{1})\frac{\partial W_{1}}{\partial \beta_k},
\end{equation*}
where
\begin{equation}\label{eq:E2}
W_2=\boldsymbol{\beta}' x_2  +\gamma_1 E_{1},\; E_2=Y_2\exp(-W_2)-1,
\end{equation}
and so on. In the same way, for all $\ell\in\{1,\dots,q\}$
\begin{align*}
\frac{\partial W_1}{\partial \gamma_\ell}&=0,\\
\frac{\partial W_2}{\partial \gamma_\ell}&=E_{2-\ell},\\
\frac{\partial W_3}{\partial \gamma_\ell}&=E_{3-\ell}-\gamma_1(1+E_{2})\frac{\partial W_{2}}{\partial \gamma_\ell}%=E_{3-\ell}-\gamma_1(1+E_{2})E_{2-\ell},
\end{align*}
and so on, where $E_t=0,\; \forall t\leq 0$ and $E_1$, $E_2$ are defined in (\ref{eq:E1}) and (\ref{eq:E2}), respectively.

\subsubsection{\textcolor{black}{Computation of the second derivatives of $W_t$}}\label{subsub:second_derive}

Using (\ref{eq:gradW_beta}) and (\ref{eq:gradW_gamma}), we get that for all $j,k\in\{0,\dots,p\}$, $\ell,m\in\{1,\dots,q\}$ and $t\in\{1,\dots,n\}$,
\begin{align*}
\frac{\partial^2 W_t}{\partial \beta_j\partial \beta_k}&=-\sum_{i=1}^{q\wedge (t-1)}\gamma_i(1+E_{t-i})\frac{\partial^2 W_{t-i}}{\partial \beta_j\partial \beta_k}
-\sum_{i=1}^{q\wedge (t-1)}\gamma_i\frac{\partial E_{t-i}}{\partial\beta_j}\frac{\partial W_{t-i}}{\partial \beta_k}\\
&=-\sum_{i=1}^{q\wedge (t-1)}\gamma_i(1+E_{t-i})\frac{\partial^2 W_{t-i}}{\partial \beta_j\partial \beta_k}
+\sum_{i=1}^{q\wedge (t-1)}\gamma_i(1+E_{t-i})\frac{\partial W_{t-i}}{\partial \beta_j}\frac{\partial W_{t-i}}{\partial \beta_k},\\
\frac{\partial^2 W_t}{\partial \beta_k\partial\gamma_\ell}&=-(1+E_{t-\ell})\frac{\partial W_{t-\ell}}{\partial \beta_k}
-\sum_{i=1}^{q\wedge (t-1)}\gamma_i\left\{\frac{\partial W_{t-i}}{\partial \beta_k}\frac{\partial E_{t-i}}{\partial\gamma_\ell}
                                                            +(1+E_{t-i})\frac{\partial^2 W_{t-i}}{\partial \beta_k\partial\gamma_\ell}\right\}\\
&=-(1+E_{t-\ell})\frac{\partial W_{t-\ell}}{\partial \beta_k}
-\sum_{i=1}^{q\wedge (t-1)}\gamma_i\left\{-(1+E_{t-i})\frac{\partial W_{t-i}}{\partial\beta_k}\frac{\partial W_{t-i}}{\partial \gamma_\ell}
                                                            +(1+E_{t-i})\frac{\partial^2 W_{t-i}}{\partial \beta_k\partial\gamma_\ell}\right\},\\
\frac{\partial^2 W_t}{\partial \gamma_\ell\partial\gamma_m}&=\frac{\partial E_{t-\ell}}{\partial \gamma_m}
-(1+E_{t-m})\frac{\partial W_{t-m}}{\partial \gamma_\ell} 
-\sum_{i=1}^{q\wedge (t-1)}\gamma_i\left\{\frac{\partial W_{t-i}}{\partial \gamma_\ell} \frac{\partial E_{t-i}}{\partial \gamma_m}
+(1+E_{t-i})\frac{\partial^2 W_{t-i}}{\partial \gamma_\ell\partial \gamma_m}\right\}\\
&=-(1+E_{t-\ell})\frac{\partial W_{t-\ell}}{\partial \gamma_m}-(1+E_{t-m})\frac{\partial W_{t-m}}{\partial \gamma_\ell} \\
&-\sum_{i=1}^{q\wedge (t-1)}\gamma_i\left\{-(1+E_{t-i})\frac{\partial W_{t-i}}{\partial \gamma_\ell}\frac{\partial W_{t-i}}{\partial \gamma_m}
+(1+E_{t-i})\frac{\partial^2 W_{t-i}}{\partial \gamma_\ell\partial \gamma_m}\right\}.\\
\end{align*}

To compute the second derivatives of $W_t$, we shall use the following recursive expressions for all $j,k\in\{0,\dots,p\}$
\begin{align*}
\frac{\partial^2 W_1}{\partial \beta_j\partial \beta_k}&=0,\\
\frac{\partial^2 W_2}{\partial \beta_j\partial \beta_k}&=\gamma_1(1+E_1)x_{1,j}x_{1,k},
\end{align*}
where $E_1$ is defined in (\ref{eq:E1}) and so on. Moreover, for all $k\in\{0,\dots,p\}$ and $\ell\in\{1,\dots,q\}$
\begin{align*}
\frac{\partial^2 W_1}{\partial \beta_k\partial\gamma_\ell}&=0,\\
\frac{\partial^2 W_2}{\partial \beta_k\partial\gamma_\ell}&=-(1+E_{2-\ell})\frac{\partial W_{2-\ell}}{\partial \beta_k},
\end{align*}
where $E_t=0$ for all $t\leq 0$ and the first derivatives of $W_t$ are computed in (\ref{eq:gradW_beta}).
Note also that
\begin{align*}
\frac{\partial^2 W_1}{\partial \gamma_\ell\partial\gamma_m}&=0,\\
\frac{\partial^2 W_2}{\partial \gamma_\ell\partial\gamma_m}&=0
\end{align*}
and so on.

\subsection{Computational details for obtaining Criterion (\ref{eq:beta_hat})}\label{sub:var_sec}

By \eqref{eq:Ltilde},
\begin{align*}
\widetilde{L}(\boldsymbol{\beta})=\widetilde{L}(\boldsymbol{\beta}^{(0)})+\frac{\partial L}{\partial \boldsymbol{\beta}}(\boldsymbol{\beta}^{(0)},\widehat{\boldsymbol{\gamma}})
U(\boldsymbol{\nu}-\boldsymbol{\nu}^{(0)})-\frac12 (\boldsymbol{\nu}-\boldsymbol{\nu}^{(0)})' \Lambda (\boldsymbol{\nu}-\boldsymbol{\nu}^{(0)}),
\end{align*}
%\textcolor{blue}{OK pour le "-" dans l'approximation de Taylor (à rediscuter) ; il faut aussi le mettre dans la définition de $\tilde{L}(\boldsymbol{\beta})$ ci-dessus}
where $\boldsymbol{\nu}-\boldsymbol{\nu}^{(0)}=U'(\boldsymbol{\beta}-\boldsymbol{\beta}^{(0)})$.
Hence,
\begin{align*}
\widetilde{L}(\boldsymbol{\beta})&=\widetilde{L}(\boldsymbol{\beta}^{(0)})+\sum_{k=0}^p
\left(\frac{\partial L}{\partial \boldsymbol{\beta}}(\boldsymbol{\beta}^{(0)},\widehat{\boldsymbol{\gamma}}) U\right)_k (\nu_k-\nu_{k}^{(0)})
-\frac12\sum_{k=0}^p\lambda_k (\nu_k-\nu_{k}^{(0)})^2\\
&=\widetilde{L}(\boldsymbol{\beta}^{(0)})-\frac12\sum_{k=0}^p\lambda_k\left(\nu_k-\nu_{k}^{(0)}-\frac{1}{\lambda_k}
\left(\frac{\partial L}{\partial \boldsymbol{\beta}}(\boldsymbol{\beta}^{(0)},\widehat{\boldsymbol{\gamma}}) U\right)_k\right)^2
+\sum_{k=0}^p\frac{1}{2\lambda_k}\left(\frac{\partial L}{\partial \boldsymbol{\beta}}(\boldsymbol{\beta}^{(0)},\widehat{\boldsymbol{\gamma}}) U\right)_k^2,
\end{align*}
where the $\lambda_k$'s are the diagonal terms of $\Lambda$.

Since the only term depending on $\boldsymbol{\beta}$ is the second one in the last expression of $\widetilde{L}(\boldsymbol{\beta})$,
we define $\widetilde{L}_Q(\boldsymbol{\beta})$ appearing in Criterion (\ref{eq:beta_hat}) as follows:
\begin{eqnarray*}
-\widetilde{L}_Q(\boldsymbol{\beta})&=&\frac12\sum_{k=0}^p\lambda_k\left(\nu_k-\nu_{k}^{(0)}-\frac{1}{\lambda_k}
\left(\frac{\partial L}{\partial \boldsymbol{\beta}}(\boldsymbol{\beta}^{(0)},\widehat{\boldsymbol{\gamma}}) U\right)_k\right)^2\\
&=&\frac12 \left\|\Lambda^{1/2}\left(\boldsymbol{\nu}-\boldsymbol{\nu}^{(0)}-\Lambda^{-1} \left(\frac{\partial L}{\partial \boldsymbol{\beta}}(\boldsymbol{\beta}^{(0)},\widehat{\boldsymbol{\gamma}}) U\right)'
\right)\right\|_2^2\\
&=&\frac12 \left\|\Lambda^{1/2}U'(\boldsymbol{\beta}-\boldsymbol{\beta}^{(0)})-\Lambda^{-1/2} U' \left(\frac{\partial L}{\partial \boldsymbol{\beta}}(\boldsymbol{\beta}^{(0)},\widehat{\boldsymbol{\gamma}})\right)'
\right\|_2^2\\
&=&\frac12 \left\|\Lambda^{1/2}U'(\boldsymbol{\beta}^{(0)}-\boldsymbol{\beta})+\Lambda^{-1/2} U' \left(\frac{\partial L}{\partial \boldsymbol{\beta}}(\boldsymbol{\beta}^{(0)},\widehat{\boldsymbol{\gamma}})\right)'\right\|_2^2\\
&=&\frac12\|\mathcal{Y}-\mathcal{X}\boldsymbol{\beta}\|_2^2,
\end{eqnarray*}
where
\begin{equation*}
\mathcal{Y}=\Lambda^{1/2}U'\boldsymbol{\beta}^{(0)}
+\Lambda^{-1/2}U'\left(\frac{\partial L}{\partial \boldsymbol{\beta}}(\boldsymbol{\beta}^{(0)},\widehat{\boldsymbol{\gamma}})\right)' ,\;  \mathcal{X}=\Lambda^{1/2}U'.
\end{equation*}

\subsection{Proofs of Propositions \ref{prop1}, \ref{prop2} and \ref{prop3} and of Lemma \ref{lem:aperiodic_doeblin}}

This section contains the proofs of Propositions \ref{prop1}, \ref{prop2} and \ref{prop3}.
% and of Lemma \ref{lem:aperiodic_doeblin} which is used in the proof of Proposition \ref{prop1}.

\subsubsection{\textcolor{black}{Proof of Proposition \ref{prop1}}}

\textcolor{black}{We first establish the following lemma for proving Proposition \ref{prop1}.}

\begin{lemma}\label{lem:aperiodic_doeblin}
$(W_t^\star)$ is an aperiodic Markov process satisfying Doeblin's condition.
\end{lemma}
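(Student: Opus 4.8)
The plan is to exhibit $(W_t^\star)$ as a Markov chain with an explicitly computable transition kernel that possesses an accessible atom, and then to read off both aperiodicity and Doeblin's condition from a uniform minorization at that atom.

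First I would make the Markov structure explicit. With $q=1$ and no covariates, (\ref{eq:mut_simple}) and (\ref{eq:Zt}) give
\[
W_t^\star=\beta_0^\star+\gamma_1^\star E_{t-1}^\star=\beta_0^\star+\gamma_1^\star\bigl(Y_{t-1}e^{-W_{t-1}^\star}-1\bigr).
\]
Since $Y_{t-1}\mid\mathcal F_{t-2}\sim\mathcal P(e^{W_{t-1}^\star})$ and $W_{t-1}^\star$ is $\mathcal F_{t-2}$-measurable, the conditional distribution of $W_t^\star$ depends on the past only through $W_{t-1}^\star$; hence $(W_t^\star)$ is a Markov chain whose kernel $P(w,\cdot)$ is the image of $\mathcal P(e^{w})$ under $y\mapsto\beta_0^\star+\gamma_1^\star(ye^{-w}-1)$. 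The draw $y=0$ maps every state to the single point $w_0:=\beta_0^\star-\gamma_1^\star$, so $w_0$ is an atom with $P(w,\{w_0\})=e^{-e^{w}}>0$ for every $w$. In particular $w_0$ is accessible from every state, which yields $\psi$-irreducibility, and the self-loop $P(w_0,\{w_0\})=e^{-e^{w_0}}>0$ immediately gives aperiodicity. The reachable state space is $[w_0,\infty)$ when $\gamma_1^\star>0$ and $(-\infty,w_0]$ when $\gamma_1^\star<0$, according to whether $\gamma_1^\star ye^{-w}\ge 0$ or $\le 0$.

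For Doeblin's condition I would produce an integer $m$ and $\delta>0$ with $P^{m}(w,\cdot)\ge\delta\,\delta_{w_0}(\cdot)$ uniformly in $w$. When $\gamma_1^\star<0$ this holds with $m=1$ because $P(w,\{w_0\})=e^{-e^{w}}\ge e^{-e^{w_0}}$ for all $w\le w_0$. When $\gamma_1^\star>0$ the one-step probability $e^{-e^{w}}$ degenerates as $w\to\infty$, so I would pass to $m=2$, writing
\[
P^2(w,\{w_0\})=\mathbb E_w\!\left[e^{-e^{W_t^\star}}\right].
\]
Here I would use the two moment identities $\mathbb E[W_t^\star\mid W_{t-1}^\star=w]=\beta_0^\star$ (a consequence of $\mathbb E[E_{t-1}^\star\mid\mathcal F_{t-2}]=0$) and $\mathrm{Var}(W_t^\star\mid W_{t-1}^\star=w)=(\gamma_1^\star)^2 e^{-w}\le(\gamma_1^\star)^2 e^{-w_0}$. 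By Chebyshev's inequality one can then pick $M$ large enough, independently of $w\ge w_0$, so that $\mathbb P_w(W_t^\star\le M)\ge \tfrac12$; since $W_t^\star\ge w_0$ always, on this event $e^{-e^{W_t^\star}}\ge e^{-e^{M}}$, whence $P^2(w,\{w_0\})\ge\tfrac12 e^{-e^{M}}=:\delta>0$ uniformly, which is exactly Doeblin's condition with $\varphi=\delta_{w_0}$.

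The main obstacle is precisely this $\gamma_1^\star>0$ case: the natural one-step hitting probability of the atom vanishes at infinity, so a single step cannot furnish a uniform minorization. The resolution is the observation that the conditional variance of $W_t^\star$ around its constant mean $\beta_0^\star$ is uniformly bounded, which forces the chain back into a fixed compact interval in one step, from which the atom is then reached with uniformly positive probability. Once the minorization is established, standard Markov chain theory shows that Doeblin's condition entails uniform ergodicity and the existence of a unique invariant probability measure; this both justifies the assumption that $(W_t^\star)$ starts from its stationary distribution and supplies the ergodic theorem invoked in the proof of Proposition \ref{prop1}.
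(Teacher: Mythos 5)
Your proof is correct, and it is actually more self-contained than the paper's. For the Markov property and aperiodicity you do exactly what the paper does: write the recursion $W_t^\star=(\beta_0^\star-\gamma_1^\star)+\gamma_1^\star Y_{t-1}e^{-W_{t-1}^\star}$, observe that the conditional law of $W_t^\star$ given the past depends only on $W_{t-1}^\star$, and use the state $w_0=\beta_0^\star-\gamma_1^\star$ (the image of the draw $Y_{t-1}=0$), whose self-loop probability $e^{-e^{w_0}}>0$ gives (strong) aperiodicity. The genuine difference is the Doeblin part: the paper does not prove it at all but refers the reader to the proof of Proposition 2 in \cite{davis:dunsmuir:streett:2003}, whereas you supply a complete minorization argument with $\nu=\delta_{w_0}$ --- one step when $\gamma_1^\star<0$ (the chain then lives in $(-\infty,w_0]$, where $e^{-e^{w}}\geq e^{-e^{w_0}}$), and two steps when $\gamma_1^\star>0$, exploiting the conditional mean $\mathbb{E}[W_t^\star\mid W_{t-1}^\star=w]=\beta_0^\star$ and the uniformly bounded conditional variance $(\gamma_1^\star)^2e^{-w}\leq(\gamma_1^\star)^2e^{-w_0}$ together with Chebyshev's inequality to force a return to a compact set from which the atom is hit with probability at least $e^{-e^M}$. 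Your route buys a self-contained lemma that makes the mechanism of uniform ergodicity explicit, including the sign dichotomy that the degenerating one-step bound $e^{-e^w}$ imposes; the paper's route is shorter but delegates the only quantitative step to the literature. One caveat worth noting: the paper states Doeblin's condition in a nonstandard form, requiring \emph{all} of $W_t,\dots,W_{t+m-1}$ to lie in $B$, while your two-step bound controls only the terminal state, i.e.\ $P^2(w,B)\geq\delta$ whenever $w_0\in B$. That is the standard form of Meyn and Tweedie (Theorem 16.0.2), which is the one equivalent to uniform ergodicity and the one actually used downstream in the proof of Proposition \ref{prop1}, so nothing substantive is lost; in the case $\gamma_1^\star<0$ your $m=1$ bound even satisfies the paper's statement verbatim.
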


\begin{proof}[Proof of Lemma \ref{lem:aperiodic_doeblin}]
By (\ref{eq:mut_simple}) and (\ref{eq:Zt}), we observe that:
\begin{equation}\label{eq:Wtstar}
W_t^\star=(\beta_0^\star-\gamma_1^\star)+\gamma_1^\star Y_{t-1}\exp(-W_{t-1}^\star).
\end{equation}
Thus, $\mathcal{F}_{t-2}=\mathcal{F}_{t-1}^{W^\star}:=\sigma(W_s,s\leq t-1)$.
By (\ref{eq:Yt}), the distribution of $Y_{t-1}$ conditionally to $\mathcal{F}_{t-2}$ is $\mathcal{P}(\exp(W_{t-1}^\star))$.
Hence, the distribution of $W_t^\star$ conditionally to $\mathcal{F}_{t-1}^{W^\star}$
is the same as distribution of $W_t^\star$ conditionally to $W_{t-1}^\star$, which means that $(W_t^\star)$ has the Markov property.

Let us now prove that $(W_t^\star)$ is strongly aperiodic which implies that it is aperiodic.
$$
\mathbb{P}(W^\star_{t}=\beta_0^\star-\gamma_1^\star | W^\star_{t-1}=\beta_0^\star-\gamma_1^\star)
=\mathbb{P}(Y_{t-1}=0 | W^\star_{t-1}=\beta_0^\star-\gamma_1^\star)=\exp(-\exp(\beta_0^\star-\gamma_1^\star))>0,
$$
where the first equality comes from (\ref{eq:Wtstar}) and the last equality comes from (\ref{eq:Yt}) since $\mathcal{F}_{t-2}=\mathcal{F}_{t-1}^{W^\star}$.

To prove that $(W_t^\star)$ satisfies Doeblin's condition namely that there exists a probability measure $\nu$ with the property that, for some $m\geq 1$, $\varepsilon>0$
and $\delta >0$,
\begin{equation}\label{eq:doeblin}
\nu(B)>\varepsilon \Longrightarrow \mathbb{P}(W_{t+m-1}\in B,W_{t+m-2}\in B\dots,W_{t+1}\in B,W_{t}\in B|W_{t-1}=x)\geq\delta,
\end{equation}
for all $x$ in the state space $X$ of $W_t^\star$ and $B$ in the Borel sets of $X$, we refer the reader to the proof of Proposition 2 in \cite{davis:dunsmuir:streett:2003}.

% Let us first focus on the case where $\gamma_1^\star<0$. Then, by (\ref{eq:mut_simple}) and (\ref{eq:Zt}),
% $$
% W_t^\star=(\beta_0^\star-\gamma_1^\star)+\gamma_1^\star Y_{t-1}\exp(-W_{t-1}^\star)\leq \beta_0^\star-\gamma_1^\star.
% $$
% Let $\nu$ be a measure having a unit mass in $\beta_0^\star-\gamma_1^\star$ and $B$ the Borel sets such  $\beta_0^\star-\gamma_1^\star\in B$.
% Then, for all $x\leq \beta_0^\star-\gamma_1^\star$, 
% $$
% \mathbb{P}(W_{t+1}\in B|W_{t-1}=x)\geq\mathbb{P}(W_{t+1}\in B|W_{t-1}=x)\geq \mathbb{P}(W_{t+1}=\beta_0^\star-\gamma_1^\star|W_{t-1}=x)
% $$
\end{proof}

\begin{proof}[Proof of Proposition \ref{prop1}]
For proving Proposition \ref{prop1}, we shall use Theorems 1.3.3 and 1.3.5 of
\cite{taniguchibook:2012}. In order to apply these theorems it is enough to prove that $(W_t^\star)$ is a strictly stationary and ergodic process
since $Y_t W_t(\beta_0^\star,\gamma_1)-\exp(W_t(\beta_0^\star,\gamma_1))$ is a measurable function of $W_{t+1}^\star,W_t^\star,\dots,W_2^\star$. 
Note that the latter fact comes from (\ref{eq:mut_simple}) and (\ref{eq:Zt}) for $Y_t$ and from (\ref{eq:Wt}) with $q=1$ and $p=0$ for $W_t$.

In order to prove that $(W_t^\star)$ is a strictly stationary and ergodic process, we have first to prove that $(W_t^\star)$ is an aperiodic Markov process satisfying Doeblin's condition, 
see Lemma \ref{lem:aperiodic_doeblin}. The statement of Lemma \ref{lem:aperiodic_doeblin} corresponds to Assertion (iv) of Theorem 16.0.2 of \cite{meyn:tweedie}  which is equivalent to Assertion (i) of this theorem, 
and implies that $(W_t^\star)$ is uniformly ergodic.

Hence, by Definition (16.6) of uniform ergodicity given in \cite{meyn:tweedie}, there exists a unique stationary invariant measure for $(W_t^\star)$, see also the paragraph below Equation (1.3) 
of \cite{Sandric:2017} for an additional justification. Combining that existence of a unique stationary invariant measure for $(W_t^\star)$ with the following arguments shows that $(W_t^\star)$ 
is a strictly stationary process and also an ergodic Markov process.

By Theorem 3.6.3, Corollary 3.6.1 and Definition 3.6.6 of \cite{stout:1974}, if the process $(W_t^\star)$ is started with its unique stationary invariant distribution, $(W_t^\star)$ is a strictly stationary process.

By Definition 3.6.8 of  \cite{stout:1974}, the existence of a unique stationary invariant measure for $(W_t^\star)$ 
means that $(W_t^\star)$ is an ergodic Markov process, see also the paragraph below (b) \cite[p. 717]{Sandric:2017}.

Finally, by Theorem 3.6.5 of  \cite{stout:1974}, since $(W_t^\star)$ is an ergodic Markov process and a strictly stationary process, 
$(W_t^\star)$ is an ergodic and strictly stationary process in the sense of the assumption of Theorem 1.3.5 of \cite{taniguchibook:2012}.
\end{proof}

\subsubsection{\textcolor{black}{Proof of Proposition \ref{prop2}}}

Note that for all $\gamma_1$,
\begin{align*}
\mathcal{L}(\gamma_1)&=\mathbb{E}\left[Y_3 W_3(\beta_0^\star,\gamma_1)-\exp(W_3(\beta_0^\star,\gamma_1))\right]
=\mathbb{E}\left[\mathbb{E}\left[Y_3 W_3(\beta_0^\star,\gamma_1)-\exp(W_3(\beta_0^\star,\gamma_1))|\mathcal{F}_2\right]\right]\\
&=\mathbb{E}\left[\exp(W_3^\star) W_3(\beta_0^\star,\gamma_1)-\exp(W_3(\beta_0^\star,\gamma_1))\right]\\
&=\mathbb{E}\left[\exp(W_3^\star) \left(W_3(\beta_0^\star,\gamma_1)-W_3^\star+W_3^\star-\exp(W_3(\beta_0^\star,\gamma_1)-W_3^\star)\right)\right]\\
&\leq \mathbb{E}\left[\exp(W_3^\star) \left(W_3^\star-1\right)\right]=\mathcal{L}(\gamma_1^\star),
\end{align*}
where the inequality comes from the following inequality $x-\exp(x)\leq -1$, for all $x\in\mathbb{R}$. This inequality is an equality only when 
$x=0$ which means that $\gamma_1=\gamma_1^\star$.

\subsubsection{\textcolor{black}{Proof of Proposition \ref{prop3}}}

%\begin{proof}[Proof of Proposition \ref{prop3}]
The proof of this proposition comes from Proposition \ref{prop1} and the stochastic equicontinuity of $n^{-1}L(\beta_0^\star,\gamma_1)$. Thus, it is enough to prove that
there exists a positive $\delta$ such that
$$
\sup_{|\gamma_1-\gamma_2|\leq\delta}\left|\frac{L(\beta_0^\star,\gamma_1)}{n}-\frac{L(\beta_0^\star,\gamma_2)}{n}\right|\stackrel{p}{\longrightarrow}0, \textrm{ as $n$ tecvnds to infinity.}
$$
Observe that, by (\ref{eq:L:beta_0}),
\begin{align*}
\left|\frac{L(\beta_0^\star,\gamma_1)}{n}-\frac{L(\beta_0^\star,\gamma_2)}{n}\right|
&\leq\frac1n\sum_{t=1}^n Y_t\left|W_t(\beta_0^\star,\gamma_1)-W_t(\beta_0^\star,\gamma_2)\right|\\
&+\frac1n\sum_{t=1}^n\left|\exp\left(W_t(\beta_0^\star,\gamma_1)\right)-\exp\left(W_t(\beta_0^\star,\gamma_2)\right)\right|.
\end{align*}
Let us first focus on bounding the following expression for $t\geq 2$ (since $W_1(\beta_0^\star,\gamma)=\beta_0^\star$, for all $\gamma$). By (\ref{eq:W_Z})
\begin{align*}
&\left|W_t(\beta_0^\star,\gamma_1)-W_t(\beta_0^\star,\gamma_2)\right|=\left|Z_t(\gamma_1)-Z_t(\gamma_2)\right|=\left|\gamma_1 E_{t-1}(\gamma_1)-\gamma_2 E_{t-1}(\gamma_2)\right|\\
&=\left|\gamma_1 \left[Y_{t-1}\exp(-W_{t-1}(\beta_0^\star,\gamma_1))-1\right]-\gamma_2 \left[Y_{t-1}\exp(-W_{t-1}(\beta_0^\star,\gamma_2))-1\right]\right|\\
&=\left|Y_{t-1}\textrm{e}^{-\beta_0^\star}\left[\gamma_1\exp(-Z_{t-1}(\gamma_1))-\gamma_2\exp(-Z_{t-1}(\gamma_2))\right]+\gamma_2-\gamma_1\right|\\
&\leq Y_{t-1}\textrm{e}^{-\beta_0^\star}\left[\left|\gamma_1-\gamma_2\right|\exp(-Z_{t-1}(\gamma_1))+|\gamma_2|\left|\exp(-Z_{t-1}(\gamma_1))-\exp(-Z_{t-1}(\gamma_2))\right|\right]
+\left|\gamma_2-\gamma_1\right|\\
&\leq Y_{t-1}\textrm{e}^{-\beta_0^\star}\left|\gamma_1-\gamma_2\right|\exp(-Z_{t-1}(\gamma_1))\\
&+Y_{t-1}\textrm{e}^{-\beta_0^\star}|\gamma_2|\exp(-Z_{t-1}(\gamma_1))\left|Z_{t-1}(\gamma_1)-Z_{t-1}(\gamma_2)\right|
\exp(|Z_{t-1}(\gamma_1)-Z_{t-1}(\gamma_2)|)\\
&+\left|\gamma_2-\gamma_1\right|,
\end{align*}
where we used in the last inequality that for all $x$ and $y$ in $\mathbb{R}$,
\begin{equation}\label{eq:exp_x_y}
|\textrm{e}^x-\textrm{e}^y|=\textrm{e}^x|1-\textrm{e}^{y-x}|\leq \textrm{e}^x |y-x|\textrm{e}^{|y-x|}.
\end{equation}
Observing that
\begin{equation}\label{eq:exp_Zt}
\exp(-Z_{t}(\gamma_1))=\exp\left(-\gamma_1\left[Y_{t-1}\textrm{e}^{-\beta_0^\star}\exp(-Z_{t-1}(\gamma_1))-1\right]\right),
\end{equation}
and $|Z_{2}(\gamma_1)-Z_{2}(\gamma_2)|\leq\delta[Y_1\textrm{e}^{-\beta_0^\star}+1]$ we get, for $\gamma_1$ and $\gamma_2$ such that $|\gamma_1-\gamma_2|\leq\delta$, that
\begin{equation}\label{eq:diff_W}
\left|W_t(\beta_0^\star,\gamma_1)-W_t(\beta_0^\star,\gamma_2)\right|
\leq\delta\; F(Y_{t-1},Y_{t-2},\dots,Y_1),
\end{equation}
where $F$ is a measurable function.
By (\ref{eq:exp_x_y}),
\begin{align*}
&\left|\exp\left(W_t(\beta_0^\star,\gamma_1)\right)-\exp\left(W_t(\beta_0^\star,\gamma_2)\right)\right|\\
&\leq \exp\left(W_t(\beta_0^\star,\gamma_1)\right)\left|W_t(\beta_0^\star,\gamma_1)-W_t(\beta_0^\star,\gamma_2)\right|\exp\left(\left|W_t(\beta_0^\star,\gamma_1)-W_t(\beta_0^\star,\gamma_2)\right|\right)\\
&\leq \delta G(Y_{t-1},Y_{t-2},\dots,Y_1)
\end{align*}
where the last inequality comes from (\ref{eq:diff_W}), (\ref{eq:exp_Zt}) and (\ref{eq:W_Z}) and where $G$ is a measurable function.
Thus, we get that 
$$
\left|\frac{L(\beta_0^\star,\gamma_1)}{n}-\frac{L(\beta_0^\star,\gamma_2)}{n}\right|\leq\frac{\delta}{n}\sum_{t=1}^n H(Y_t,Y_{t-1},\dots,Y_1),
$$
which gives the result by using similar arguments as those given in the proof of Proposition \ref{prop1} namely that $(Y_t)$ is strictly stationary and ergodic.
By Theorem 1.3.3 of \cite{taniguchibook:2012}, $H(Y_t,Y_{t-1},\dots,Y_1)$ is strictly stationary and ergodic since $(Y_t)$ has these properties. Thus, $\mathbb{E}[|H(Y_t,Y_{t-1},\dots,Y_1)|]<\infty$, which
concludes the proof by Theorem 1.3.5 of \cite{taniguchibook:2012}.

% it is enough to focus on bounding
% $$
% \left|W_t(\beta_0^\star,\gamma_1)-W_t(\beta_0^\star,\gamma_2)\right|=\left|Z_t(\gamma_1)-Z_t(\gamma_2)\right|
% $$
% and
% $$
% \exp\left(|W_t(\beta_0^\star,\gamma_1)|\right)\leq\textrm{e}^{|\beta_0^\star|}\exp\left(|Z_t(\gamma_1)|\right).
% $$

% \end{proof}

\section{Conclusion}
In this paper we propose a novel and efficient two-stage variable selection approach
for sparse GLARMA models, which are pervasive for modeling discrete-valued time
series. It consists in first estimating the ARMA coefficients and then in estimating
the regression coefficients by using a regularized approach. In the course of this
study we have shown that our method has two main features which make it very
attractive. Firstly, our approach showed very good statistical performance since it is
able to outperform the other methods in recovering the non null regression coefficients.
Secondly, its low computational load makes its use possible on relatively large
data.

%\clearpage

\section*{Appendix}

This appendix contains additional results for the support recovery of $\boldsymbol{\beta}^\star$ and for the estimation of $\boldsymbol{\gamma}^\star$ discussed in
  Section \ref{sec:sparse_estim}.
In addition to the error bar plots for 10\% sparsity, Tables \ref{sim_data_TPR} and \ref{sim_data_FPR} summarize the information for
all the experiments that we conducted.

%\subsection*{Additional results for the support recovery of $\boldsymbol{\beta}^\star$ discussed in
 % Section \ref{sec:sparse_estim}}

\begin{figure}[!h]
  \centering
  \includegraphics[scale=0.28]{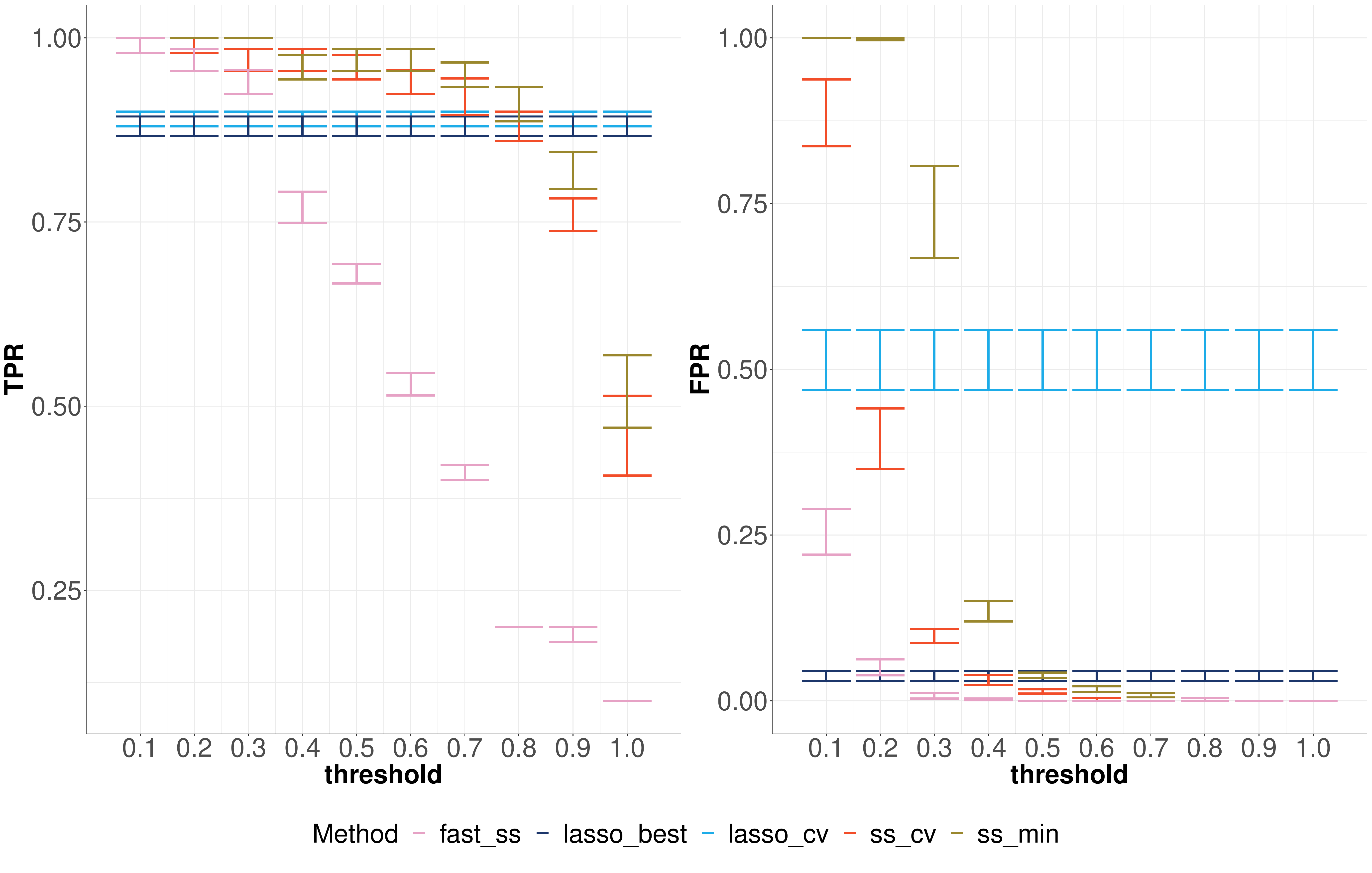}
  \caption{Error bars of the TPR and FPR associated to the support recovery of $\boldsymbol{\beta}^\star$ for five methods with respect to the thresholds when $n=1000$, $q=1$, $p=100$ and a 10\% sparsity level.
    %All the $\beta_i^\star=0$ except for ten of them: $\beta_1^\star=1.73$, $\beta_3^\star=1.2$, $\beta_5^\star=0.67$, $\beta_{10}^\star=0.5$, $\beta_{14}^\star=-0.38$, $\beta_{17}^\star=0.29$,$\beta_{30}^\star=-0.64$, $\beta_{33}^\star=-0.13$, $\beta_{38}^\star=-0.1$ and $\beta_{44}^\star=-0.07$.
    \label{fig:TPR:FPR:1:10}}
\end{figure}

\begin{figure}[!h]
  \centering
  \includegraphics[scale=0.28]{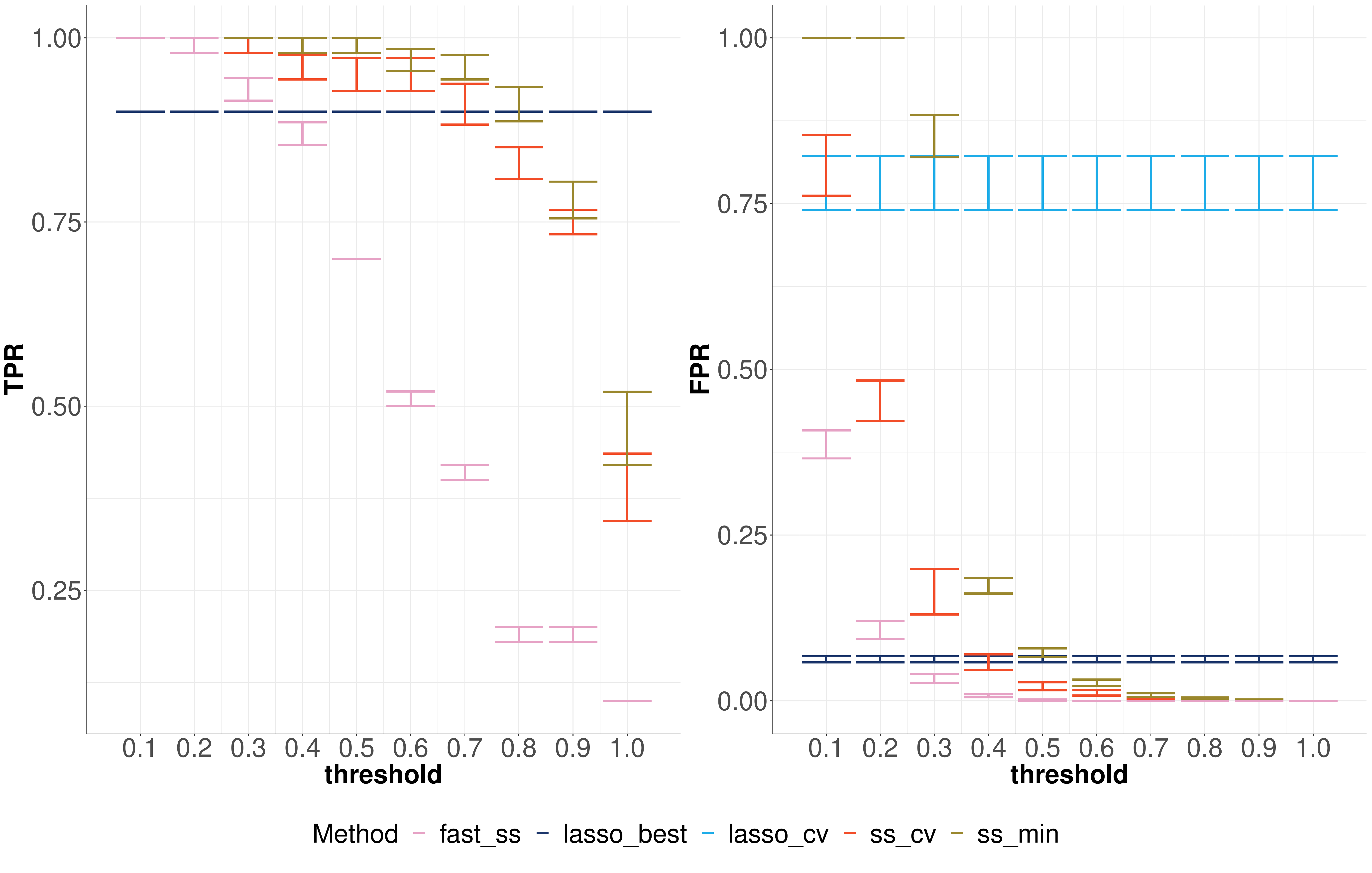}
  \caption{Error bars of the TPR and FPR associated to the support recovery of $\boldsymbol{\beta}^\star$ for five methods with respect to the thresholds when $n=1000$, $q=2$, $p=100$ and a 10\% sparsity level.
    % All the $\beta_i^\star=0$ except for ten of them: $\beta_1^\star=1.73$, $\beta_3^\star=1.2$, $\beta_5^\star=0.67$, $\beta_{10}^\star=0.5$, $\beta_{14}^\star=-0.38$, $\beta_{17}^\star=0.29$, $\beta_{30}^\star=-0.64$, $\beta_{33}^\star=-0.13$, $\beta_{38}^\star=-0.1$ and $\beta_{44}^\star=-0.07$.
    \label{fig:TPR:FPR:2:10}}
\end{figure}

\begin{figure}[!h]
  \centering
 \includegraphics[scale=0.28]{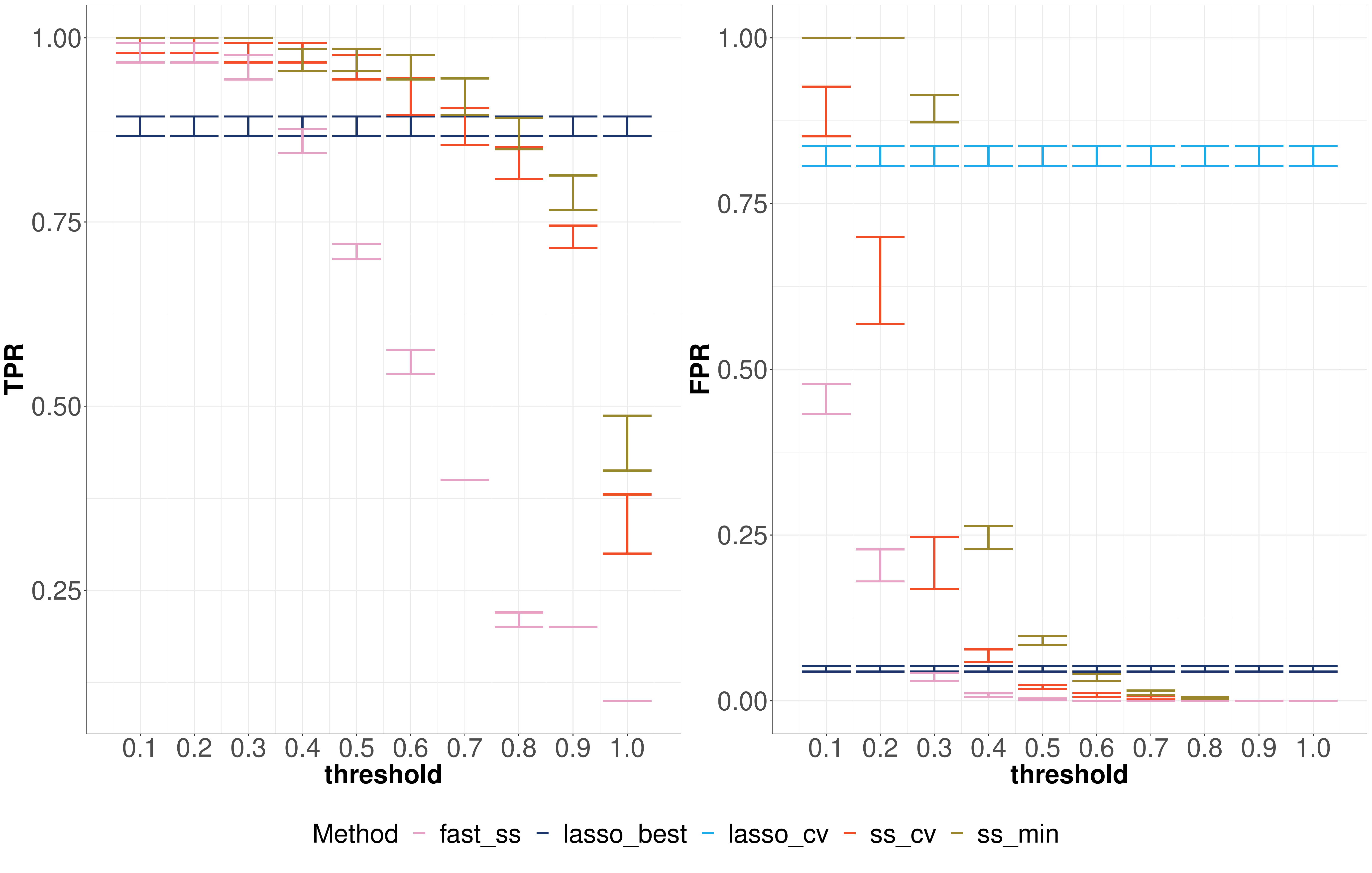}
 \caption{Error bars of the TPR and FPR giving the corresponding final sets of selected variables for five methods with respect to the thresholds when $n=1000$, $q=3$, $p=100$ and a 10\% sparsity level.
   % All the $\beta_i^\star=0$ except for ten of them: $\beta_1^\star=1.73$, $\beta_3^\star=1.2$, $\beta_5^\star=0.67$, $\beta_{10}^\star=0.5$, $\beta_{14}^\star=-0.38$, $\beta_{17}^\star=0.29$,$\beta_{30}^\star=-0.64$, $\beta_{33}^\star=-0.13$, $\beta_{38}^\star=-0.1$ and $\beta_{44}^\star=-0.07$.
   \label{fig:TPR:FPR:3:10}}
\end{figure}

%\subsection*{Additional results for the estimation of $\boldsymbol{\gamma}^\star$ discussed in
 % Section \ref{sec:sparse_estim}}

\begin{figure}[!h]
  \begin{center}
\begin{tabular}{ccc}
  \includegraphics[width=0.32\textwidth, height=4.5cm]{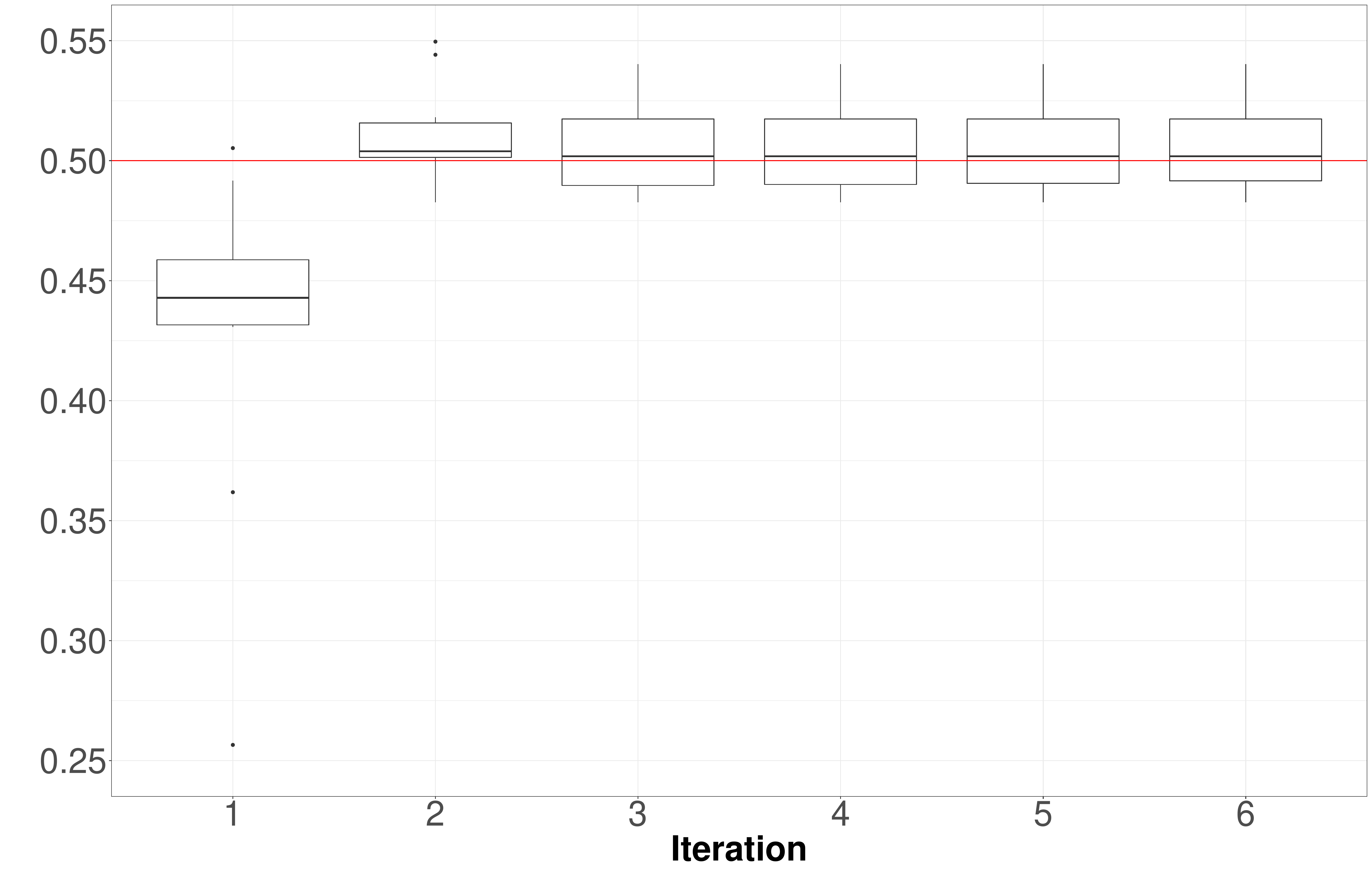}
  &  \includegraphics[width=0.32\textwidth, height=4.5cm]{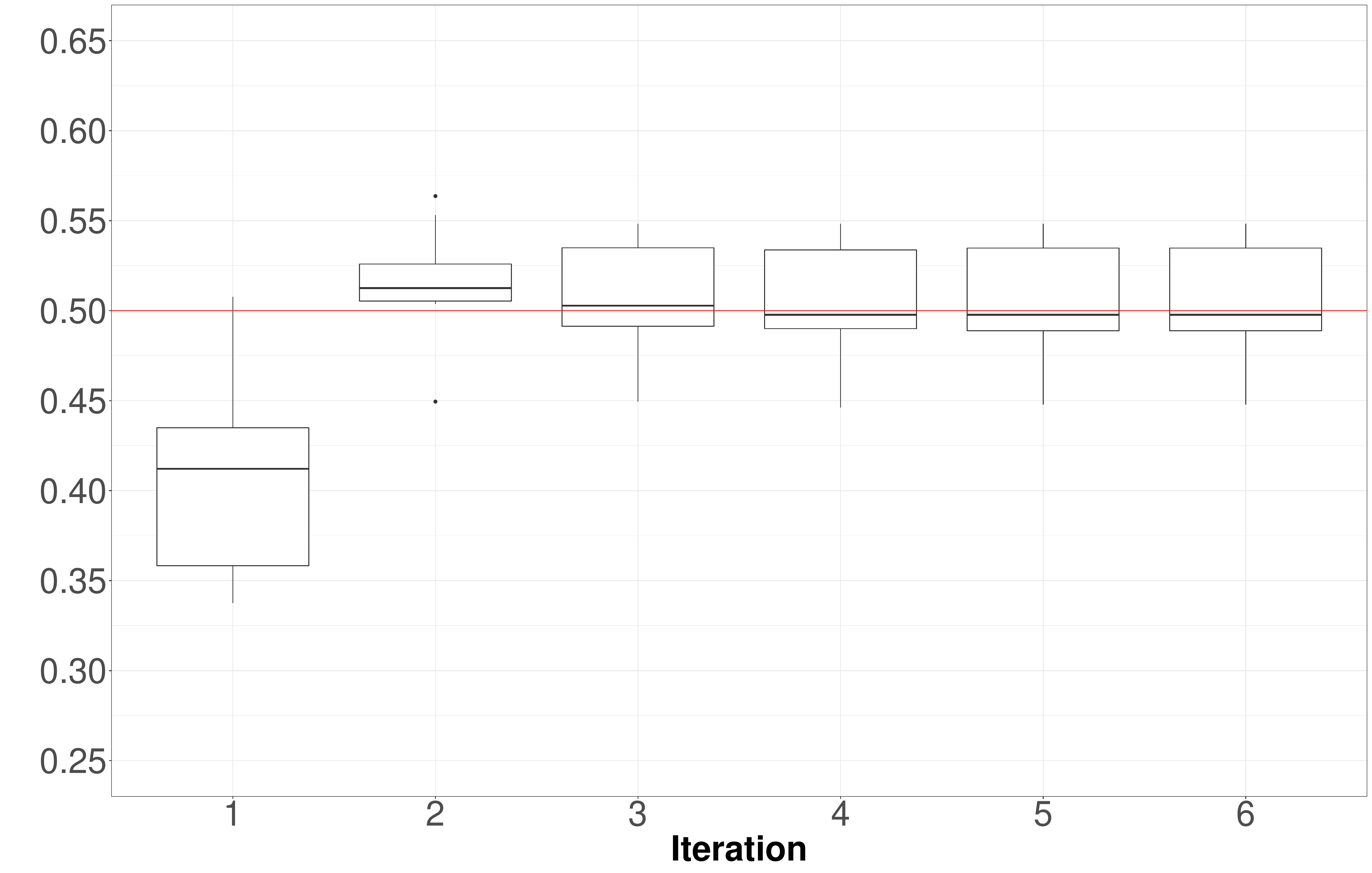} 
  & \includegraphics[width=0.32\textwidth, height=4.5cm]{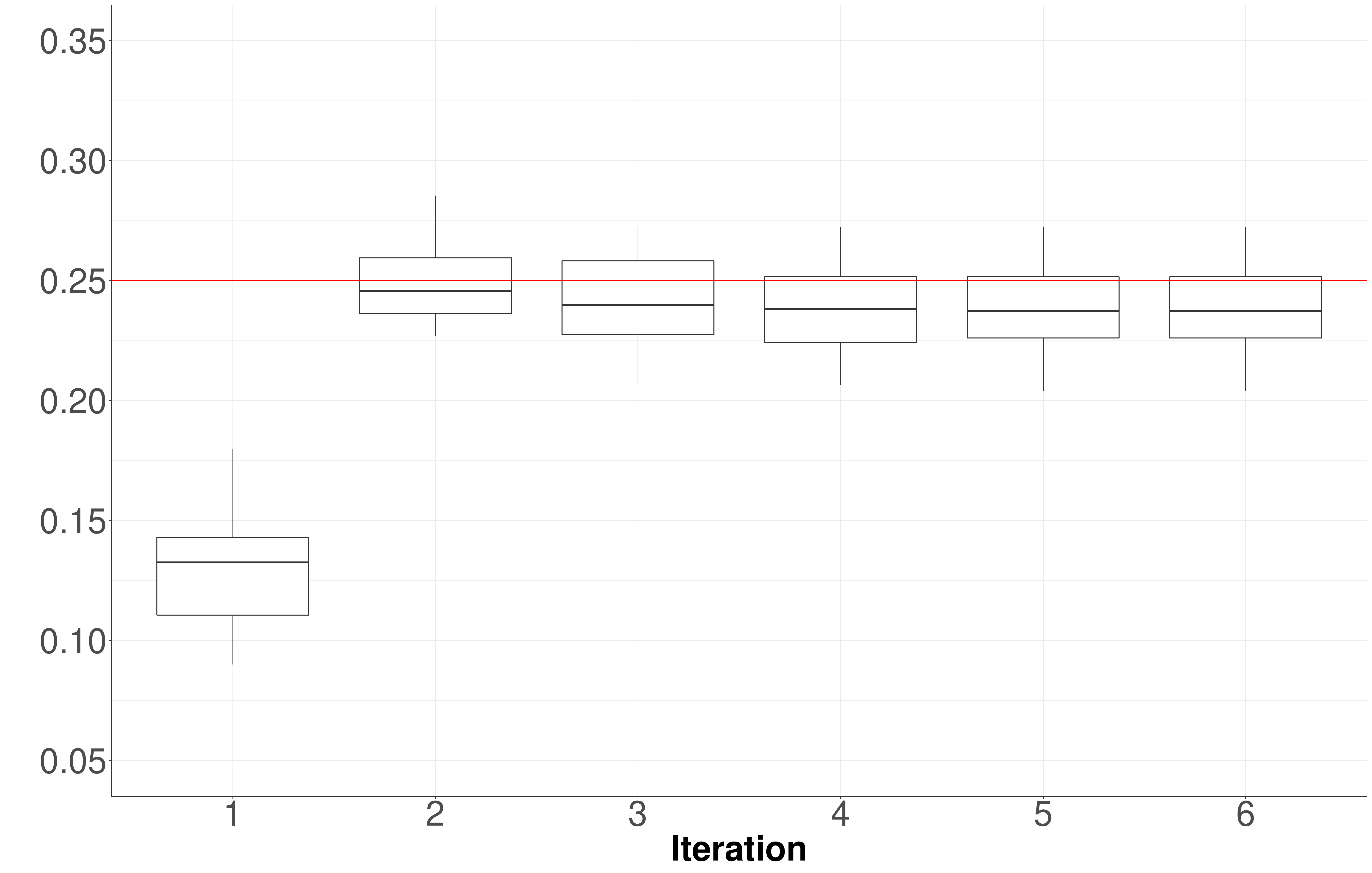}\\
\includegraphics[width=0.32\textwidth, height=4.5cm]{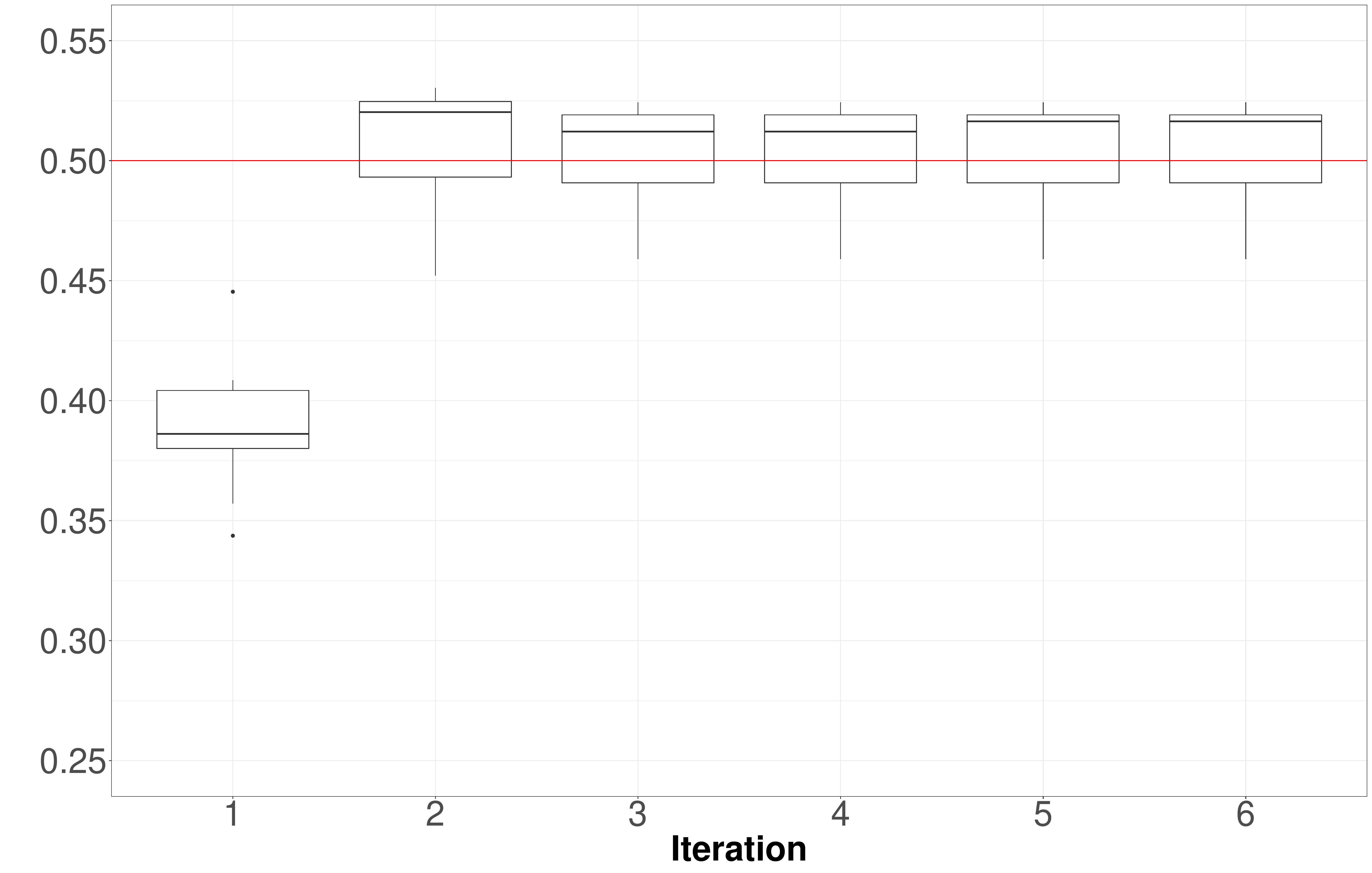}
  &  \includegraphics[width=0.32\textwidth, height=4.5cm]{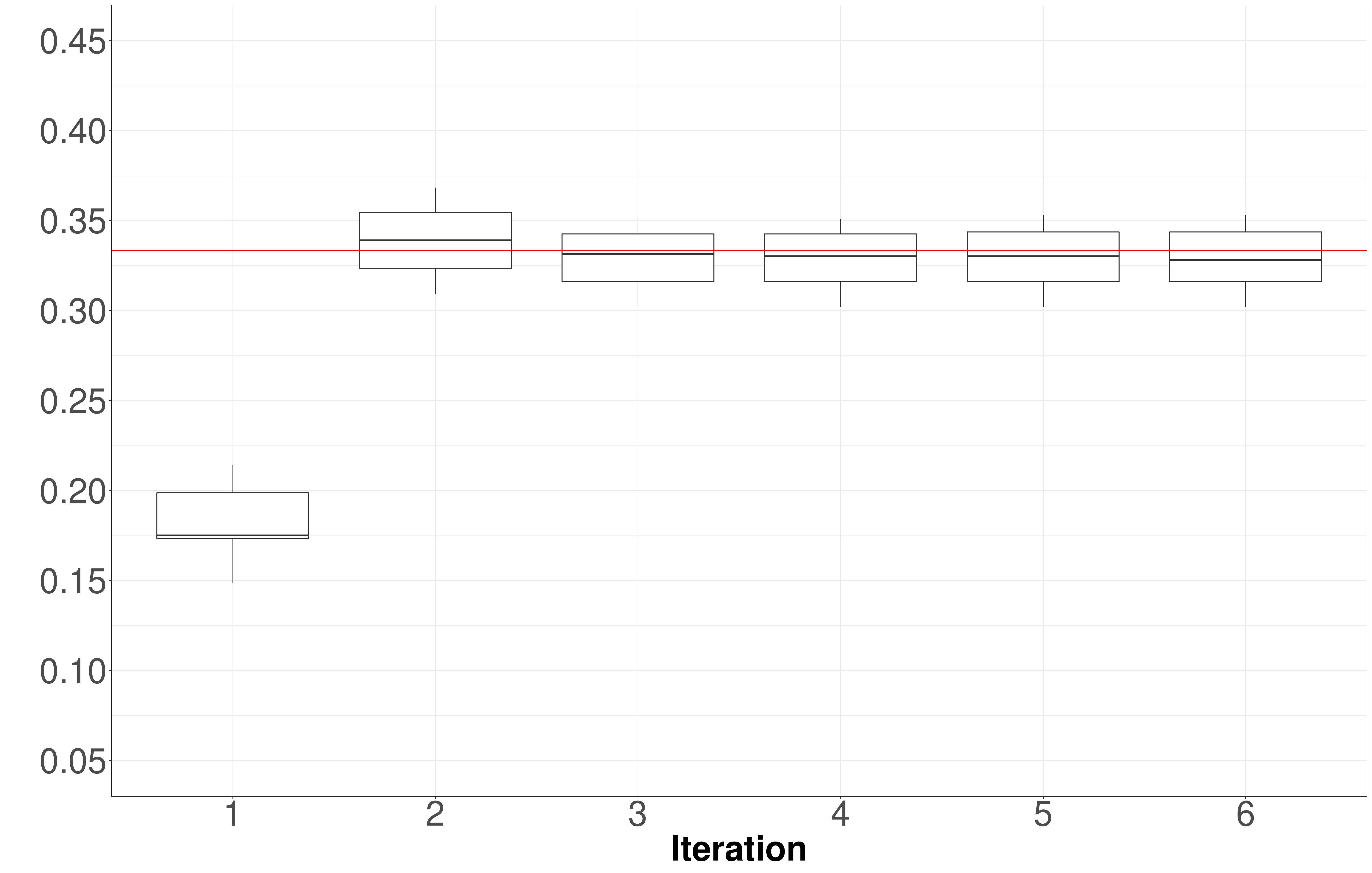} 
  & \includegraphics[width=0.32\textwidth, height=4.5cm]{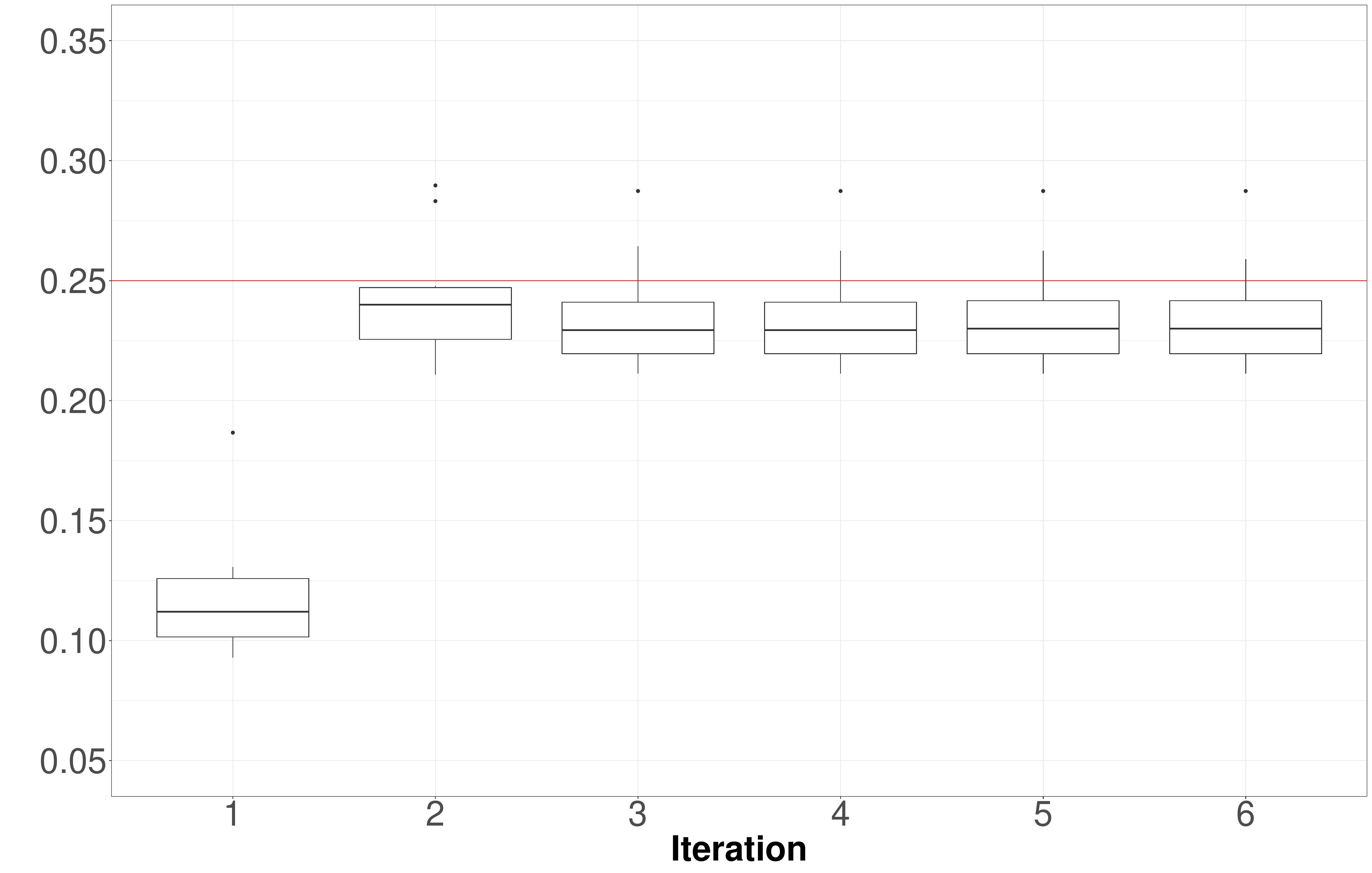}\\
\end{tabular}  
\caption{Boxplots for the estimations of $\boldsymbol{\gamma}^\star$ in Model (\ref{eq:mut_Wt}) with a 10\% sparsity level and $q=1,2,3$ obtained by
  \texttt{ss\_cv}.
Top: $q=1$ and $\gamma_1^\star=0.5$ (left), $q=2$ and $\gamma_1^\star=0.5$ (middle), $q=2$ and $\gamma_2^\star=0.25$ (right). Bottom: $q=3$ and $\gamma_1^\star=0.5$ (left), $q=3$ and  $\gamma_2^\star=1/3$ (middle), $q=3$ and $\gamma_3^\star=0.25$ (right).
The horizontal lines correspond to the values of the $\gamma_i^\star$'s. \label{fig:gamma:10:cv}}
 \end{center}
\end{figure}

\begin{figure}[!h]
  \begin{center}
\begin{tabular}{ccc}
  \includegraphics[width=0.32\textwidth, height=4.5cm]{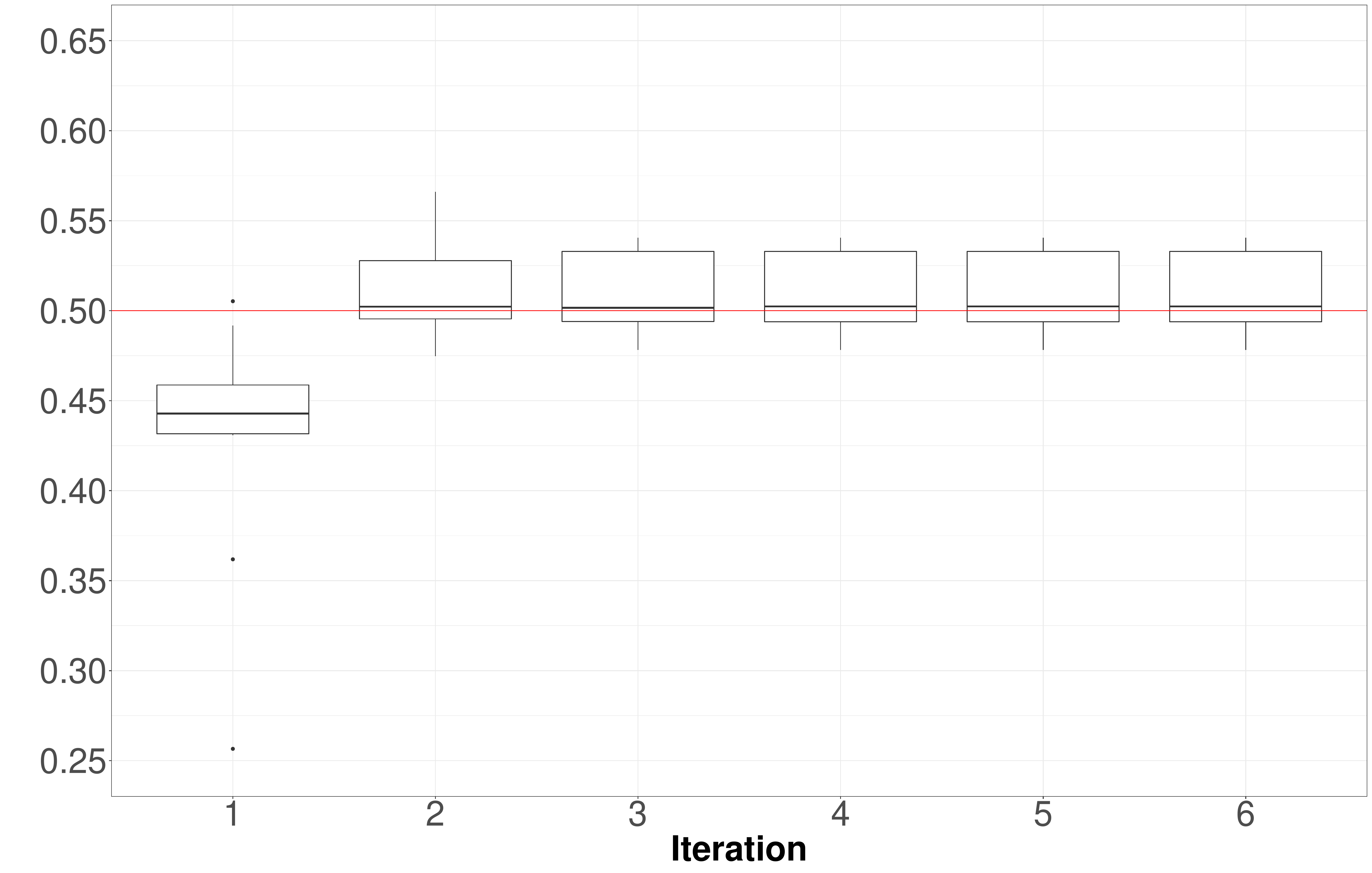}
  &  \includegraphics[width=0.32\textwidth, height=4.5cm]{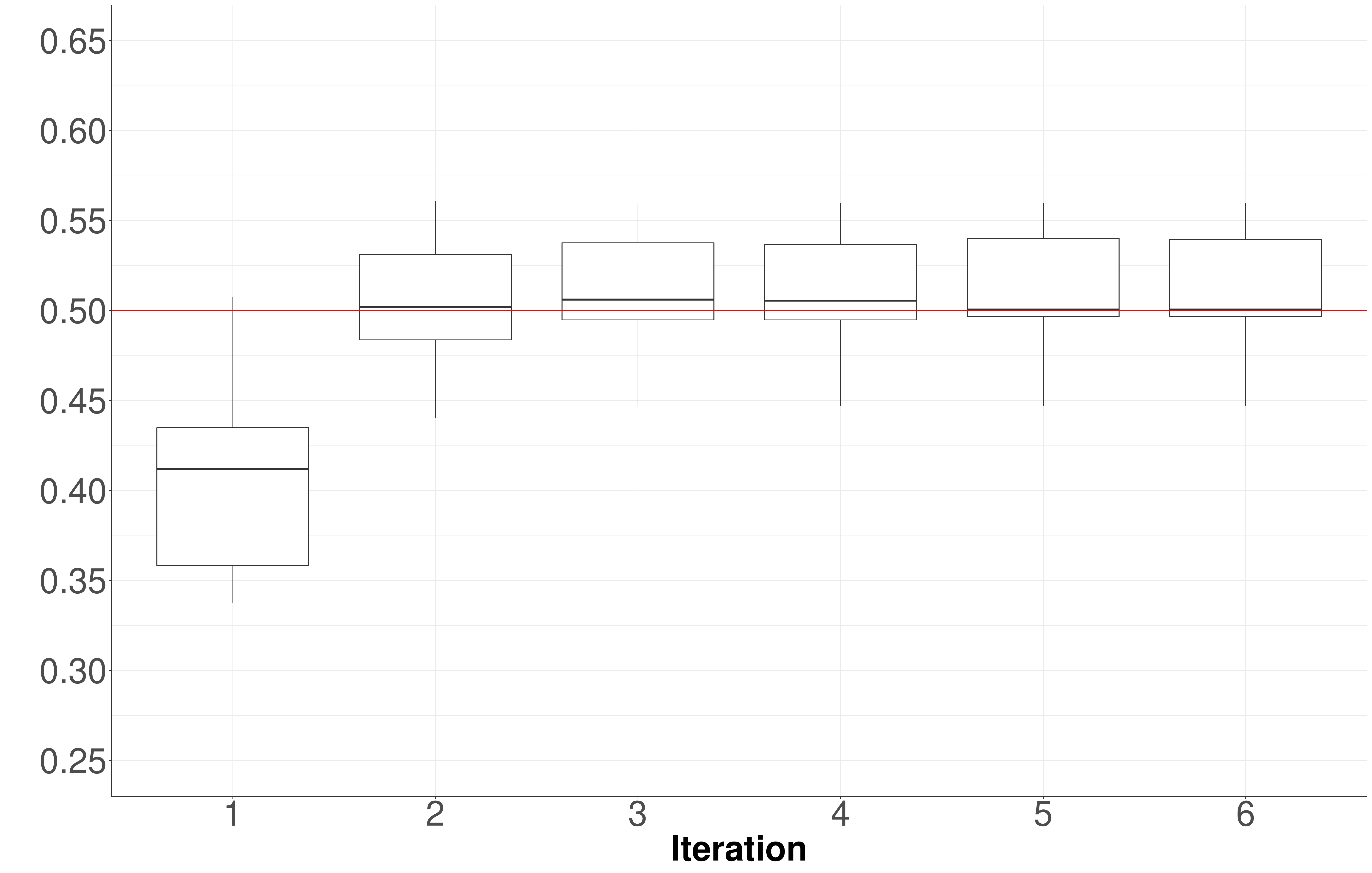} 
  & \includegraphics[width=0.32\textwidth, height=4.5cm]{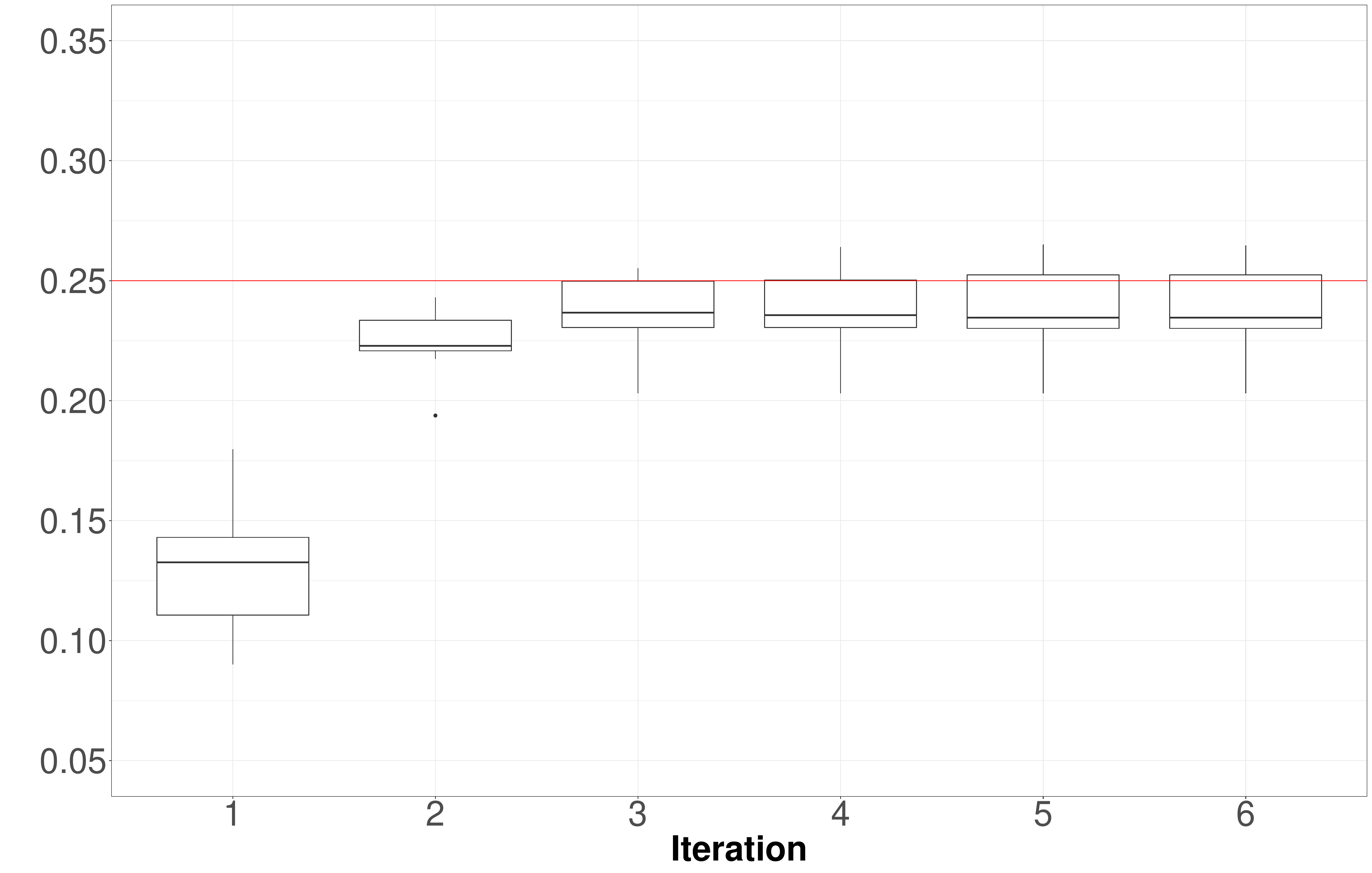}\\
\includegraphics[width=0.32\textwidth, height=4.5cm]{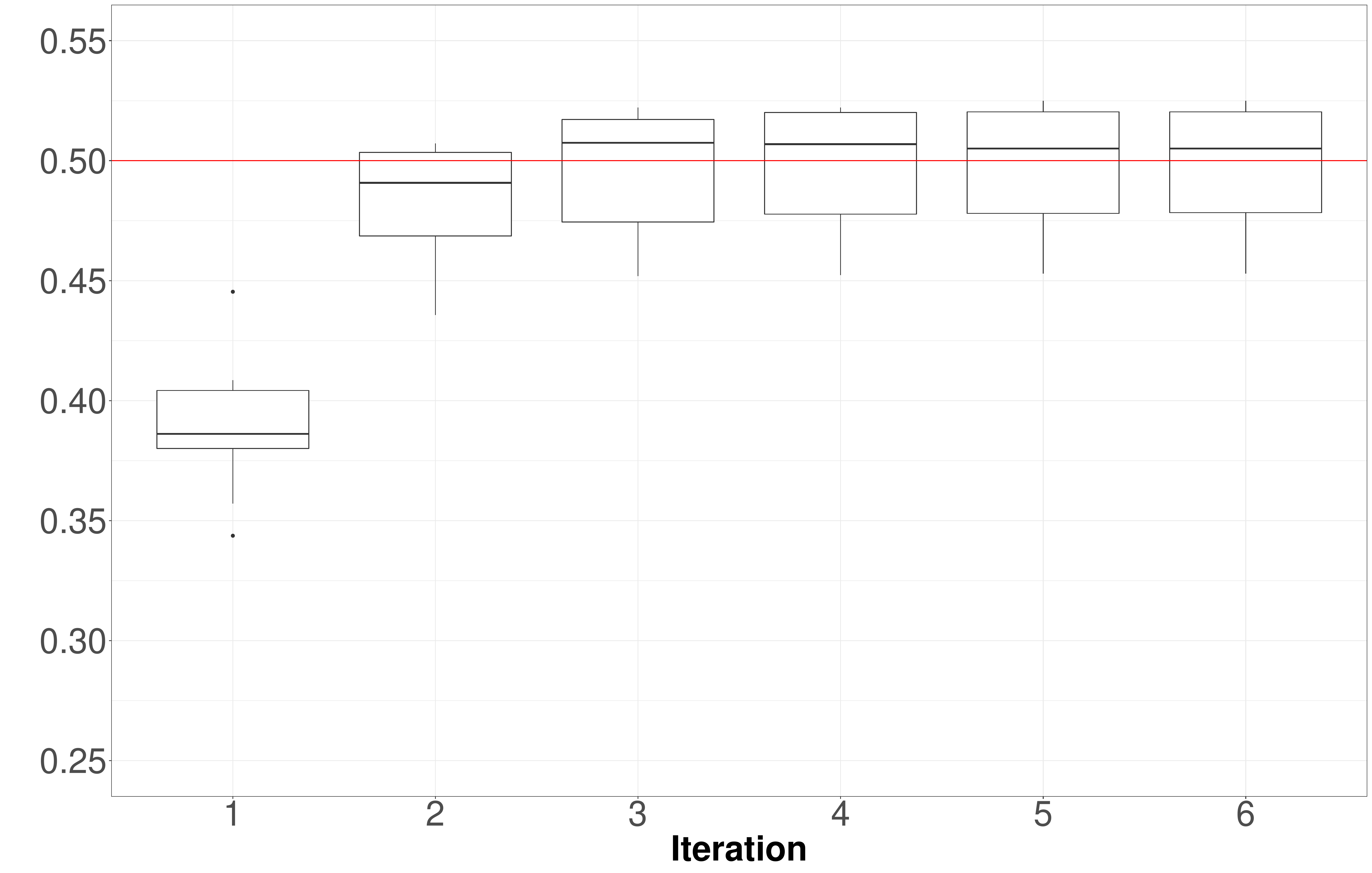}
  &  \includegraphics[width=0.32\textwidth, height=4.5cm]{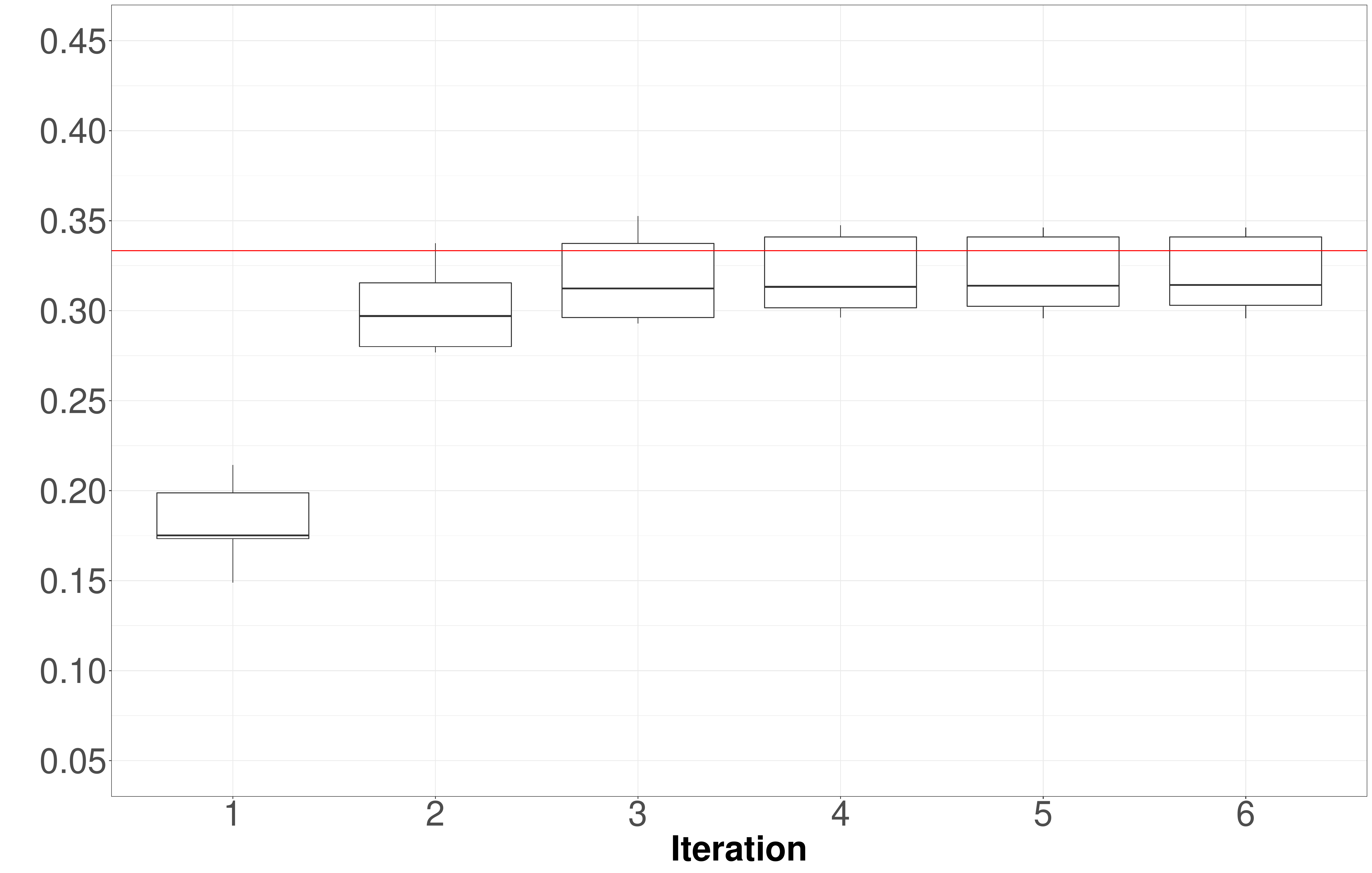} 
  & \includegraphics[width=0.32\textwidth, height=4.5cm]{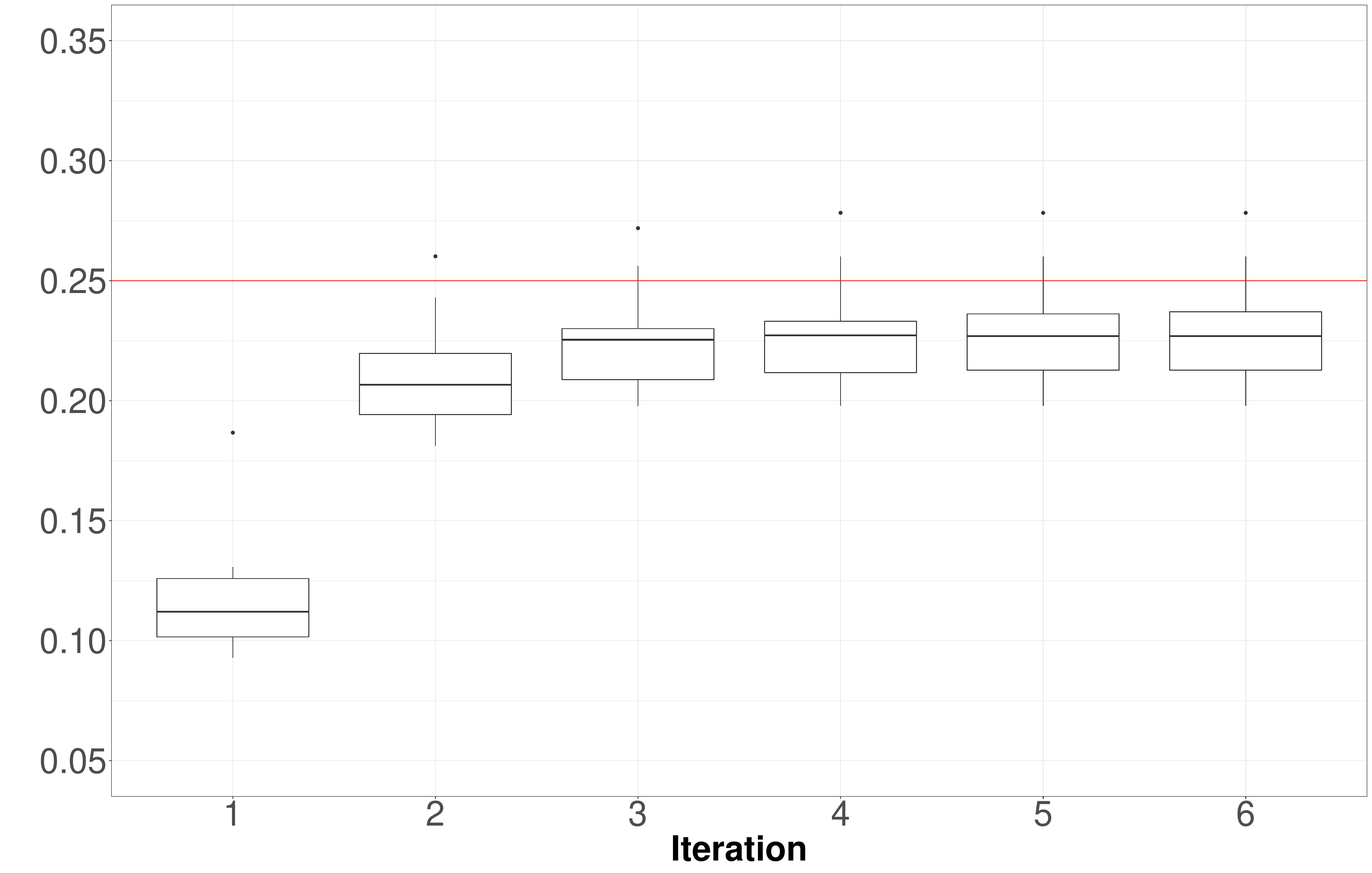}\\
\end{tabular}  
\caption{Boxplots for the estimations of $\boldsymbol{\gamma}^\star$ in Model (\ref{eq:mut_Wt}) with a 10\% sparsity level and $q=1,2,3$ obtained by
  \texttt{fast\_ss}.
 Top: $q=1$ and $\gamma_1^\star=0.5$ (left), $q=2$ and $\gamma_1^\star=0.5$ (middle), $q=2$ and $\gamma_2^\star=0.25$ (right). Bottom: $q=3$ and $\gamma_1^\star=0.5$ (left), $q=3$ and  $\gamma_2^\star=1/3$ (middle), $q=3$ and $\gamma_3^\star=0.25$ (right).
   The horizontal lines correspond to the values of the $\gamma_i^\star$'s.\label{fig:gamma:10:fast}}
 \end{center}
\end{figure}

\begin{figure}[!h]
  \begin{center}
\begin{tabular}{ccc}
  \includegraphics[width=0.32\textwidth, height=4.5cm]{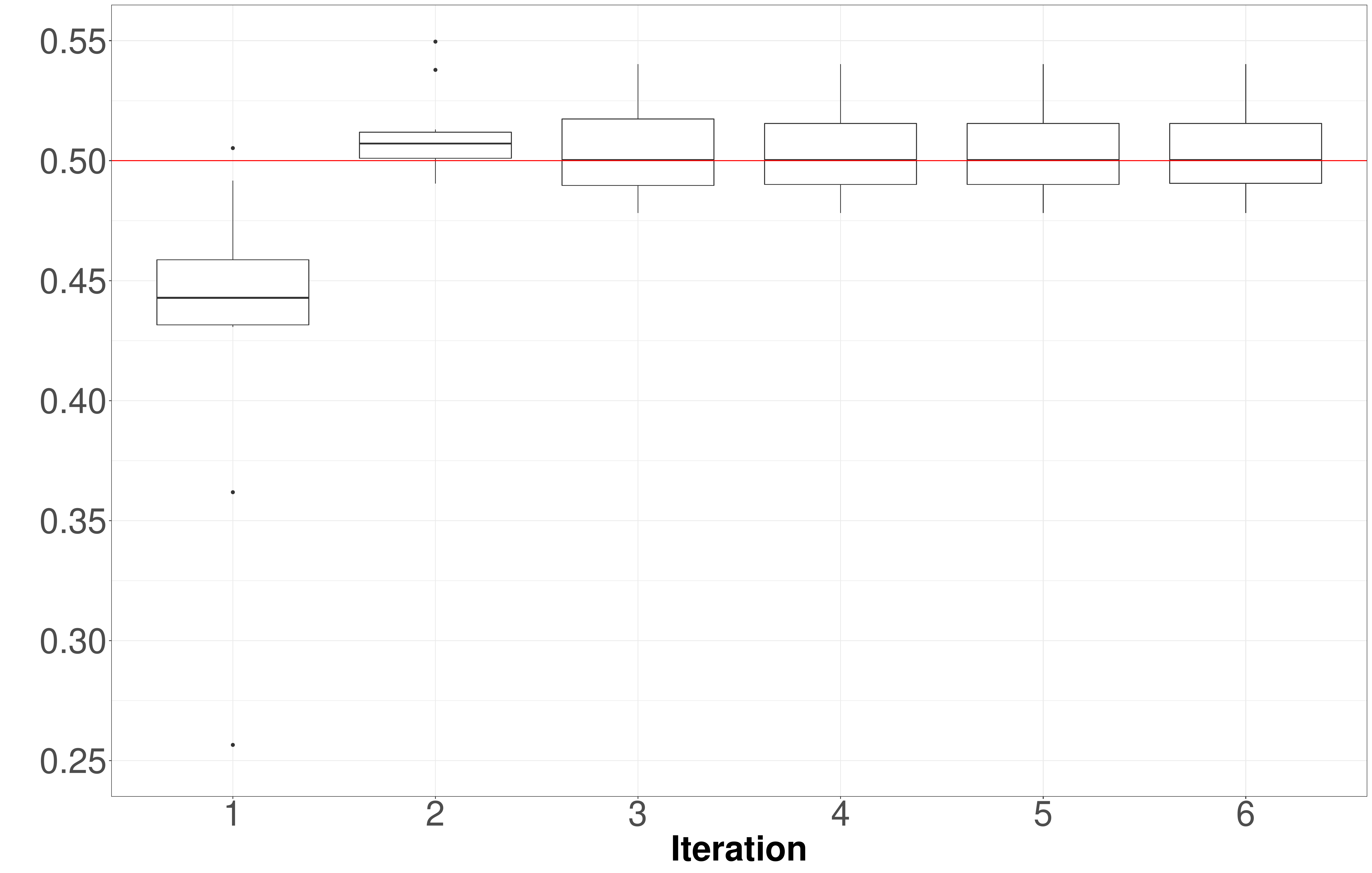}
  &  \includegraphics[width=0.32\textwidth, height=4.5cm]{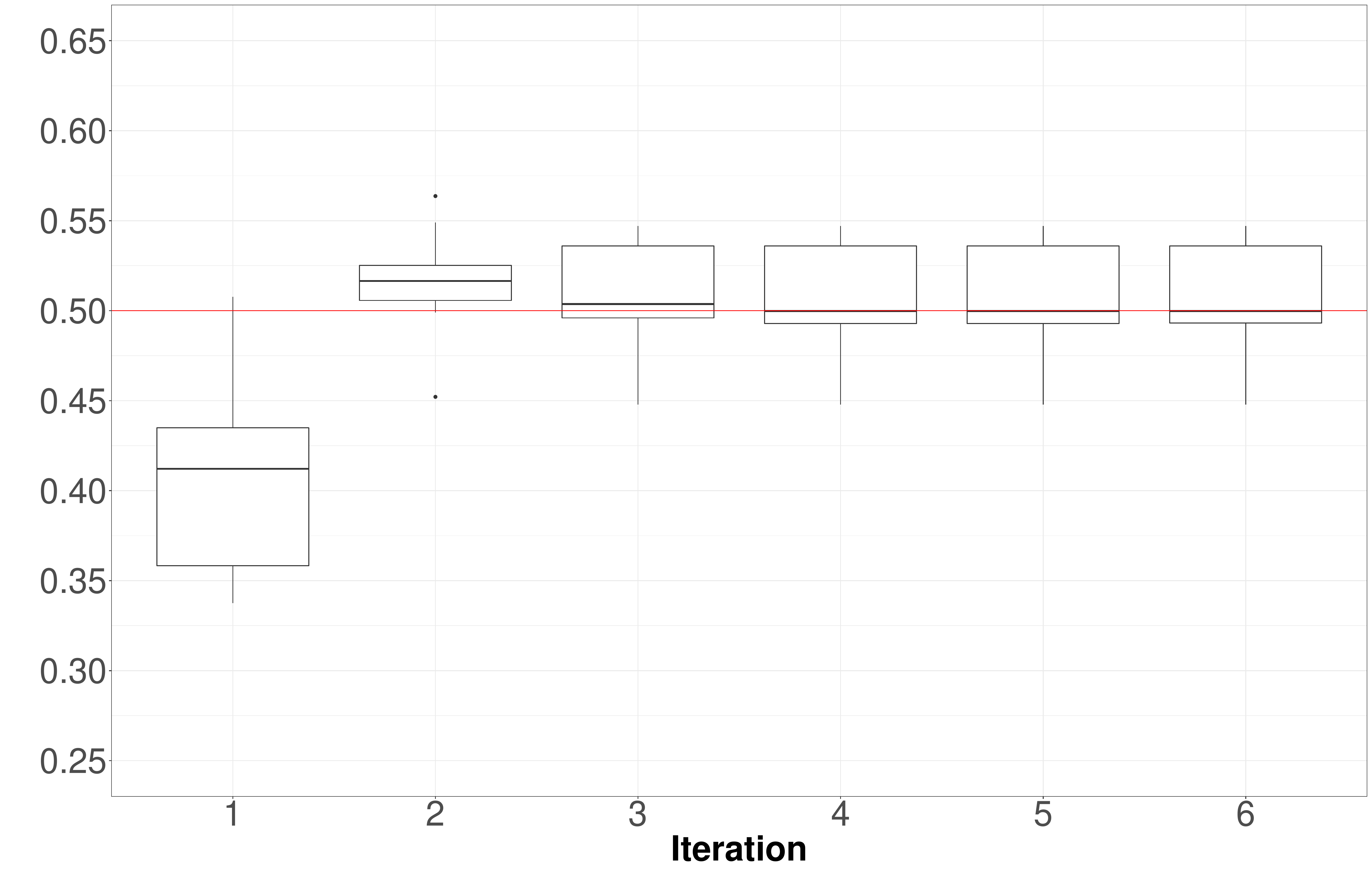} 
  & \includegraphics[width=0.32\textwidth, height=4.5cm]{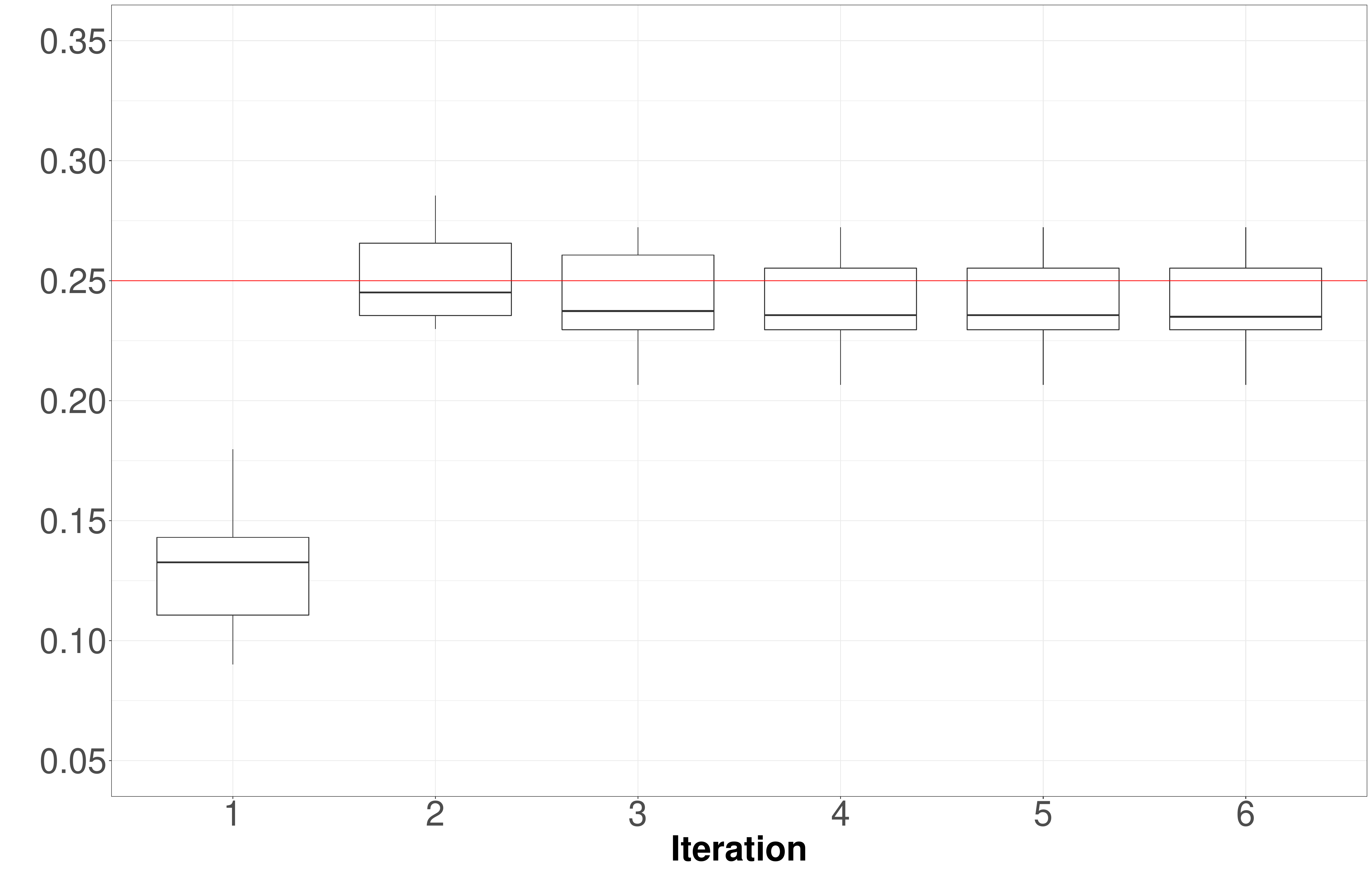}\\
\includegraphics[width=0.32\textwidth, height=4.5cm]{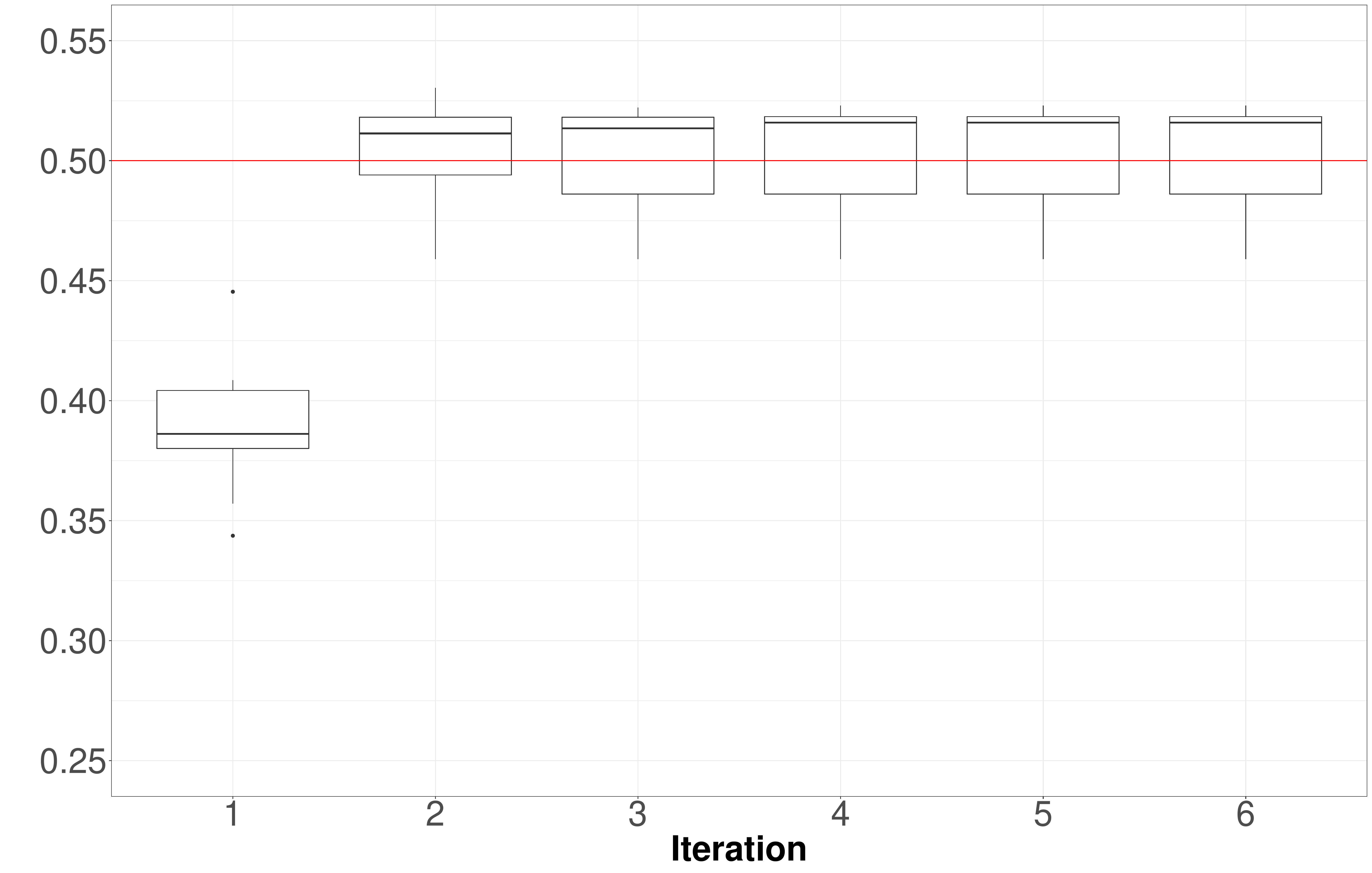}
  &  \includegraphics[width=0.32\textwidth, height=4.5cm]{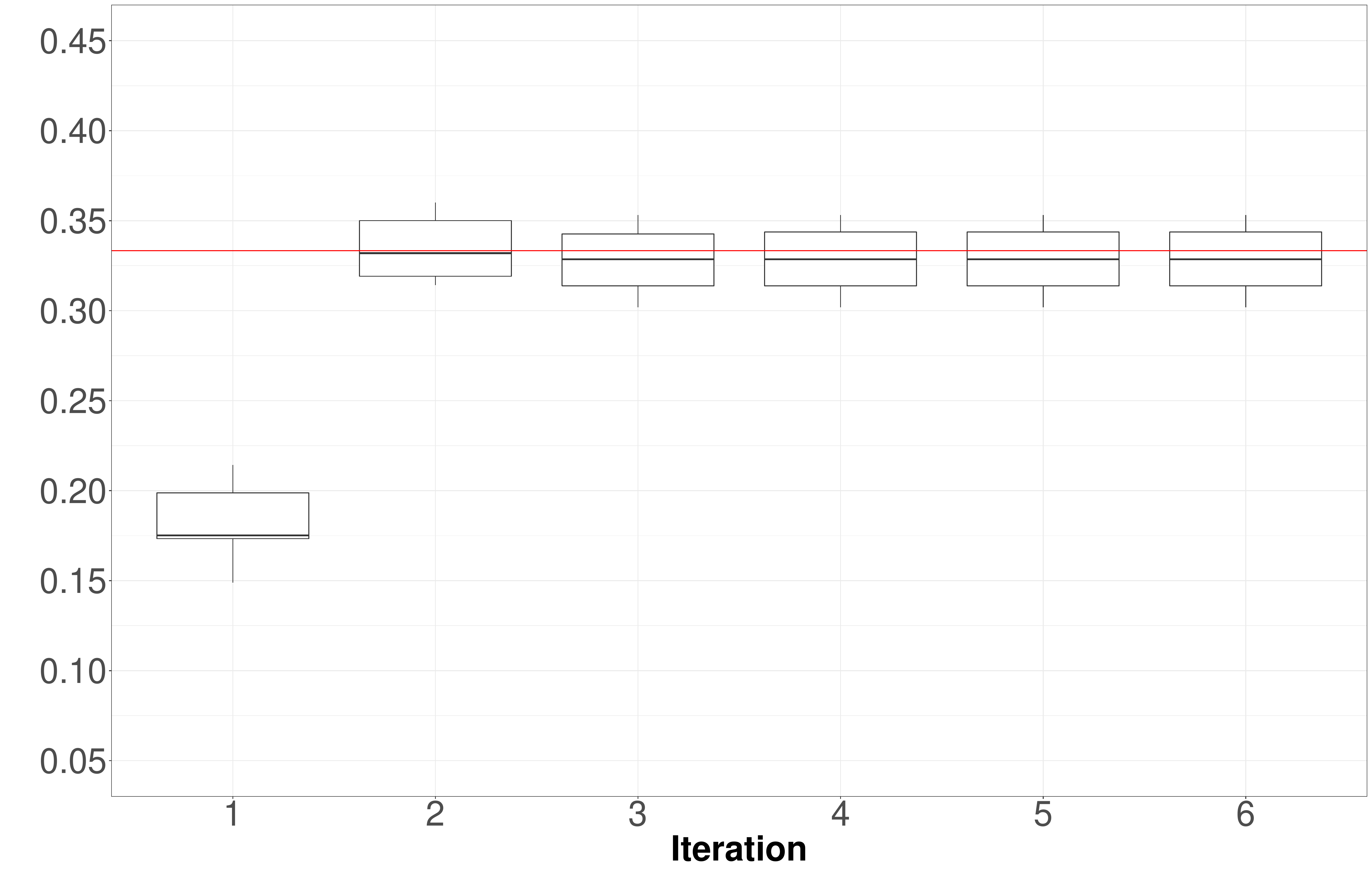} 
  & \includegraphics[width=0.32\textwidth, height=4.5cm]{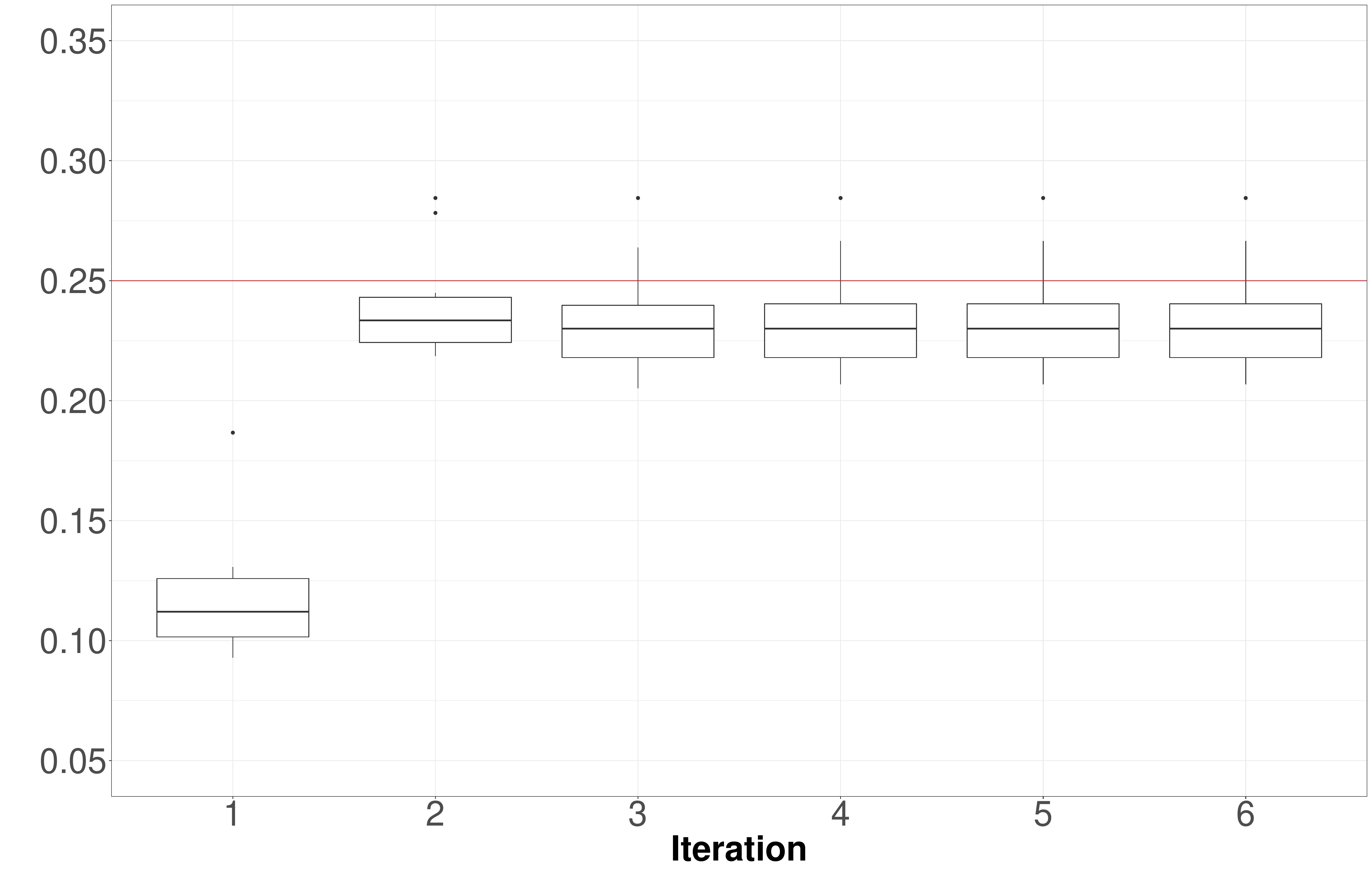}\\
\end{tabular}  
\caption{Boxplots for the estimations of $\boldsymbol{\gamma}^\star$ in Model (\ref{eq:mut_Wt}) with a 10\% sparsity level and $q=1,2,3$ obtained by
  \texttt{ss\_min}.
 Top: $q=1$ and $\gamma_1^\star=0.5$ (left), $q=2$ and $\gamma_1^\star=0.5$ (middle), $q=2$ and $\gamma_2^\star=0.25$ (right). Bottom: $q=3$ and $\gamma_1^\star=0.5$ (left), $q=3$ and  $\gamma_2^\star=1/3$ (middle), $q=3$ and $\gamma_3^\star=0.25$ (right).
   The horizontal lines correspond to the values of the $\gamma_i^\star$'s.\label{fig:gamma:10:min}}
 \end{center}
\end{figure}

\begin{table}[t]
    \centering
    \begin{tabular}{llc|ccccc}
      \toprule
      $n$ & $q$ & sparsity (in \%) & $\mathsf{ss\_cv}$ & $\mathsf{ss\_min}$ & $\mathsf{fast\_ss}$ & $\mathsf{lasso\_cv}$  & $\mathsf{lasso\_best}$   \\
      \midrule
      $1000$ & $1$ & $5$ & 1(0) & 1(0) & 1(0)  & 0.8(0)  & 0.8(0)   \\
      $1000$ & $2$ & $5$ & 0.94(0.09)  & 0.96(0.08)  & 0.98(0.06)  &  0.8(0)  & 0.8(0)   \\
      $1000$ & $3$ & $5$& 0.94(0.1) & 1(0)  & 1(0)  &  0.8(0)  & 0.8(0)  \\
      $500$ & $1$ & $5$ & 0.96(0.08) & 1(0) & 1(0) & 0.8(0) & 0.8(0) \\
     $500$ & $2$ & $5$ & 0.9(0.14)  & 0.96(0.08)  & 1(0)  & 0.8(0)  & 0.8(0)  \\
      $500$ & $3$ & $5$ & 0.92(0.1)  & 0.98(0.06)  & 1(0)  & 0.8(0)  & 0.8(0)  \\
      $200$ & $1$ & $5$ & 0.88(0.17) & 0.94(0.1) & 0.98(0.06) & 0.8(0)  & 0.8(0)  \\
     $200$ & $2$ & $5$ & 0.98(0.06)  & 1(0)  & 1(0)  & 0.8(0)  & 0.8(0)  \\
      $200$ & $3$ & $5$ & 0.92(0.14)  & 0.94(0.13)  & 1(0)  & 0.8(0)  & 0.8(0) \\
      $150$ & $1$ & $5$ & 0.96(0.13) & 0.94(0.13) & 1(0) & 0.8(0)  & 0.8(0)  \\
     $150$ & $2$ & $5$ & 0.84(0.16)  & 0.94(0.09)  & 1(0)  & 0.76(0.08) & 0.78(0.06)  \\
      $150$ & $3$ & $5$ & 0.82(0.15) & 0.92(0.1)  & 0.96(0.08)  & 0.76(0.08)  & 0.76(0.08)  \\
    \midrule
      $1000$ & $1$ & $10$ & 0.92(0.08)  & 0.95(0.05)  & 0.94(0.05) & 0.89(0.03)  & 0.88(0.04)  \\
      $1000$ & $2$ & $10$ & 0.91(0.09)  & 0.96(0.05) & 0.93(0.05)  & 0.9(0)  & 0.9(0)  \\
      $1000$ & $3$ & $10$ &  0.88(0.08) & 0.92(0.08) & 0.96(0.05)  & 0.88(0.04)  & 0.88(0.04)  \\
      $500$ & $1$ & $10$ & 0.88(0.09)  & 0.93(0.05)  & 0.97(0.04)  & 0.9(0)  & 0.89(0.03)  \\
      $200$ & $1$ & $10$ & 0.8(0.09)  & 0.83(0.07) & 0.94(0.07)  & 0.87(0.07)  & 0.86(0.07) \\
      $150$ & $1$ & $10$ & 0.88(0.08)  & 0.88(0.06)  & 0.97(0.05) & 0.87(0.05)  & 0.86(0.05)  \\
      \bottomrule
    \end{tabular}
        \caption{Mean of TPR and corresponding standard deviation given in parenthesis associated to the support recovery of $\boldsymbol{\beta^*}$ for five methods, for different values of $n$, $q$, sparsity levels and $p=100$. For $5\%$ sparsity the thresholds of $\texttt{ss\_cv}$ and $\texttt{ss\_min}$ are $0.8$ and the threshold of $\texttt{fast\_ss}$ is $0.4$. For $10\%$ sparsity the thresholds of $\texttt{ss\_cv}$ and $\texttt{ss\_min}$ are $0.7$ and the threshold of $\texttt{fast\_ss}$ is $0.3$.}
    \label{sim_data_TPR}
  \end{table}

    \begin{table}[t]
    \centering
    \begin{tabular}{llc|ccccc}
      \toprule
      $n$ & $q$ & sparsity (in \%) & $\mathsf{ss\_cv}$ & $\mathsf{ss\_min}$ & $\mathsf{fast\_ss}$ & $\mathsf{lasso\_cv}$  & $\mathsf{lasso\_best}$   \\
      \midrule
      $1000$ & $1$ & $5$ & 0.001(0.003) & 0.005(0.005) & 0.003(0.005)  & 0.42(0.16) & 0(0)  \\
      $1000$ & $2$ & $5$ & 0.002(0.004) & 0.01(0.01) & 0.013(0.01) & 0.76(0.09) & 0.007(0.02)  \\
      $1000$ & $3$ & $5$& 0.003(0.005) & 0.01(0.008) & 0.04(0.01) & 0.83(0.09) & 0.01(0.02) \\
      $500$ & $1$ & $5$ & 0.002(0.004) & 0.01(0.01) & 0.04(0.02) & 0.55(0.13) & 0.005(0.01) \\
     $500$ & $2$ & $5$ & 0(0) & 0.009(0.007) & 0.07(0.02) & 0.76(0.12) & 0.02(0.04) \\
      $500$ & $3$ & $5$ & 0.002(0.004)  & 0.01(0.01) & 0.12(0.01) & 0.77(0.12) & 0.02(0.03) \\
      $200$ & $1$ & $5$ & 0.002(0.004) & 0.014(0.014) & 0.19(0.04) & 0.46(0.22) & 0.05(0.06) \\
     $200$ & $2$ & $5$ & 0.003(0.005) & 0.02(0.01) & 0.24(0.08) & 0.39(0.12) & 0.06(0.04) \\
      $200$ & $3$ & $5$ & 0.008(0.009) & 0.02(0.01) & 0.23(0.06) & 0.39(0.14) & 0.05(0.04)\\
      $150$ & $1$ & $5$ & 0.001(0.003) & 0.02(0.018) & 0.22(0.05) & 0.2(0.1) & 0.05(0.06) \\
     $150$ & $2$ & $5$ & 0.006(0.01) & 0.04(0.04) & 0.28(0.08) & 0.2(0.09) & 0.08(0.05) \\
      $150$ & $3$ & $5$ & 0.003(0.005) & 0.03(0.007)  & 0.3(0.1) & 0.2(0.08) & 0.05(0.04) \\
    \midrule
      $1000$ & $1$ & $10$ & 0(0) & 0.008(0.01) & 0.008(0.01)  & 0.51(0.14) & 0.04(0.02)  \\
      $1000$ & $2$ & $10$ & 0.004(0.005) & 0.008(0.008) & 0.03(0.02) & 0.78(0.13) & 0.06(0.01)  \\
      $1000$ & $3$ & $10$ & 0.004(0.007)) & 0.012(0.01) & 0.04(0.02) & 0.82(0.05) & 0.05(0.01) \\
      $500$ & $1$ & $10$ & 0.004(0.005) & 0.019(0.016) & 0.06(0.01) & 0.67(0.19) & 0.05(0.02) \\
      $200$ & $1$ & $10$ & 0.008(0.008) & 0.09(0.08) & 0.34(0.23) & 0.5(0.2) & 0.07(0.04) \\
      $150$ & $1$ & $10$ & 0.01(0.01)) & 0.11(0.07) & 0.38(0.16) & 0.33(0.16) & 0.07(0.02) \\
      \bottomrule
    \end{tabular}
        \caption{Mean of FPR and corresponding standard deviation given in parenthesis associated to the support recovery of $\boldsymbol{\beta^*}$ for five methods, for different values of $n$, $q$, sparsity levels and $p=100$. For $5\%$ sparsity the thresholds of $\texttt{ss\_cv}$ and $\texttt{ss\_min}$ are $0.8$ and the threshold of $\texttt{fast\_ss}$ is $0.4$. For $10\%$ sparsity the thresholds of $\texttt{ss\_cv}$ and $\texttt{ss\_min}$ are $0.7$ and the threshold of $\texttt{fast\_ss}$ is $0.3$.}
    \label{sim_data_FPR}
  \end{table}

\bibliographystyle{chicago}
\bibliography{biblio}
\end{document}